\pgfplotsset{/pgf/number format/use comma,compat=newest}
\newcites{latex}{References}
\renewcommand{\baselinestretch}{1.5}
\newcommand{\bkappa}{\mbox{\boldmath $\kappa$}}
\newcommand{\etam}{\mbox{\boldmath $\eta$}}
\newcommand{\bmu}{\mbox{\boldmath $\mu$}}
\newcommand{\bxi}{\mbox{\boldmath $\xi$}}
\newcommand{\btheta}{\mbox{\boldmath $\theta$}}
\newcommand{\bbeta}{\mbox{\boldmath $\beta$}}
\newcommand{\bSigma}{\mbox{\boldmath $\Sigma$}}
\newcommand{\balpha}{\mbox{\boldmath $\alpha$}}
\newcommand{\bPsi}{\mbox{\boldmath $\Psi$}}
\newtheorem{Thm}{\underline{\bf Theorem}}
\newtheorem{Assmp}{\underline{\bf Assumption}}
\newtheorem*{Proof*}{Proof}
\newtheorem{Rem}{\underline{\bf Remark}}
\newtheorem{Lem}{\underline{\bf Lemma}}
\newtheorem{SLem}{Lemma S\ignorespaces}
\newtheorem{SDef}{Definition S\ignorespaces}
\def\eE{\mathbb{E}}
\def\A{{\cal A}}
\def\B{{\cal B}}
\def\C{{\cal C}}
\def\D{{\cal D}}
\def\F{{\cal F}}
\def\G{{\cal G}}
\def\H{{\cal H}}
\def\S{{\cal S}}
\def\V{{\cal V}}
\def\diag{\hbox{diag}}
\def\diag{\hbox{diag}}
\def\log{\hbox{log}}
\def\Ga{\hbox{Ga}}
\def\Normal{\hbox{Normal}}
\def\Unif{\hbox{Unif}}
\def\eig{\hbox{eigen}}
\def\bone{{\mathbf 1}}
\def\eE{\mathbb{E}}
\def\b1e{{\mathbf e}}
\def\bA{{\mathbf A}}
\def\pd{{\mathrm d}}
\def\bD{{\mathbf D}}
\def\bI{{\mathbf I}}
\def\bJ{{\mathbf J}}
\def\bK{{\mathbf K}}
\def\bM{{\mathbf M}}
\def\bq{{\mathbf q}}
\def\bS{{\mathbf S}}
\def\bv{{\mathbf v}}
\def\bx{{\mathbf x}}
\def\by{{\mathbf y}}
\def\bS{{\mathbf S}}
\renewcommand\footnoterule{\kern-3pt \hrule \textwidth 2in \kern 2.6pt}
\def\boxit#1{\vbox{\hrule\hbox{\vrule\kern6pt \vbox{\kern6pt \textcolor{blue}{#1}\kern6pt}\kern6pt\vrule}\hrule}}
\def\authorfootnote#1{{\let\thefootnote\relax\footnotetext{#1}}}
\begin{document}


\title{Double soft-thresholded model for multi-group scalar on vector-valued image regression}

\def\spacingset#1{\renewcommand{\baselinestretch}%
{#1}\small\normalsize} \spacingset{1}
\author{Arkaprava Roy\footnote{
Department of Biostatistics, University of Florida, Gainesville, FL email: ark007@ufl.edu}, 
Zhou Lan\footnote{Center for Outcomes Research and Evaluation, Yale University, New Haven, CT, email: zhou.lan@yale.edu}
\\ and \\
for the Alzheimer’s Disease Neuroimaging Initiative\footnote{Data used in the preparation of this article were obtained from the Alzheimer's Disease Neuroimaging Initiative
(ADNI) database (adni.loni.usc.edu). As such, the investigators within the ADNI contributed to the design
and implementation of ADNI and/or provided data but did not participate in analysis or writing of this report.
A complete listing of ADNI investigators can be found at:
{\tt http://adni.loni.usc.edu/wp-content/uploads/how\_to\_apply/ADNI\_Acknowledgement\_List.pdf}}}


\maketitle
\vskip 8mm

{{\bf Abstract:}} 
In this paper, we develop a novel spatial variable selection method for scalar on vector-valued image regression in a multi-group setting.
Here, `vector-valued image' refers to the imaging datasets that contain vector-valued information at each pixel/voxel location, such as in RGB color images,
multimodal medical images, DTI imaging, etc.
The focus of this work is to identify the spatial locations in the image having an important effect on the scalar outcome measure.
Specifically, the overall effect of each voxel is of interest.
We thus develop a novel shrinkage prior by soft-thresholding the $\ell_2$ norm of a latent multivariate Gaussian process.
It allows us to estimate sparse and piecewise-smooth spatially varying vector-valued regression coefficient function.
Motivated by the real data, we further develop a double soft-thresholding based framework when there are multiple pre-specified subgroups.
For posterior inference, an efficient MCMC algorithm is developed.
We compute the posterior contraction rate for parameter estimation and also establish consistency for variable selection of the proposed Bayesian model, assuming that the true regression coefficients are H\"older smooth.
Finally, we demonstrate the advantages of the proposed method in simulation studies and further illustrate in an ADNI dataset for modeling MMSE scores based on DTI-based vector-valued imaging markers.

\baselineskip=12pt

\date{}

\vskip 8mm
\baselineskip=12pt
\noindent{\bf Keywords}: 
Image regression,
Multi-group,
Neuroimaging,
Posterior consistency,
scalar on image regression,
Soft-thresholding,
Variable selection.


\par\medskip\noindent

\clearpage\pagebreak\newpage
\pagenumbering{arabic}
\newlength{\gnat}
\setlength{\gnat}{17pt}
\baselineskip=\gnat

\section{Introduction}
Many modern imaging applications collect vector-valued image data, where each spatial location contains more than one scalar information.
The vector-valued images are more appealing in recent biomedical applications.
For example, in biomedical imaging, vector-valued imaging markers are routinely collected by multimodal imaging methods.
As an example, modern medical scanning devices such as SPECT/CT compute single-photon emission computed tomography (SPECT) and computed tomography (CT) together at each spatial location \citep{ehrhardt2013vector}.
Apart from multimodal imaging, 
there are now several medical imaging methods that collect RGB-based color images \citep{bajaj20013d,lange2005automatic,qi2011comparative, jones2017bayesian}.
Hence, the contrast at each imaging location is represented by a vector instead of a scalar. 
For some immunohistochemistry (IHC) assay \citep{feldman2014tissue}, the RGB information can be further converted into the underlying staining (e.g, hematoxylin and eosin staining) revealing more biomedically informative information.

Our paper is particularly motivated by a medical imaging application called
the diffusion tensor imaging (DTI) \citep{soares2013hitchhiker}.
In this imaging technique, various types of vector-valued imaging markers are assessed to describe white matter tissue properties in the brain. The main purpose of DTI is to identify the white matter anatomical structure of the whole brain, derived from diffusion tensor signals.
A quantity to describe the spatial specific anatomical structure is $\bM(\bv)$, a $3\times 3$ positive definite (pd) matrix, called the diffusion tensor at location $\bv$.
The eigenvectors of these matrices represent the diffusion directions, $[\b1e_1\ \b1e_2\ \b1e_3]$ and the squared roots of the eigenvalues $[l_1\ l_2\ l_3]$ are the corresponding semi-axis lengths \citep[][Section 1.2.3]{zhou2010statistical}. 
In many recent works, the principal eigenvector $\bM(\bv)$ is used to describe the structural information, known as diffusion direction \citep{wong2016fiber}.
Several works have established the clinical significance of summary measurements derived from diffusion tensors.
In the usual route of inference using DTI data, white matter structural connectivity profiles are first derived by reconstructing the fiber tracts using tractography algorithms on the DTI-derived estimates \citep{wong2016fiber}. 
Subsequently, a low-resolution summary of the network, called connectome \citep{sporns2011human} is usually constructed using these profiles assuming some parcellation of the brain.
In clinical applications, their associations with the subject's covariates are studied.
However, such summarization may produce biases, which could lead to inefficient statistical inference.
Thus, in this work, we focus on analyzing the effects of the DTI-derived principal diffusion direction markers directly on the performance-based mini-mental state examination (MMSE) score. 
Due to a large number of predictors in an imaging dataset, it is prudent to employ regularized regression techniques for reasonable inference.
On the other hand, the spatial dependence among the image predictors further prompts us to assume that the regression effects of the neighboring voxels should be close.

Hence, the primary challenges require to be addressed are: 1)
handling imaging predictors having vector-valued information at each voxel location, 2) tackling complex spatial dependence of the data, and 3) identifying regression effects that are simultaneously sparse and continuous, where sparsity is defined in terms of the number of non-zero smooth pieces in it.
We let $\bD_i(\bv)$ be a $q$-dimensional vector-valued predictor for $i$-$th$ subject at location $\bv$, and it can be applied to any vector-valued image data.
In our DTI application, $\bD_i(\bv)$ are principal diffusion direction i.e. first eigenvector of $\bM(\bv)$.
Since the components within $\bD_i(\bv)$ can only be put together to illustrate the diffusion direction within a voxel, each sub-component of the multidimensional vector-valued predictor $\bD_i(\bv)$ can not be of direct interest.
Therefore, identifying the spatial locations $\bv$ having important overall effects on the scalar outcomes is our primary goal.

To characterize a possible relation between a scalar outcome and an imaging predictor, scalar on image regression models are widely popular \citep{Goldsmith,Li,Wang,kang2018scalar}.
Following the traditional scalar on image regression setting, the effect of a voxel $\bv$ can be quantified by $\bD_i(\bv)^T\bbeta(\bv)$, where $\bbeta(\bv)$ stands for the regression effect of the $\bv$-$th$ voxel and $<\bx,\by>$ stand for the inner product between $\bx$ and $\by$.
This can be rewritten as $\bD_i(\bv)^T\bbeta(\bv)=\|\bD_i(\bv)\|_2\|\bbeta(\bv)\|_2\cos(\theta_{\bD_i(\bv),\bbeta(\bv)}),$ where $\theta_{\bD_i(\bv),\bbeta(\bv)}$ is the geometric angle between the two vectors $\bD_i(\bv)$ and $\bbeta(\bv)$ following the well-known dot-product result that for two $q$-dimensional vectors $\bx$ and $\by$, we have $<\bx,\by>=\bx^T\by=\|\bx\|_{2}\|\by\|_{2}\cos(\theta_{\bx,\by})$, where $\|\bx\|_2=\sqrt{\sum_{i=1}^qx^2_i}$ and $\theta_{\bx,\by}$ is the angle between $\bx$ and $\by$ \citep{lipschutz2009schaum}.
Here $\|\bbeta(\bv)\|_2\cos(\theta_{\bD_i(\bv),\bbeta(\bv)})$ can be interpreted as the rate of change of $\bD_i(\bv)^T\bbeta(\bv)$ for a unit change in $\|\bD_i(\bv)\|_2$.

Thus, $\|\bbeta(\bv)\|_2$ can be considered as the universal magnitude that the voxel $\bv$ contributes to the unit change in the scalar outcome. In contrast to this term, 
the other part $\cos(\theta_{\bD_i(\bv),\bbeta(\bv)})\in [-1,1]$ describes the standardized contribution which varies from one subject to the other depending on their own $\bD_i(\bv)$, referred to as individual contribution. For example, in medical imaging applications, $\|\bbeta(\bv)\|_2$ may be large at a certain voxel, but $\bD_i(\bv)^T\bbeta(\bv)$ can vary across the subjects. In Figure \ref{fig:illu}, we provide a graphical representation to illustrate the relationship between individual contribution and universal magnitude: for the same individual contribution ($\cos(\theta_{\bD_i(\bv),\bbeta(\bv)})$), the universal magnitude dominates the contribution to the scalar outcome.

\begin{figure}
  \centering
\includegraphics[width=120mm]{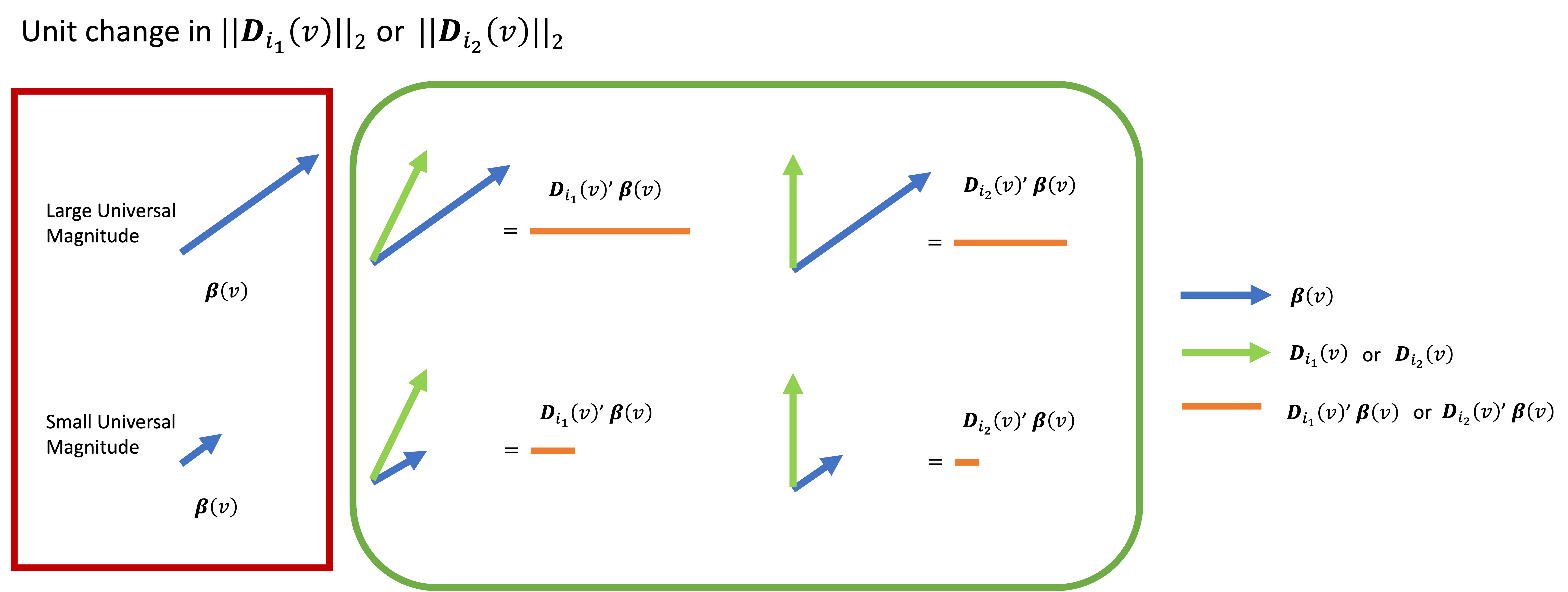}
    \caption{Graphical illustration of universal magnitude: For a given voxel $\bv$, with same individual contributions ($\bD_{i_1}(\bv)$ or $\bD_{i_2}(\bv)$), the coefficient ($\bbeta(\bv)$) with large universal magnitude ($\|\bbeta(\bv)\|_2$) produces stronger contribution (upper panel), compared to the one with small universal magnitude (lower panel).}
    \label{fig:illu}
\end{figure}


Following the above interpretations, the norm $\|\bbeta(\bv)\|_2$ here is the overall effect that tells us whether the voxel $\bv$ is an important predictor for the outcome in general or not, and thus, we put a shrinkage prior on it for selecting the important voxels.
If it is 0, then none of the components in $\bD_i(\bv)$ have any predictive importance, and thus we should not consider that voxel (region) in our predictive model.
Furthermore, variable selection for each of the individual coefficient functions may not be possible due to the dependence among the components in $\bD_i(\bv)$.
However, it is evident that the set of predictors at each voxel naturally forms a group.
Hence, a reasonable model for our regression problem can be formulated using group-LASSO \citep{yuan2006model}.
However, the spatial dependence among the predictors prevent the usage of traditional group-LASSO.
Hence, we propose a novel spatial variable selection prior for $\bbeta(\cdot)$ through a soft-thresholding transformation on the $\ell_2$ norm of a latent $q$-dimensional spatially varying function $\widetilde{\bbeta}(\cdot)$
and the latent spatially varying function is assumed to follow a Gaussian process with a scaled squared exponential kernel \citep{van2009adaptive}.

{In general, sparse estimation is an important statistical problem, and there are considerable numbers of competing methods. 
In a frequentist setting, different types of regularization or penalty functions are used for sparse estimation \citep{lasso,fused}. 
On the other hand, in a Bayesian framework, several types of sparsity-inducing priors have been introduced such as the traditional spike and slab prior \citep{spikeslab}, the horseshoe prior \citep{Carvalho}, normal-gamma prior \citep{Griffin}, double-Pareto prior \citep{Armagan} or Dirichlet-Laplace prior \citep{Bhattacharya}. However, these priors are not suitable for an imaging predictor when the regression effect is expected to be spatially smooth.


\cite{Goldsmith} and \cite{Li} proposed priors that account for spatial dependence and sparsity for scalar on image regression models separately. To a related problem, \cite{Wang} proposed a penalty based on the total variation. 
Spatial dependence, however, is still not fully incorporated in their approach. 
Recently, \cite{roy2021spatial} developed a prior that is suitable to estimate piece-wise sparse signals.
In \cite{kang2018scalar}, a soft-thresholded Gaussian process prior was proposed for a related problem. 
Soft-thresholding induced sparse estimation is widely popular in various application areas \citep{fan2001variable,han2006noise, rothman2009generalized,kusupati2020soft}.
Thresholding-based Bayesian modeling approaches are gaining increasing popularity \citep{ni2019bayesian, cai2020bayesian, wu2022bayesian,chen2023bayesian}.
Our modeling approach also relies on a novel soft-thresholding transformation.
Furthermore, we extend the model for a multi-group setting to efficiently characterize the group structure of the data.
For handling large imaging data, we consider low-rank approximations \cite{higdon,nychka2015multiresolution} and further develop a Metropolis-Hastings-based Markov chain Monte Carlo (MCMC) sampling algorithm.
We, too, consider using low-rank approximations for faster computation and in addition, implement a gradient-based Hamiltonian Monte Carlo (HMC) sampling algorithm \citep{neal2011mcmc} for efficient posterior computation.


Furthermore, we establish posterior contraction rates, which provide us with new insights for soft-thresholded estimates unlike \cite{kang2018scalar} which only proves consistency. 
Besides, there are several other aspects different from their paper, starting from the assumptions on the predictor process.
Specifically, our assumption is to control the observed correlation among the spatial points, that are uncorrelated under the true correlation. 
In the derivation of theoretical support, we also do not consider the thresholding parameter to be fixed, unlike \cite{kang2018scalar} where it was kept fixed in the theoretical part.
For image regression, the predictor process plays an important role.
Our assumptions on the predictor process are motivated by the developments in the sparse covariance matrix estimation theory \citep{bickel2008covariance, bickel2008regularized}.
Under those assumptions, we are able to establish strong consistency results with respect to the following distance $d(\bbeta,\bbeta_0)=\int_{\bv}\|\bbeta(\bv)-\bbeta_0(\bv)\|_2\pd\bv$. 
We also establish the spatial variable selection consistency results. 
For adaptive estimation, we consider the scaled square exponential kernel-based Gaussian process prior for our latent coefficient function $\widetilde{\bbeta}(\cdot)$ and borrow a few results from \cite{van2009adaptive}.

In summary, The major contributions include a) providing a soft-thresholded $\ell_2$-norm prior allowing variable selection for vector-value imaging predictors; b) establishing posterior contraction rates; c) developing a double-thresholded framework for multi-group inference without specification of the reference group. The rest of the article is organized as follows. In the next section, we present our proposed modeling framework and describe some of its properties. In Section~\ref{sec::theo}, we study the large sample properties of our proposed method. Detailed descriptions of the priors and the posterior computation steps are in Section~\ref{sec::comp}.  In Section~\ref{sec::simu}, we compare the performance of our proposed method to other competing methods. 
We present our ADNI data application in Section~\ref{sec::real}. Finally, in Section~\ref{sec::discussion}, we discuss possible extensions and other future directions.

\section{Modeling framework}
\label{sec::modeling}
In many real data applications for scientific and clinical investigations (e.g., ADNI), the data are collected from several heterogeneous groups.
We thus introduce our scalar on vector-valued image regression model for a multi-group dataset, described as follows.
Let, $y_i$ be the scalar response for $i$-$th$ subject and $\bD_i(\bv_j)$ be the $q$-dimensional spatially distributed imaging predictor which is collected at the $d$-dimensional spatial location $\bv_j$ where $j=1,\ldots,p$. Our goal is to identify important predictors $\bD_i(\bv_j)$ driving the scalar response $y_i$.
We assume that the set of spatial locations $\{\bv_j\}_{j=1}^p$ are from a compact closed region $\B\in\mathbb{R}^d$ to allow us to define a spatial process with indices of $\{\bv_j\}_{j=1}^p$.
Let us further assume that there are $G$ many groups. Let us further define the sets $\{\G_1,\ldots,\G_G\}$, where $\G_g$ stands for the set of subject indices that belong to the $g$-$th$ group.
To this end, we introduce our proposed scalar on vector-valued image regression model as,
\begin{align}
    y_{i}=&b_{0,g}+p^{-1/2}\sum_{j=1}^p\bD_{i}(\bv_j)^T\bbeta_g(\bv_j) + e_{i}, \textrm{ for } i\in g\label{eq: multimodel}\\
    e_{i}\sim&\Normal(0, \sigma^2)\nonumber
\end{align}
where $b_{0,g}$'s stand for group-specific intercepts, and $\bbeta_g(\bv_j)$'s are the group-specific spatially varying vector-valued regression coefficient to characterize effects of $\bD_i(\bv_j)$ on $y_i$ when $i$ belongs to $g$-$th$ group.
Here, $\bD_i(\bv_j)^T\bbeta_g(\bv_j)$ stands for the effect $\bv_j$-$th$ voxel for $g$-$th$ group.
The normalizing constant $p^{-1/2}$ is to scale down the total effects of massive imaging predictors following \cite{kang2018scalar}. 
We maintain group-specific separate intercepts $b_{0,g}$'s in the model.
This specification allows flexible interpretation of the coefficients without setting the reference group.
Returning to the decomposition, $\bD_i(\bv)^T\bbeta_g(\bv)=\|\bD_i(\bv)\|_2\|\bbeta_g(\bv)\|_2\cos(\theta_{\bD_i(\bv),\bbeta_g(\bv)}),$ we can see that the part $\|\bbeta_g(\bv)\|_2\cos(\theta_{\bD_i(\bv),\bbeta_g(\bv)})$ is the effect of $\bv$-$th$ spatial location for $i$-$th$ subject, and it varies with the data vector's direction.
Specifically, the part $\cos(\theta_{\bD_i(\bv),\bbeta_g(\bv)})$ controls the variation across the subjects, as well as the sign. 
For some individuals, the overall contribution may be positive and for some, it may be negative.
This may also be of clinical importance to know, for which individuals the effect of $\bv$-$th$ location is positive and for whom it is negative based on the estimated effect $\hat{\bbeta}_g(\bv)$. We discuss these directions in the following subsections.
The error variance $\sigma^2$ is kept the same for all the groups, which may be relaxed to set group-specific variances.
However, in this paper, we do not consider that.

Different from \textit{traditional} scalar on image regression where the predictor at each location is a scalar, our imaging predictor has vector-valued entries at each spatial location.
In the next subsection, we first introduce our proposed soft-thresholding transformation-based characterization to model sparse vector-valued spatially varying regression coefficients.
Building on that, we subsequently propose our prior model for $\bbeta_g(\cdot)$ in the multi-group case, suitable for estimating sparse and piece-wise smooth coefficients.

\subsection{Soft-Thresholded $\ell_2$ Norm transformation}
The regression coefficients $\bbeta_g$'s are vector-valued, and they are assumed to be piecewise smooth and sparse to accommodate the desired properties in high dimensional scalar on image regression.
We propose a soft-thresholding transformation based prior model for $\bbeta_g$. 

For simplicity, we first set $G=1$ and propose our soft-thresholding prior in a single-group case, with $\bbeta(\cdot)$ being the vector-valued regression coefficient.
We first define our soft-thresholding function for a $q$-dimensional vector $\bx$ as $h_{\lambda}(\bx)=\left(1-\frac{\lambda}{\|\bx\|_2}\right)_{+}\bx$, where if $x>0$ then $x_{+}=x$, otherwise $x_{+}=0$}. 
Since it puts the thresholding on the $\ell_2$ norm, we call this transformation Soft-Thresholded $\ell_2$ Norm (ST2N) transformation.
Figure~\ref{fig::ST2N} depicts how ST2N thresholds the original vector in a 2-dimensional case.
Finally, we model $\bbeta(\bv)$ as the ST2N transformation of a multivariate Gaussian process (GP) $\widetilde{\bbeta}(\bv)$ such that $\bbeta(\bv)=h_{\lambda}(\widetilde{\bbeta}(\bv))$.
Hence, our final prior for $\bbeta(\bv)$ is termed as ST2N-GP.

Here $Q_{\lambda}(\bv)=\Pi(\|\widetilde{\bbeta}(\bv)\|<\lambda)$ is the prior probability associated with the event $\bbeta(\bv)=0$. 
We thus have $Q_{\lambda_1}(\bv)<Q_{\lambda_2}(\bv)$ for $\lambda_1<\lambda_2$. 
Thus, greater thresholding induces greater prior mass at zero. 
Our motivating dataset further has a multi-group structure.
Hence, in the following subsection, we propose a novel double soft-thresholding transformation to accommodate multi-group data.
In the Lemma \ref{lem::lipch}, we show continuity of the proposed soft-thresholding transformation that ensures continuity of the transformed coefficient $\bbeta(\cdot)$ when the underlying latent function $\widetilde{\bbeta}(\cdot)$ is smooth.
Continuity is a desired property for $\bbeta(\cdot)$ as discussed before.

\begin{figure}
    \centering
    \includegraphics[width=100mm]{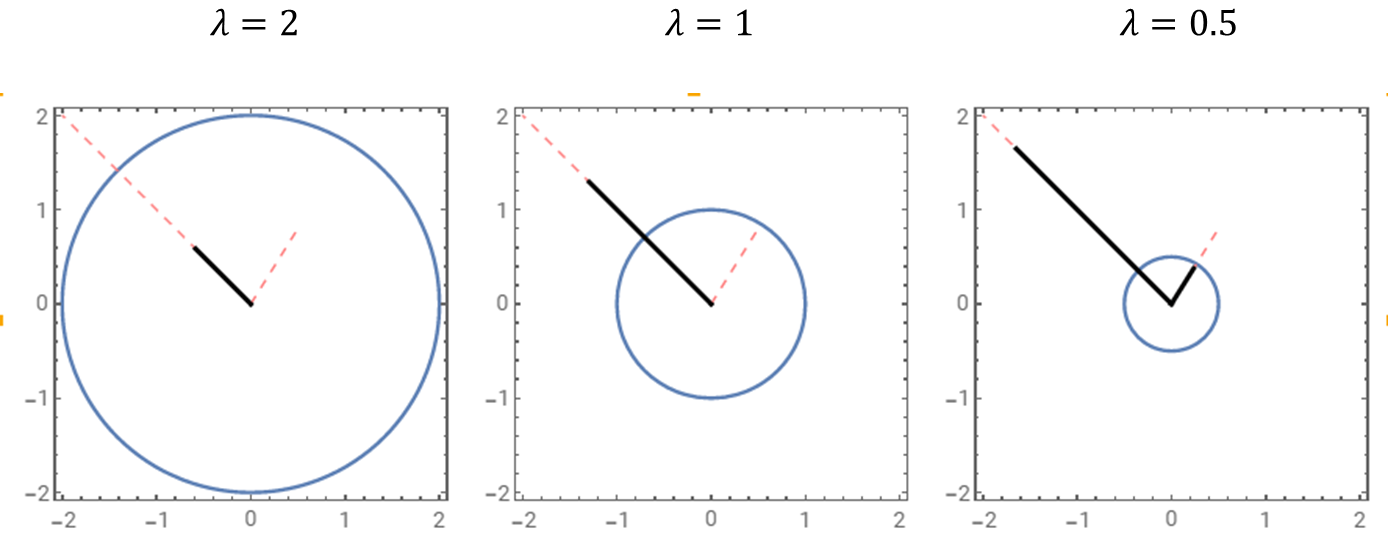}
    \caption{ST2N in action: The effect of the thresholding parameter $\lambda$ is illustrated for its three possible values ($\lambda=2,1,0.5$), and the vectors are in 2-D ($q=2$). The radius of the blue circle in each panel represents the value of $\lambda$. The two dashed red lines in each panel are typical examples of the original vectors $\bx$ and the solid black lines stand for their ST2N transformation, $h_{\lambda}(\bx)$. If the norm of original vector larger than $\lambda$ (exceed the circle), it keeps its original direction with reduced norm; otherwise, it is then zero vector.
    }
    \label{fig::ST2N}
\end{figure}
\begin{Lem}[Lipschitz Continuity]
The ST2N transformation is Lipschitz continuous.
\label{lem::lipch}
\end{Lem}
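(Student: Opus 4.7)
The plan is to prove the stronger statement that $h_\lambda:\mathbb{R}^q\to\mathbb{R}^q$ is non-expansive, i.e.\ $\|h_\lambda(\bx)-h_\lambda(\by)\|_2\le \|\bx-\by\|_2$ for all $\bx,\by\in\mathbb{R}^q$, which immediately yields Lipschitz continuity with constant $1$. The natural route is a case analysis based on whether each of $\|\bx\|_2$ and $\|\by\|_2$ sits below or above the threshold $\lambda$, since the map is defined piecewise through the positive part.

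The two boundary cases dispose of themselves quickly. When $\|\bx\|_2\le\lambda$ and $\|\by\|_2\le\lambda$, both images vanish and the inequality is trivial. When $\|\by\|_2\le\lambda<\|\bx\|_2$ (the symmetric case is identical), one has $h_\lambda(\by)=\bzero$ and $\|h_\lambda(\bx)\|_2=\|\bx\|_2-\lambda$, so the reverse triangle inequality delivers $\|h_\lambda(\bx)-h_\lambda(\by)\|_2=\|\bx\|_2-\lambda\le\|\bx\|_2-\|\by\|_2\le\|\bx-\by\|_2$.

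The substantive case is $\|\bx\|_2,\|\by\|_2>\lambda$. Set $a=\|\bx\|_2$ and $b=\|\by\|_2$, write
$$h_\lambda(\bx)-h_\lambda(\by)=(\bx-\by)-\lambda\left(\tfrac{\bx}{a}-\tfrac{\by}{b}\right),$$
and expand the squared norm. Using the two identities
$$\left\langle\bx-\by,\tfrac{\bx}{a}-\tfrac{\by}{b}\right\rangle=(a+b)\left(1-\left\langle\tfrac{\bx}{a},\tfrac{\by}{b}\right\rangle\right),\qquad \left\|\tfrac{\bx}{a}-\tfrac{\by}{b}\right\|_2^2=2\left(1-\left\langle\tfrac{\bx}{a},\tfrac{\by}{b}\right\rangle\right),$$
which follow from $\|\bx\|_2^2=a^2,\|\by\|_2^2=b^2$ and $\langle\bx,\by\rangle=ab\langle\bx/a,\by/b\rangle$, one obtains after collecting terms
$$\|h_\lambda(\bx)-h_\lambda(\by)\|_2^2=\|\bx-\by\|_2^2-2\lambda(a+b-\lambda)\left(1-\left\langle\tfrac{\bx}{a},\tfrac{\by}{b}\right\rangle\right).$$
Since $a,b>\lambda>0$ gives $a+b-\lambda>\lambda>0$, and since the Cauchy--Schwarz bound forces $\langle\bx/a,\by/b\rangle\le 1$, the subtracted quantity is non-negative, yielding the required inequality.

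I do not anticipate any real obstacle; the only item requiring care is bookkeeping in the main case to ensure that the cross-term and the $\lambda^2$-term combine with the correct sign so that the remainder \emph{subtracts} from $\|\bx-\by\|_2^2$. An alternative one-line argument is to observe that $h_\lambda$ is the proximal operator of the convex function $\lambda\|\cdot\|_2$ and is therefore firmly non-expansive, hence 1-Lipschitz; I prefer the elementary case analysis above because it is self-contained and avoids invoking convex-analytic machinery.
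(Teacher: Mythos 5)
Your proposal is correct and follows essentially the same route as the paper: a three-way case split on whether $\|\bx\|_2$ and $\|\by\|_2$ exceed $\lambda$, with the nontrivial case handled by expanding the squared norm (your exact identity $\|h_\lambda(\bx)-h_\lambda(\by)\|_2^2=\|\bx-\by\|_2^2-2\lambda(a+b-\lambda)\bigl(1-\langle\bx/a,\by/b\rangle\bigr)$ is a cleaner packaging of the paper's inequality chain, and your reverse-triangle-inequality step replaces the paper's equivalent Cauchy--Schwarz argument in the mixed case). The only item the paper adds beyond your write-up is the observation that Lipschitz continuity also holds under the $\ell_\infty$ norm via equivalence of norms on $\mathbb{R}^q$, which is not required by the lemma statement itself.
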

The proof is in the Supplementary Materials. We show Lipschitz continuity, both under $\ell_2$ and $\ell_{\infty}$ norms.
By construction, ST2N transformation only shrinks the magnitude but does not alter the direction (positive or negative). 
We thus have $\|h_{\lambda}(\bx)\|_2=(\|\bx\|_2-\lambda)_{+}$, but $\frac{h_{\lambda}(\bx)}{\|h_{\lambda}(\bx)\|_2}=\frac{\bx}{\|\bx\|_2}$ whenever $\|\bx\|_2>\lambda$.

\subsection{Double soft-thresholding for multi-group}
\label{sec: DSoftthres}

Relying on our proposed ST2N transformation, we now propose our prior model for $\bbeta_g(\cdot)$.
Our inferential objective here is to identify the spatial locations having an important effect on the scalar response.
For different groups, the set of important spatial locations can be different.
While modeling multi-group data, one of the groups is set as a reference in typical clinical studies.
However, the choice of a reference group may not be easy in certain cases.
For example, if the groups represent different stages of a less-studied disease, it is then ambiguous to select which one should be the reference group.
Our real data application also represents such a scenario, where we consider three different stages of cognitive impairment as different clinical groups.
In the below paragraph, we introduce the details of the proposed multi-group model which is motivated to address these above-mentioned issues. 

A straightforward model for $\bbeta_g(\cdot)$ using ST2N characterization is by letting, $\bbeta_g(\cdot)=h_{\lambda}(\widetilde{\bbeta}_g(\cdot))$ where we maintain separate group specific latent processes.
Here, maintaining the same thresholding parameter does not dispose of any flexibility, as the thresholding parameter, $\lambda$ and the latent smooth coefficient function $\widetilde{\bbeta}_g(\bv)$ are not separately identifiable, as discussed in Lemma 2 of \cite{kang2018scalar}.
Furthermore, a subset of those locations may turn out to be equally important for all the groups.
Hence, we decompose the latent process $\widetilde{\bbeta}_g(\cdot)$ into \textit{shared} and \textit{group-specific} parts as $\widetilde{\bbeta}_g(\cdot)=\widetilde{\bbeta}(\cdot) +\balpha_g(\cdot)$,  $\widetilde{\bbeta}(\cdot)$ is the shared part and $\balpha_g(\cdot)$ is the group-specific part which is again assumed to be sparse. 
The above decomposition into shared and group-specific components is not identifiable. 
However, the shared part facilitates centering the latent processes $\widetilde{\bbeta}_g(\bv)$'s and, thus, improves computational performance. Then the sparse, piece-wise smooth spatially varying coefficient $\balpha_g(\cdot)$ captures the deviation of the group $g$ from the centered effect.
We thus further apply ST2N characterization to $\balpha_g(\cdot)$'s. 

Although the latent process and the thresholding parameter are not separately identifiable as discussed above, different thresholding parameters induce different levels of shrinkage.
It brings more flexibility to the model fitting.
Hence, we set group-specific separate thresholding parameters to allow different levels of sparsity in $\balpha_g(\bv)$'s.
To summarize, we propose the following model for our multi-group setting,
\begin{align}
    \bbeta_g(\bv) =&h_{\lambda}(\widetilde{\bbeta}_g(\bv)) = h_{\lambda}(\widetilde{\bbeta}(\bv) +\balpha_g(\bv))\nonumber\\
     \balpha_g(\bv) =& h_{\lambda_g}(\widetilde{\balpha}_g(\bv)),\label{eq::multiheirar}
\end{align}
where $\widetilde{\balpha}_g(\cdot)$ stands for the latent process which is thresholded using $\lambda_g$ to obtain $\balpha_g(\cdot)$.
For notational clarity, we represent the latent non-regularized processes with a tilde.
It is evident that the shared and group-specific components of $\widetilde{\bbeta}_g(\cdot)$ may not directly be associated with similarities and differences in $\bbeta_g(\cdot)$'s. 
We thus explore the connections between them.
In addition, we define appropriate characterization of the \textit{similar} effect in the context of scalar on vector-valued image regression in Definition S1 of the Supplementary Materials.
For vector-valued predictors, the regression coefficient $\bbeta_g(\bv)$ can be interpreted both in terms of its magnitude $\|\bbeta_g(\bv)\|_2$ and direction $\frac{\bbeta_g(\bv)}{\|\bbeta_g(\bv)\|_2}$.
For two groups $g\neq g'$, the scalar projection of $\bbeta_g(\bv)$ onto $\bbeta_g'(\bv)$ is given by $\bbeta_g(\bv)^T\bbeta_g'(\bv)/\|\bbeta_g'(\bv)\|_2$.
If $\bbeta_g(\bv)^T\bbeta_g'(\bv)>0$, the direction of this projection is along $\bbeta_g'(\bv)$. 
However, it becomes opposite to $\bbeta_g'(\bv)$, when $\bbeta_g(\bv)^T\bbeta_g'(\bv)<0$.
Following this argument, we discuss some explorations and results in Section S1 of the Supplementary Materials. 

\section{Theoretical support}
\label{sec::theo}

We establish posterior consistency results in the asymptotic regime of increasing sample size $n$ and the increasing number of spatial locations $p$.
We keep the dimension of imaging predictor $\bD(\bv)$ as $q$.
Let $\sigma_0$, $b_{00}$ and $\bbeta_0(\bv)$ be the null values of $\sigma$, $b_{0}$ and $\bbeta(\bv)$, respectively.
We assume that $b_{00}$ is known to maintain simplicity in our calculations.
We assume that $\sigma$ is bounded between $(b1,b_2)$. This assumption can be disposed of by directly working with the negative log-affinity divergence \cite{ning2020bayesian}, but for simplicity in the arguments, we present the proof assuming the additional boundedness as in \cite{roy2020high}.
Following the suggestion of a reviewer, it is possible to obtain such bounds on variance in a data-driven way. 
In our model, we may set the marginal variance of the response as the upper bound, and the lower bound can be some user-specified small number or even zero. 
Now, let us define the following distance metric between $\bbeta$ and $\bbeta_0$ as $d(\bbeta,\bbeta_0)=\int_{\bv}\|\bbeta(\bv)-\bbeta_0(\bv)\|_2\pd\bv$. 
This is an appropriate distance metric to study posterior consistency for our inference problem. 
In the next subsection, we discuss our assumptions on the design locations and predictor process in order to establish posterior convergence results with respect to $d(\bbeta,\bbeta_0)$.
Our results are first presented for a single group setting with $G=1$.
Subsequently, we discuss the possible implications that will hold for the multi-group.

\noindent{\it Notations:} The notations ``$\lesssim$'' and ``$\gtrsim$" stand for inequality up to constant multiple. Let, for a spatially varying function $a(\cdot)$, we define $\|a\|_2=\sum_{\bv\in\V} a^2(\bv)$ and $\|a\|_{\infty}=\sup_{\bv}|a(\bv)|$. The sign $\otimes$ stands for Kronecker's product. 








\subsection{Design locations}
To show posterior consistency with respect to the functional $\ell_2$ distance, we need to approximate an empirical $\ell_2$ distance to an integral.
Thus, we put the following assumption for quantifying the error.
\begin{Assmp}[Design points]
Here, the spatial locations are $d$-dimensional, and we make the following assumptions.
\begin{itemize}
    \item[(1)] Let $\B$ be a hyper cuboid as $\B=[0,1]^d$, where $d$ is the dimension of $\bv$, then we can obtain a disjoint partition $\B=\cup_{j=1}^{p_n}\V_j$ such that $\bv_j$ is the midpoint or the center of $\V_j$ and volume of $\V_j=1/p_n$ for all $j$.
    \item[(2)] $\sup_{\bv,\bv'\in\V_j} |v_{k}-v'_{k}|^d\asymp 1/p_n$ for all $k=1,\ldots,d$ and $j=1,\ldots,p_n$ where $\bv=\{v_{i}\}_{i=1}^d$ and $\bv'=\{v'_{i}\}_{i=1}^d$.
    \item[(3)] $p_n\asymp n^d$.
\end{itemize}
\label{assmp::designpoint}
\end{Assmp}
The condition on $p_n$ is similar to \cite{kang2018scalar}.
The upper bound on $p_n$ is utilized in the next subsection while discussing the appropriate assumption on the predictor process.
 It may be relaxed to $Q_3 e^{q_n}\geq p_n \geq Q_4 n^d$ for some constant $Q_3,Q_4>0$ such that $q_n=o(n)$.
 However, with this upper bound, the validity of Assumption~\ref{assmp::predictor} might be problematic.
 To avoid such issues, we consider the simpler condition as described in (3) of Assumption~\ref{assmp::designpoint}.
The lower bound is required while establishing the contraction rate with respect to $d(\bbeta,\bbeta_0)$ through the following result.

\begin{Lem}
Under Assumption~\ref{assmp::designpoint}, for a function $f$ over $\B$ with uniformly bounded first derivative i.e. $\|f'\|_{\infty}<F$, we have $|\int_{\bv\in\B} f(\bv)\pd\bv- \frac{1}{p_n}\sum_{j=1}^{p_n}f(\bv_j)|\leq C\frac{F}{p_n^{1/d}}$  for some constant $C>0$.
\end{Lem}
\begin{proof}
As volume of $\V_j$ is $\int_{\bv\in\V_j}\pd\bv=1/p_n$, we have $\int_{\bv\in\B} f(\bv)\pd\bv- \frac{1}{p_n}\sum_{j=1}^{p_n}f(\bv_j)=\sum_{j=1}^{p_n}\int_{\bv\in\V_j} f(\bv)\pd\bv- \sum_{j=1}^{p_n}f(\bv_j)\int_{\bv\in\V_j}\pd\bv$. Now $|f(\bv)-f(\bv_j)|\leq \sum_{k=1}^d |f'(cv_k+(1-c)v_{j,k})||v_k-v_{j,k}| \leq C_1d \frac{F}{p_n^{1/d}}$ due to multivariable mean value theorem for some $c\in[0,1]$ and by assumption $|v_k-v_{j,k}|\leq C_1 p_n^{-1/d}$ for some constant $C_1$.

Thus, $|\sum_{j=1}^{p_n}\int_{\bv\in\V_j} f(\bv)\pd\bv- \sum_{j=1}^{p_n}f(\bv_j)\int_{\bv\in\V_j}d\bv| \leq \sum_{j=1}^{p_n}\int_{\bv\in\V_j}|f(_j\bv)-f(\bv_j)|\pd\bv\leq \sum_{j=1}^{p_n}d \frac{F}{p_n^{1/d}}\int_{\bv\in\V_j} \pd\bv =C_1d \frac{F}{p_n^{1/d}}$. Hence, the result holds for $C=C_1d$.
\end{proof}

Now we start to show the required assumptions for the predictors. To study consistency properties for scalar on image regressions, it is common to consider the predictors to be realizations from a spatially dependent process \citep{Wang, kang2018scalar}. The DTI data is indeed spatially dependent, at least in the large white matter regions of the brain \citep{goodlett2009group, zhu2011fadtts, lan2022geostatistical}. 
 Let $\D_{i,k}=\{\bD_{i,k}(\bv_j)\}_{j=1}^p$ be the vector of data for $i$-$th$ individual for $k$-$th$ direction where $k=1,\ldots,q$. Define $n\times (qp)$ dimensional design matrix such that its $i$-$th$ row will be $\bD_i'^T=\{\D_{i,1},\ldots,\D_{i,q}\}$. Let expectation $\eE(\bD_i')=0$ and variance $V(\bD_i')=\bK$ with $0<c_{\min}\leq\eig(\bK)\leq c_{\max}<\infty$ and $K_{\ell,\ell}<\infty$. This variance assumption can be satisfied easily for $\D_{i,k}$'s being stationary ergodic processes
with spectral densities bounded between $c_{\min}$ and $c_{\max}$ or a ``spike-model" as argued in \cite{bickel2008regularized, bickel2008covariance}. Let $J_n=\frac{1}{n}\sum_i\bD_i'\bD_i'^T$ and thresholding operator $T_s(\bA) = (\!(a_{i,j}\bone\{|a_{ij}| \geq s\})\!)$. 

 \begin{Assmp}[Predictor process]
 There exists $N$ such that for $n>N$ we have $\big\|J_n-T_{t_n}\left(J_n\right)\big\|_{op}\leq c_{\min}/4$ for some $t_n\rightarrow 0$ where $0<c_{\min}\leq\eig(\bK)\leq c_{\max}<\infty$.
 \label{assmp::predictor}
\end{Assmp}
In Section S2 of the Supplementary Materials, we discuss the motivation behind this assumption and provide more details on $t_n$. Thus, we finally have, $\big\|J_n-\bK\big\|_{op}<\frac{c_{\min}}{2}\textrm{ for } n>N$. We know that $\|\bA\|_{op}\leq\max_i\sum_{j=1}^{p_n}|a_{i,j}|$. 
Thus, a sufficient condition for the above assumption to hold is that the number of non-zero entries with absolute value less than $t_n$ at each row of $J_n$ is upper bounded by $\frac{c_{\min}}{4t_n}(\rightarrow \infty$ as $t_n\rightarrow 0)$. 
However, the assumption would still hold as long as the total absolute contributions of the cross-correlation values of the thresholded coordinates at each row of $J_n$ are bounded by $c_{\min}/2$. Implicitly, the above assumption requires spatially uncorrelated voxels to have significantly small sample correlation values. 
This allowed us to ensure that the eigenvalues of the design matrix are bounded away from zero and infinity, thereby helping us to establish the posterior contraction rate both for empirical-$\ell_2$ and the usual $\ell_2$ which is the integrated square distance in function space. The justification of this additional assumption is motivated by the covariance matrix estimation results from \cite{bickel2008covariance}. However, the method is robust to this assumption as long as the boundedness assumption on the eigenvalues of $J_n$ holds.

\subsection{Large support and posterior consistency}
The large support result of \cite{kang2018scalar} relies on constructing a $\widetilde{\bbeta_0}(\bv)$ such that $\bbeta_0(\bv)=h_{\lambda}(\widetilde{\bbeta_0}(\bv))$.
However, such construction can be difficult in our setting due to the complexity of our thresholding function. 
We thus take a different approach.
Furthermore, \cite{kang2018scalar} set $\lambda$ fixed for the theoretical study.
We, however, let $\lambda=\lambda_n$ to vary with $n$ as well as the proportion of non-zero locations having in $\|\bbeta_0\|$.
We now list all the required conditions to facilitate the theoretical
results.

\begin{Assmp}[Coefficient function] For each $j$, the coefficient functions $\beta_{0,j}\in\C^{\alpha}[0,1]^d$, which stands for the space of H\"older smooth functions with regularity $\alpha$. 
Note that there is no discontinuous jump while moving from the zero region to the non-zero.
Further $\|\bbeta_0\|_{\infty}\leq M<\infty$ and let us define the set $R_0=\{\bv: \|\bbeta_0(\bv)\|_2=0\}$.
\label{assmp::smoothness}
\end{Assmp}

\begin{Assmp}[Prior on coefficient] We set $\widetilde{\bbeta}(\cdot)\sim$GP$(0,\kappa_a\otimes\bSigma)$ where GP$(0,\kappa_a\otimes\bSigma)$ is a multivariate Gaussian process with marginal covariance kernel for each component being $\kappa_a$ and cross-component covariance matrix $\bSigma$. The covariance kernel $\kappa_a(\bv,\bv')=\exp(-a^2\|\bv-\bv'\|_2^2)$ and $a^d$ follows gamma distribution with density $g(t)\propto t^{s/d}\exp(-D_2t^d)$. 
\label{assmp::kernel}
\end{Assmp}

Due to the above assumption, the covariance of the latent Gaussian process is assumed to be separable, and marginal covariance $\widetilde{\bbeta}(\bv)$ at each voxel is $\bSigma$. The space of H\"older smooth functions considered above consists of those functions on $[0,1]^d$ having continuous mixed partial derivatives up to order $\underline{\alpha}$ and such that the $\underline{\alpha}$-$th$ partial derivatives are H\"older continuous with exponent $\alpha-\underline{\alpha}$.
The gamma density on $a^d$ is a simplified version of the prior class considered in \cite{van2009adaptive} and similar to \cite{yang2016bayesian}. For simplicity, we set $\bSigma=\bI_q$ in our calculations.

\begin{Thm}[Large sup-norm support]
For any H\"older smooth function $\bbeta_0(\cdot)$ and a positive constant $\epsilon>0$, the ST2N induced
prior $\bbeta\sim$ST2N-GP$(\lambda,\bSigma,\kappa)$ with $\lambda\sim\Unif(0,R)$ satisfies $\Pi(\big\|\|\bbeta-\bbeta_0\|_2\big\|_{\infty}<\epsilon)>0$ and $\Pi(\big\|\|\bbeta\|-\|\bbeta_0\|_2\big\|_{\infty}<\epsilon)>0$. Furthermore, $\Pi(\|\bbeta_0-\bbeta\|_{\infty}<\epsilon)>0$ 
\label{thm::largesup}
\end{Thm}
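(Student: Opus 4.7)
The plan is to reduce all three claims to a single sup-norm small-ball event for the latent Gaussian process $\widetilde{\bbeta}$. Since $|\|\bbeta(\bv)\|_2 - \|\bbeta_0(\bv)\|_2| \leq \|\bbeta(\bv) - \bbeta_0(\bv)\|_2$ pointwise, and any sensible componentwise sup-norm of $\bbeta-\bbeta_0$ is bounded by the pointwise $\ell_2$ norm, the second and third assertions follow immediately from the first. It therefore suffices to establish $\Pi(\sup_{\bv}\|\bbeta(\bv)-\bbeta_0(\bv)\|_2<\epsilon)>0$.

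For the first claim, rather than attempting to construct a $\widetilde{\bbeta}_0$ with $h_\lambda(\widetilde{\bbeta}_0)=\bbeta_0$ exactly---a construction that would be discontinuous across the boundary of the zero set of $\bbeta_0$ and therefore lie outside the topological support of the smooth-kernel GP prior---I would take $\widetilde{\bbeta}_0=\bbeta_0$ directly and absorb the thresholding mismatch into the prior mass on small $\lambda$. Because ST2N only shrinks magnitudes, the identity $\|h_\lambda(\bx)-\bx\|_2=\min(\lambda,\|\bx\|_2)\leq\lambda$ combined with the Lipschitz bound of Lemma~\ref{lem::lipch} with constant $L$ (and indeed $L=1$ since $h_\lambda$ is the proximal mapping of the convex function $\lambda\|\cdot\|_2$) yields, uniformly in $\bv$,
$$\|\bbeta(\bv)-\bbeta_0(\bv)\|_2 \leq \|h_\lambda(\widetilde{\bbeta}(\bv))-h_\lambda(\bbeta_0(\bv))\|_2 + \|h_\lambda(\bbeta_0(\bv))-\bbeta_0(\bv)\|_2 \leq L\|\widetilde{\bbeta}(\bv)-\bbeta_0(\bv)\|_2 + \lambda.$$
Hence it is enough to have $\lambda<\epsilon/2$ together with $\sup_{\bv}\|\widetilde{\bbeta}(\bv)-\bbeta_0(\bv)\|_2<\epsilon/(2L)$. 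The first of these has prior probability at least $\epsilon/(2R)>0$ under $\lambda\sim\Unif(0,R)$ and is independent of $\widetilde{\bbeta}$.

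The remaining small-ball estimate for $\widetilde{\bbeta}$ follows from existing support theory. With $\bSigma=\bI_q$ under Assumption~\ref{assmp::kernel}, the $q$ coordinate processes of $\widetilde{\bbeta}$ are independent scalar Gaussian processes with rescaled squared-exponential kernel and a gamma prior on $a^d$. Each target coordinate $\beta_{0,k}$ lies in $\C^{\alpha}[0,1]^d$ by Assumption~\ref{assmp::smoothness}, so the adaptive sup-norm concentration results of \cite{van2009adaptive} guarantee that for every $k$ and every $\delta>0$, $\Pi(\|\widetilde{\beta}_k-\beta_{0,k}\|_\infty<\delta)>0$. Independence across coordinates gives $\Pi(\sup_{\bv,k}|\widetilde{\beta}_k(\bv)-\beta_{0,k}(\bv)|<\epsilon/(2L\sqrt{q}))>0$, and the pointwise inequality $\|\bx\|_2\leq\sqrt{q}\|\bx\|_\infty$ converts this into the required $\ell_2$ small-ball bound. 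Combining the three positive-probability events yields the theorem.

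The principal obstacle is precisely the one flagged by the authors before the theorem: the non-invertibility and boundary non-smoothness of $h_\lambda$ block the inverse-image strategy used by \cite{kang2018scalar} in the scalar case. The workaround above exploits the joint prior on $\lambda$: by conceding an approximation error of order $\lambda$, one entirely avoids inverting the thresholding transformation and can instead apply the full sup-norm support theory directly to the unthresholded latent GP. A minor auxiliary verification needed is that the Lipschitz constant of $h_\lambda$ is uniform in $\lambda$, which is immediate from the firm non-expansiveness of proximal mappings.
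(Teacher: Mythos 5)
Your proposal is correct and follows essentially the same route as the paper: the same triangle-inequality decomposition $\|\bbeta-\bbeta_0\|_2\leq\|h_\lambda(\widetilde{\bbeta})-h_\lambda(\bbeta_0)\|_2+\|h_\lambda(\bbeta_0)-\bbeta_0\|_2$, the same bounds ($\lambda$ for the thresholding mismatch, the Lipschitz Lemma~\ref{lem::lipch} for the first term), the same use of the uniform prior to make $\lambda<\epsilon/2$ a positive-probability event, and the same appeal to the sup-norm small-ball results of \cite{van2009adaptive} for the coordinatewise latent GPs. The paper likewise derives the second and third assertions from the first via $\bigm|\|\bbeta\|_2-\|\bbeta_0\|_2\bigm|\leq\|\bbeta-\bbeta_0\|_2$ and the equivalence of the pointwise $\ell_2$ and $\ell_\infty$ norms, so no further comment is needed.
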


The uniform prior in the above theorem is well-motivated both theoretically and computationally.
In the next Theorem, we show that optimal $\lambda_n$ decreases with $n$, and thus, for a large enough $n$, the optimal $\lambda$'s are all bounded by some constant. 
Also computationally, it helps to design a simpler posterior sampling method for $\lambda$.
For the remainder of the theoretical results, let us denote the data as $S_n=\{y_i,\bD_i\}_{i=1}^n$.

{\underline {Definition \citep{ghosal2017fundamentals}}:} The posterior contraction rate at the true parameter $\bbeta_0\in\mathcal{A}$ with respect to the semi-metric $d$ on $\mathcal{A}$ is a sequence $\epsilon_n\to 0$ such that $\Pi (\bbeta: d(\bbeta,\bbeta_0)>C \epsilon_n | S_n )\to 0$ in $P_{\bbeta_0}^{(n)}$-probability for some large constant $C$, where $\mathcal{A}$ denotes the parameter space of $\bbeta_0$.

\begin{Thm}[Consistency for the single group]
Under the Assumptions~\ref{assmp::designpoint} to \ref{assmp::kernel}, the posterior contraction rate with respect to $d(\bbeta,\bbeta_0)$ is $\epsilon_n=n^{-\alpha/(2\alpha+d)}(\log_{} n)^{\nu'}$, where $\nu'=(4\alpha+d)/(4\alpha+2d)$ with $\lambda_n = O(\epsilon_n)$.
\label{thm::consis}
\end{Thm}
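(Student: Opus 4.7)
\medskip
\noindent\textbf{Proof plan for Theorem 2 (Consistency for the single group).}
The plan is to cast the problem into the standard Ghosal--Ghosh--van der Vaart framework for posterior contraction and then translate the conclusion from an empirical prediction semi--metric into the integrated $\ell_1$ distance $d(\bbeta,\bbeta_0)$. Writing $\bY=\bX\bbeta+\bepsilon$ with stacked design matrix $\bX$ (rows $\bD_i'^T$), the natural intermediate semi--metric is $d_n(\bbeta,\bbeta_0)^2=\frac{1}{n}\|\bX(\bbeta-\bbeta_0)\|_2^2$. I will first show the three usual ingredients with rate $\epsilon_n=n^{-\alpha/(2\alpha+d)}(\log n)^{\nu'}$: (i) a prior mass bound $\Pi(\|\bbeta-\bbeta_0\|_\infty<\epsilon_n)\ge e^{-c_1 n\epsilon_n^2}$; (ii) a sieve $\F_n$ of latent processes with $\Pi(\F_n^c)\le e^{-c_2 n\epsilon_n^2}$ and $\log N(\epsilon_n,\F_n,\|\cdot\|_\infty)\le c_3 n\epsilon_n^2$; (iii) exponentially consistent tests for $d_n$--balls of radius $\epsilon_n$ against the true $\bbeta_0$.

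For (i) I would use Theorem 1 (large sup--norm support) together with the van der Vaart--van Zanten results for the squared--exponential kernel with the gamma prior on $a^d$ (Assumption 3). By the latter, the latent process $\widetilde\bbeta$ attains sup--norm concentration $\Pi(\|\widetilde\bbeta-\widetilde\bbeta_0\|_\infty<\epsilon_n)\ge \exp(-c\,n\epsilon_n^2)$ around any H\"older--$\alpha$ centering $\widetilde\bbeta_0$, and adapts to the unknown regularity $\alpha$, which is exactly where the extra $(\log n)^{\nu'}$ factor arises. Lemma 1 (Lipschitz continuity of ST2N) gives $\|h_\lambda(\widetilde\bbeta)-h_\lambda(\widetilde\bbeta_0)\|_\infty\le\|\widetilde\bbeta-\widetilde\bbeta_0\|_\infty$, so any sup--norm neighbourhood of $\widetilde\bbeta_0$ maps into the same neighbourhood of $h_\lambda(\widetilde\bbeta_0)$. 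To handle the thresholding, I choose a centering $\widetilde\bbeta_0$ whose norm is uniformly bounded away from the threshold, take $\lambda_n=O(\epsilon_n)\to 0$ and use the independent $\Unif(0,R)$ prior on $\lambda$ to give an $\epsilon_n$--fraction of prior mass to $|\lambda-\lambda_n|<\epsilon_n$; then $\|h_\lambda(\widetilde\bbeta_0)-\bbeta_0\|_\infty\lesssim\epsilon_n$ by construction, and a KL neighbourhood in the Gaussian regression model follows from $\sigma_0,b_{00}$ being fixed. For (ii), the sieve is the standard one built from the reproducing kernel Hilbert space unit ball of the rescaled squared--exponential kernel intersected with a sup--norm ball, pushed through $h_\lambda$; the entropy bound is inherited from the latent GP sieve because ST2N is 1--Lipschitz. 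Tests in (iii) are the usual likelihood--ratio/linear tests for Gaussian regression w.r.t.\ the empirical prediction norm $d_n$.

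The conclusion of the Ghosal--van der Vaart scheme is then $\Pi(d_n(\bbeta,\bbeta_0)>M\epsilon_n\mid S_n)\to 0$ in $P_{\bbeta_0}^{(n)}$--probability. To obtain the stated rate in $d(\bbeta,\bbeta_0)$ I would exploit Assumption 2 on the predictor process. Writing $\bW_n=\frac{1}{np}\bX^T\bX$, the definition of $J_n$ together with Assumption 2 yields $\|J_n-\bK\|_{op}\le c_{\min}/2$ for $n>N$, so the smallest eigenvalue of $J_n$ (equivalently of $n\bW_n\cdot p$) is bounded below by $c_{\min}/2$. Consequently,
\begin{equation*}
d_n(\bbeta,\bbeta_0)^2\;=\;\frac{1}{np}\,(\bbeta-\bbeta_0)^T\bX^T\bX(\bbeta-\bbeta_0)\cdot p\;\gtrsim\;\frac{1}{p_n}\sum_{j=1}^{p_n}\|\bbeta(\bv_j)-\bbeta_0(\bv_j)\|_2^2 .
\end{equation*}
Assumption 1 converts the Riemann sum on the right into the integral, up to an error of order $p_n^{-1/d}\lesssim n^{-1}$ which is negligible next to $\epsilon_n^2$: for any bounded $f$, $|\int_\B f\,\pd\bv-p_n^{-1}\sum_j f(\bv_j)|\lesssim\|f\|_\infty/p_n^{1/d}$. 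This gives $\int_\B\|\bbeta-\bbeta_0\|_2^2\,\pd\bv\lesssim d_n(\bbeta,\bbeta_0)^2$, and finally, bounding $\bbeta$ and $\bbeta_0$ in sup--norm on the sieve, Cauchy--Schwarz yields $d(\bbeta,\bbeta_0)\lesssim \bigl(\int_\B\|\bbeta-\bbeta_0\|_2^2\,\pd\bv\bigr)^{1/2}\lesssim\epsilon_n$.

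The main obstacle I anticipate is the prior concentration step: because $h_\lambda$ is only Lipschitz (not smooth) and the true $\bbeta_0$ is allowed to have a non--empty zero region $R_0$ with continuous transition to the non--zero region (Assumption 4), I cannot simply invert the thresholding. The construction of a H\"older--smooth surrogate $\widetilde\bbeta_0$ with $h_{\lambda_n}(\widetilde\bbeta_0)\approx\bbeta_0$ uniformly, done by offsetting the zero region by $\lambda_n$ along a smooth normal field and gluing, together with the joint prior mass lower bound over $(\widetilde\bbeta,\lambda)$, is the technical heart of the argument and controls how $\lambda_n$ must scale with $\epsilon_n$. Everything else is routine adaptation of the squared--exponential GP theory and sparse covariance estimation bounds already cited in the paper.
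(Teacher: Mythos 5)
Your overall architecture matches the paper's: both proofs run the Ghosal--van der Vaart scheme for independent non-identically distributed observations (prior mass, sieve with entropy control, tests via Lemma 8.27), both use the rescaled squared-exponential GP results of van der Vaart and van Zanten for the adaptive rate $n^{-\alpha/(2\alpha+d)}$ times log factors, and both translate from the empirical prediction semi-metric to $d(\bbeta,\bbeta_0)$ by exactly the route you describe: Assumption~\ref{assmp::predictor} gives the two-sided eigenvalue bound making $\frac1n\|\bmu-\bmu_0\|_2^2$ equivalent to $\frac{1}{p_n}\sum_j\|\bbeta(\bv_j)-\bbeta_0(\bv_j)\|_2^2$, and Assumption~\ref{assmp::designpoint} converts the Riemann sum to the integral with an $O(M_n/p_n^{1/d})$ error that is negligible against $\epsilon_n$.

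The one place where you genuinely diverge is the step you yourself flag as the technical heart: prior concentration through the thresholding map. You propose to build a H\"older-smooth surrogate $\widetilde\bbeta_0$ with $h_{\lambda_n}(\widetilde\bbeta_0)\approx\bbeta_0$ by offsetting the nonzero region by $\lambda_n$ along a smooth normal field and gluing across $\partial R_0$. The paper explicitly declines to do this (it notes that the exact-inversion construction of Kang et al.\ ``can be difficult in our setting due to the complexity of our thresholding function'') and instead centers the latent GP at $\bbeta_0$ itself, using the elementary two-term bound $\|\bbeta_0(\bv)-h_\lambda(\widetilde\bbeta(\bv))\|_2\le\|\bbeta_0(\bv)-h_\lambda(\bbeta_0(\bv))\|_2+\|h_\lambda(\bbeta_0(\bv))-h_\lambda(\widetilde\bbeta(\bv))\|_2\le\lambda+\|\bbeta_0(\bv)-\widetilde\bbeta(\bv)\|_2$, the second inequality from Lemma~\ref{lem::lipch}. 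The thresholding then costs only an additive $\lambda$, and the independent $\Unif(0,R)$ prior puts mass of order $\epsilon_n$ (polynomial, hence harmless against $e^{-cn\epsilon_n^2}$) on $\{\lambda<\epsilon_n/\sqrt2\}$; this is exactly where the condition $\lambda_n=O(\epsilon_n)$ in the theorem statement comes from. Your surrogate route is not merely harder, it carries a real risk: extending the unit direction field $\bbeta_0/\|\bbeta_0\|_2$ smoothly into the interior of the zero set $R_0$ can be topologically obstructed for vector-valued coefficients, so the offset-and-glue construction may fail to produce a globally H\"older-smooth centering. If you adopt the paper's additive-$\lambda$ decomposition, that obstacle disappears and the rest of your argument goes through essentially as written.
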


\begin{Rem}
In our proof of Theorem~\ref{thm::consis}, we also establish that for all $\epsilon>0$ that $\Pi(\frac{1}{p_n}\sum_{j=1}^{p_n}\|\bbeta_0(\bv_j)-\bbeta(\bv_j)\|^2_{2}>\epsilon\mid S_n)\rightarrow 0$ and $\Pi(\frac{1}{p_n}\sum_{j=1}^{p_n}\bigm|\|\bbeta_0(\bv_j)\|_2-\|\bbeta(\bv_j)\|_2\bigm|>\epsilon\mid S_n)\rightarrow 0$ as $n\rightarrow \infty$.
\end{Rem}

\begin{Rem}
Although scalar on image regression models share commonalities with the linear regression models, the recovery of the regression coefficient in the linear regression setting is quantified in terms of the $\ell_2$ distance without the fraction $\frac{1}{p_n}$.
Our results are thus not directly comparable with the ones from the linear regression literature.
In the context of function estimation, $d(\bbeta,\bbeta_0)$ is an appropriate metric to quantify the recovery.
Due to this difference in the notion of recovery, we require conditions that are different from the linear regression situation as well.
\end{Rem}

To show selection consistency, we define $U_n(\bbeta)=\{j:\|\bbeta(\bv_j)\|_2>0,\|\bbeta_0(\bv_j)\|_2=0\}$,  $V_n(\bbeta)=\{j:\|\bbeta(\bv_j)\|_2=0,\|\bbeta_0(\bv_j)\|_2>0\}$,
$W_n(\bbeta)=\{j:\|\bbeta(\bv_j)\|_2>0,\|\bbeta_0(\bv_j)\|_2>0\}$, 
$W_n'(\bbeta)=\{j:\bbeta(\bv_j)^T\bbeta_0(\bv_j)>0,\|\bbeta_0(\bv_j)\|_2>0\}$, and further let $U(\bbeta)=\{\bv:\|\bbeta(\bv)\|_2>0,\|\bbeta_0(\bv)\|_2=0\}$, $V=\{\bv:\|\bbeta(\bv)\|_2=0,\|\bbeta_0(\bv)\|_2>0\}$,
$W(\bbeta)=\{\bv:\|\bbeta(\bv)\|_2>0,\|\bbeta_0(\bv)\|_2>0\}$, and
$W'(\bbeta)=\{\bv:\bbeta(\bv)^T\bbeta_0(\bv)>0,\|\bbeta_0(\bv)\|_2>0\}$.
The two sets $W_n'(\bbeta)$ and $W'(\bbeta)$ are especially important for vector-valued coefficients. 
They ensure consistency of the direction of effects.
Additionally, let $R_{0,n}=\{j:\|\bbeta_0(\bv_j)\|_2=0\}$.

\begin{Thm}[Empirical sparsity]
Under the conditions of the previous theorem, for any, $\epsilon>0$ we have 
$\Pi(|U_n(\bbeta)|/|R_{0,n}|<\epsilon|S_n)\rightarrow 1$ and $\Pi(|V_n(\bbeta)|/(p_n-|R_{0,n}|)<\epsilon|S_n)\rightarrow 1$ as $n\rightarrow\infty$ if $\frac{|R_{0,n}|}{|p_n|}\rightarrow p_0$ such that and $1>p_0>0$. Furthermore, $\Pi(|W_n'(\bbeta)|/(p_n-|R_{0,n}|)>1-\epsilon|S_n)\rightarrow 1$
\label{thm::empselect}
\end{Thm}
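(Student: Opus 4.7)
The plan is to leverage the posterior contraction in $d(\bbeta,\bbeta_0)$ established in Theorem~\ref{thm::consis}, together with the quadrature bound from Assumption~\ref{assmp::designpoint}, which converts the integral into $p_n^{-1}\sum_j\|\bbeta(\bv_j)-\bbeta_0(\bv_j)\|_2\lesssim\epsilon_n$ on an event of posterior probability tending to one. On this event, together with $\lambda_n=O(\epsilon_n)$ and the $C^\alpha$ smoothness of $\bbeta_0$ from Assumption~\ref{assmp::smoothness}, each of the three selection errors will be decomposed into a ``large-signal'' piece, controlled by a Markov-type inequality against the rate $\epsilon_n$, and a ``small-signal'' piece, controlled either by H\"older regularity of $\bbeta_0$ at the boundary $\partial R_0$ or by the structure of the soft-thresholding map $h_{\lambda_n}$. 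Optimizing the threshold $\delta_n$ separating the two pieces then yields the stated ratios going to zero.

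For $V_n$, on which $\|\bbeta(\bv_j)\|_2=0$ but $\|\bbeta_0(\bv_j)\|_2>0$, one has $\|\bbeta(\bv_j)-\bbeta_0(\bv_j)\|_2=\|\bbeta_0(\bv_j)\|_2$. I would write $V_n=V_n^L\cup V_n^S$ at cutoff $\delta_n$: the large part gives $|V_n^L|\delta_n\lesssim p_n\epsilon_n$, hence $|V_n^L|\lesssim p_n\epsilon_n/\delta_n$. The small part $V_n^S$ sits inside $\{\bv_j:0<\|\bbeta_0(\bv_j)\|_2\leq\delta_n\}$; since $\bbeta_0\in C^\alpha$ vanishes identically on $R_0$ with no jump, all partial derivatives up to order $\underline{\alpha}$ vanish on $\partial R_0$, so Taylor expansion against a nearest boundary point gives $\|\bbeta_0(\bv)\|_2\lesssim d(\bv,R_0)^\alpha$ in a tube around $\partial R_0$. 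Under a mild Minkowski-regularity assumption on $\partial R_0$ this tube has Lebesgue measure $O(\delta_n^{1/\alpha})$, yielding $|V_n^S|\lesssim p_n\delta_n^{1/\alpha}$. Balancing the two bounds with $\delta_n=\epsilon_n^{\alpha/(\alpha+1)}$ produces $|V_n|\lesssim p_n\epsilon_n^{1/(\alpha+1)}=o(p_n)$, and since $p_n-|R_{0,n}|\asymp p_n$ under the $p_0\in(0,1)$ hypothesis, the ratio vanishes with posterior probability approaching one.

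For the direction-consistency claim on $W_n'$, I observe that whenever $\bbeta(\bv_j)^T\bbeta_0(\bv_j)\leq 0$ and $\|\bbeta_0(\bv_j)\|_2>0$, we have $\|\bbeta(\bv_j)-\bbeta_0(\bv_j)\|_2^2\geq\|\bbeta_0(\bv_j)\|_2^2+\|\bbeta(\bv_j)\|_2^2\geq\|\bbeta_0(\bv_j)\|_2^2$, so the same cutoff splitting used for $V_n$ applies verbatim; complementation then yields $|W_n'|/(p_n-|R_{0,n}|)>1-\epsilon$ with posterior probability tending to one. For $U_n$, on which $\|\bbeta_0(\bv_j)\|_2=0$ and $\|\bbeta(\bv_j)\|_2=(\|\widetilde{\bbeta}(\bv_j)\|_2-\lambda_n)_+>0$, I again split $U_n=U_n^L\cup U_n^S$ at cutoff $\delta_n$. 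The large part $U_n^L$ satisfies $|U_n^L|\lesssim p_n\epsilon_n/\delta_n$ exactly as before, since each element contributes at least $\delta_n$ to the $L_1$ error. The delicate piece is $U_n^S=\{\bv_j\in R_0:\lambda_n<\|\widetilde{\bbeta}(\bv_j)\|_2\leq\lambda_n+\delta_n\}$, a thin shell of the latent norm just above the threshold.

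The main obstacle will be the bound on $U_n^S$: unlike $V_n^S$ there is no minimum-signal-strength structure from $\bbeta_0$ to exploit, and one must rule out the possibility that the posterior concentrates on latent paths whose norm lingers just above $\lambda_n$ on a large fraction of $R_0$. My strategy is to combine (i) the $L_1$-type bound $\int_{R_0}(\|\widetilde{\bbeta}(\bv)\|_2-\lambda_n)_+\pd\bv\lesssim\epsilon_n$ inherited from the contraction (since $\|\bbeta-\bbeta_0\|_2=\|\bbeta\|_2$ on $R_0$), which forces $\|\widetilde{\bbeta}\|_2$ to exceed $\lambda_n$ only on an $L_1$-small set, with (ii) H\"older continuity of the GP sample path $\widetilde{\bbeta}$ (inherited from the squared-exponential kernel under Assumption~\ref{assmp::kernel}), so that $\|\widetilde{\bbeta}(\bv)\|_2$ cannot traverse a thin shell around $\lambda_n$ arbitrarily often, and (iii) the fact that the marginal density of $\|\widetilde{\bbeta}(\bv)\|_2$ at the strictly positive level $\lambda_n$ is uniformly bounded under the Gaussian prior (a $\chi_q$-type density at a bounded argument) and preserved under the posterior by absolute continuity, giving the shell volume bound $|U_n^S|\lesssim p_n\delta_n$ by Fubini. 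Balancing then with $\delta_n=\sqrt{\epsilon_n}$ produces $|U_n|\lesssim p_n\sqrt{\epsilon_n}=o(p_n)$, and since $|R_{0,n}|\asymp p_n$, we conclude $|U_n|/|R_{0,n}|\to 0$ in posterior probability. Combining the three bounds with $\epsilon_n\to 0$ completes the argument.
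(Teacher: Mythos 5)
Your overall strategy --- quantitative cutoff balancing against the contraction rate $\epsilon_n$ --- is genuinely different from the paper's argument and, as written, has two real gaps. First, for $V_n^S$ you import a Minkowski-regularity condition on $\partial R_0$ (so that the tube $\{0<\|\bbeta_0(\bv)\|_2\leq\delta_n\}$ has measure $O(\delta_n^{1/\alpha})$) that is not among Assumptions~\ref{assmp::designpoint}--\ref{assmp::kernel}; the theorem is stated without it, so your proof establishes a different, conditional statement. Second, and more seriously, your treatment of $U_n^S$ does not go through: the claim that the bounded marginal density of $\|\widetilde{\bbeta}(\bv)\|_2$ near the level $\lambda_n$ under the Gaussian prior is ``preserved under the posterior by absolute continuity'' is false in the quantitative form you need. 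Absolute continuity only guarantees that posterior-null sets are prior-null sets; the likelihood ratio can redistribute mass so that the posterior concentrates on latent fields whose norm sits in the shell $(\lambda_n,\lambda_n+\delta_n]$ over a non-negligible fraction of $R_{0,n}$, and neither path H\"older continuity nor Fubini rescues the bound $|U_n^S|\lesssim p_n\delta_n$ without a uniform posterior density estimate that you have not supplied. You correctly identify this as the hard case (signal estimated as small but nonzero on the true null set), but it cannot be closed by the $\ell_1$ contraction alone, since a function can be everywhere nonzero with arbitrarily small $\ell_1$ norm.

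The paper avoids both problems by working qualitatively rather than at the rate. It defines $U_{n,m}=\{j:\|\bbeta(\bv_j)\|_2>1/m,\ \|\bbeta_0(\bv_j)\|_2=0\}$ for each fixed $m$, observes that $|U_{n,m}|/|R_{0,n}|>\epsilon$ forces the empirical average $\frac{1}{p_n}\sum_j\bigl|\|\bbeta(\bv_j)\|_2-\|\bbeta_0(\bv_j)\|_2\bigr|$ to exceed the \emph{fixed} constant $\epsilon(p_0-\epsilon_1)/m$, which has vanishing posterior probability by the consistency established in Theorem~\ref{thm::consis}, and then passes to $U_n=\cup_m U_{n,m}$ by monotone continuity of the posterior measure. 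The analogous exhaustion $W_{n,m}'$, built on the elementary inequality $\bbeta(\bv_j)^T\bbeta_0(\bv_j)>-\|\bbeta(\bv_j)-\bbeta_0(\bv_j)\|_2^2$ (your lower bound $\|\bbeta-\bbeta_0\|_2^2\geq\|\bbeta_0\|_2^2$ when the inner product is nonpositive plays the same role), and the decomposition of the error over $R_{0,n}^c$ handle $W_n'$ and $V_n$. No geometry of $\partial R_0$, no density bounds on the latent process, and no cutoff optimization are needed. To salvage your route you would need either a margin condition separating $\|\widetilde{\bbeta}\|_2$ from $\lambda_n$ on $R_0$ under the posterior, or to abandon the quantitative shell bound and fall back on the countable-exhaustion argument.
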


The empirical sparsity result shows that the proportion of locations in the observed data where zero locations of $\beta$ and $\beta_0$ do not match approaches to zero with probability 1. The assumption $\frac{|R_{0,n}|}{|p_n|}\rightarrow p_0<1$ and $>0$ is not unrealistic, since we are in an infill asymptotic regime. As we have more spatial observations, we should approach to the true proportion of zero locations. The above result readily implies that $\Pi(|W_n(\bbeta)|/(p_n-|R_{0,n}|)>1-\epsilon\mid S_n)\rightarrow 1$. We further establish the following result.
Let $\A(U)=\int_{\bv\in U}\pd\bv$, the area of the set $U$.
\begin{Thm}[Sparsity]
Under the conditions of the previous theorem, for any, $\epsilon>0$ we have 
$\Pi(\A(U(\bbeta))>\epsilon\mid S_n)\rightarrow 0$, $\Pi(\A(V(\bbeta))>\epsilon\mid S_n)\rightarrow 0$, $\Pi(\A(W(\bbeta))>1-\epsilon\mid S_n)\rightarrow 1$ and $\Pi(\A(W'(\bbeta))>1-\epsilon\mid S_n)\rightarrow 1$ as $n\rightarrow\infty$.
\label{thm::select}
\end{Thm}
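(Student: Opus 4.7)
The plan is to transfer the empirical sparsity conclusions of Theorem~\ref{thm::empselect} from the grid points to the continuous set $\B$ by combining the Lipschitz continuity of $\bbeta(\cdot)$ (Lemma~\ref{lem::lipch}) with the grid fineness from Assumption~\ref{assmp::designpoint}. Since $\widetilde{\bbeta}(\cdot)$ is a squared-exponential Gaussian process, its sample paths are almost surely Lipschitz on the compact set $\B$, and by Lemma~\ref{lem::lipch} the transformed process $\bbeta(\cdot)=h_{\lambda}(\widetilde{\bbeta}(\cdot))$ inherits Lipschitz continuity with the same constant. First I would condition on a high-posterior-probability event on which (i) the sample-path Lipschitz constant $L$ of $\widetilde{\bbeta}(\cdot)$ is bounded, (ii) $\lambda\le\Lambda$ for a fixed $\Lambda$ (possible since Theorem~\ref{thm::consis} gives $\lambda_n=O(\epsilon_n)\to 0$ and the prior on $\lambda$ has bounded support), and (iii) the empirical sparsity bounds of Theorem~\ref{thm::empselect} hold.

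Using the partition $\B=\cup_{j=1}^{p_n}\V_j$, each cell has diameter $O(p_n^{-1/d})$, so $\bigl|\|\bbeta(\bv)\|_2-\|\bbeta(\bv_j)\|_2\bigr|\le L p_n^{-1/d}$ for $\bv\in\V_j$, and by Assumption~\ref{assmp::smoothness} an analogous bound holds for $\bbeta_0$. Hence $\V_j$ can contribute to $U(\bbeta)$ only if either $j\in U_n(\bbeta)$, contributing total volume at most $|U_n(\bbeta)|/p_n\lesssim\epsilon\,|R_{0,n}|/p_n$ by Theorem~\ref{thm::empselect}, or $\V_j$ intersects the zero level set of $\|\bbeta(\cdot)\|_2$ or $\|\bbeta_0(\cdot)\|_2$. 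The second ``boundary'' contribution is bounded by the Lebesgue measure of the $O(p_n^{-1/d})$-tube around these level sets. For $\bbeta$ the relevant level set is $\{\bv:\|\widetilde{\bbeta}(\bv)\|_2=\lambda\}$, almost surely a smooth $(d-1)$-submanifold of finite Hausdorff measure for a $C^\infty$ Gaussian field, so its tube has measure $O(p_n^{-1/d})=o(1)$; the corresponding bound for $\bbeta_0$ is deterministic, following from the no-jump clause in Assumption~\ref{assmp::smoothness}.

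The statements for $V(\bbeta)$, $W(\bbeta)$ and $W'(\bbeta)$ follow by parallel reasoning. For $V$, swap the roles of $\bbeta$ and $\bbeta_0$ in identifying the bad cells; the claim for $W$ is obtained via the inclusion $R_0^c\setminus W(\bbeta)\subseteq V(\bbeta)$ together with the approximation $\A(R_0^c)\to 1-p_0$ implied by Assumption~\ref{assmp::smoothness} and $|R_{0,n}|/p_n\to p_0$. For $W'$, continuity of the inner product $\bbeta\trans\bbeta_0$ implies that it remains strictly positive on each cell indexed by $W_n'(\bbeta)$ except within an $O(p_n^{-1/d})$-tube around the zero set of $\bbeta\trans\bbeta_0$; combining this with the empirical lower bound from Theorem~\ref{thm::empselect} yields the conclusion.

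The principal difficulty is bounding the $(d-1)$-Hausdorff measure of the random zero level set $\{\|\widetilde{\bbeta}\|_2=\lambda\}$ uniformly over posterior-typical sample paths, so that the $O(p_n^{-1/d})$-tube estimate actually yields a vanishing boundary contribution with high posterior probability; pointwise smoothness of GP sample paths alone is not enough. A Kac--Rice type computation giving $\eE\,\mathcal{H}^{d-1}\bigl(\{\|\widetilde{\bbeta}\|_2=\lambda\}\bigr)$ bounded uniformly in $\lambda\in[0,\Lambda]$, combined with a Markov step, provides the required uniform control and finishes the argument.
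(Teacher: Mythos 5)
Your overall strategy --- pushing the empirical statements of Theorem~\ref{thm::empselect} from the grid to the continuum via Lipschitz continuity of $\bbeta(\cdot)$ and a tube bound around the zero level sets --- is a genuinely different route from the paper's, and as written it has two concrete gaps. First, the key quantitative input you need, a bound on the $(d-1)$-Hausdorff measure of the random level set $\{\bv:\|\widetilde{\bbeta}(\bv)\|_2=\lambda\}$ holding with high \emph{posterior} probability, is not delivered by ``Kac--Rice plus Markov'': a Kac--Rice computation controls the \emph{prior} expectation, and Markov then gives only a polynomially small prior probability for the bad event; to transfer this to the posterior you would need that event to carry exponentially small prior mass (of order $e^{-cn\epsilon_n^2}$, as in the sieve construction), which a first-moment bound does not provide. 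Second, the deterministic half of your tube argument requires the boundary of the true zero set $R_0$ to have finite $(d-1)$-dimensional Minkowski content, so that its $O(p_n^{-1/d})$-neighbourhood has vanishing Lebesgue measure; Assumption~\ref{assmp::smoothness} gives only H\"older smoothness and the absence of jumps, which does not control the size of $\partial R_0$. So the boundary contributions you discard are not actually shown to vanish.

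The paper avoids all of this by never passing through the empirical theorem. Its proof of Theorem~\ref{thm::consis} establishes contraction directly in the integral distance $\int\bigl|\|\bbeta(\bv)\|_2-\|\bbeta_0(\bv)\|_2\bigr|\pd\bv$ (and in $d(\bbeta,\bbeta_0)$), and the continuum sparsity claims then follow from elementary lower bounds on that integral: for $U$, one uses $\int_{\B}\bigl|\|\bbeta(\bv)\|_2-\|\bbeta_0(\bv)\|_2\bigr|\pd\bv\ge\int_{R_0}\|\bbeta(\bv)\|_2\pd\bv$ together with a monotone-continuity argument over the events $\F_m(R_0)=\{\bbeta:\int_{R_0}\|\bbeta(\bv)\|_2\pd\bv<1/m\}$; for $V$ and $W$, if $\A(V(\bbeta))$ stayed bounded away from zero then $\int_{R_0^c}\bigl|\|\bbeta(\bv)\|_2-\|\bbeta_0(\bv)\|_2\bigr|\pd\bv\ge\int_{V(\bbeta)}\|\bbeta_0(\bv)\|_2\pd\bv$ would be bounded below by a fixed positive constant, contradicting the contraction; and for $W'$ one uses the pointwise inequality $\bbeta(\bv)^T\bbeta_0(\bv)\ge-\|\bbeta(\bv)-\bbeta_0(\bv)\|_2^2$. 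If you want to salvage your route you would have to add level-set regularity assumptions and a posterior-level control of the random level set; the direct integral argument requires neither.
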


The above result establishes selection consistency under the same set of conditions as in the previous theorem.
Specifically, it generalizes Theorem~\ref{thm::empselect} for the whole spatial domain and implies that the area where zero locations of $\beta$ and $\beta_0$ do not match approaches to zero with probability 1.



\subsection{Result for multi-group}
\label{sec::theomulti}
As the size of each group increases, all the large sample properties established under a single group setting will continue to hold for each group.
We thus focus on some additional results to aid our multi-group inference.
Apart from estimating the $\bbeta_g$'s, we are also interested in identifying the similar-effect spatial locations, defined in Definition S1 in the Supplementary Materials.
We thus focus to establish on some additional results concerning this definition.
Along with Assumptions~\ref{assmp::designpoint} to \ref{assmp::kernel}, we require the following additional assumption.
\begin{Assmp}[Norm integrablity]
$\int\|\bbeta_{0,g}(\bv)\|_2\pd\bv<M'$ for some $M'>0$ and for all $g$.
\label{assmp::multi}
\end{Assmp}
\begin{Thm}[Cross-group consistency]
Under Assumptions~\ref{assmp::designpoint} to \ref{assmp::multi}, as $n_g\rightarrow \infty$ we have $\Pi(\int|\bbeta_g(\bv)^T\bbeta_{g'}(\bv)-\bbeta_{0,g}(\bv)^T\bbeta_{0,g'}(\bv)|\pd\bv<\epsilon\mid S_n)\rightarrow 1$ for all $g,g'=1,\ldots,G$.
\label{thm::multiconsis}
\end{Thm}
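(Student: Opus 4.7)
The plan is to reduce this multi-group statement to the single-group contraction result of Theorem~\ref{thm::consis} applied separately to each group. First I would expand the integrand via the algebraic identity
\begin{align*}
\bbeta_g(\bv)^T\bbeta_{g'}(\bv) - \bbeta_{0,g}(\bv)^T\bbeta_{0,g'}(\bv)
&= (\bbeta_g(\bv)-\bbeta_{0,g}(\bv))^T\bbeta_{0,g'}(\bv)\\
&\quad + \bbeta_{0,g}(\bv)^T(\bbeta_{g'}(\bv)-\bbeta_{0,g'}(\bv))\\
&\quad + (\bbeta_g(\bv)-\bbeta_{0,g}(\bv))^T(\bbeta_{g'}(\bv)-\bbeta_{0,g'}(\bv)),
\end{align*}
apply the Cauchy-Schwarz inequality pointwise in $\bv$, and integrate to obtain
\[
\int \bigl|\bbeta_g^T\bbeta_{g'}-\bbeta_{0,g}^T\bbeta_{0,g'}\bigr|\,\pd\bv
\;\le\; \|\bbeta_{0,g'}\|_\infty\, d(\bbeta_g,\bbeta_{0,g}) + \|\bbeta_{0,g}\|_\infty\, d(\bbeta_{g'},\bbeta_{0,g'}) + I_{gg'},
\]
where $I_{gg'} = \int \|\bbeta_g(\bv)-\bbeta_{0,g}(\bv)\|_2\,\|\bbeta_{g'}(\bv)-\bbeta_{0,g'}(\bv)\|_2\,\pd\bv$.

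The first two terms are immediate. Assumption~\ref{assmp::smoothness} supplies $\|\bbeta_{0,g}\|_\infty,\|\bbeta_{0,g'}\|_\infty \le M$, and Assumption~\ref{assmp::multi} guarantees that all inner-product integrals are finite. Applying Theorem~\ref{thm::consis} separately to groups $g$ and $g'$ (using $n_g, n_{g'} \to \infty$) places posterior mass tending to $1$ on $\{d(\bbeta_g,\bbeta_{0,g}) < C\epsilon_{n_g}\}$ and analogously for $g'$, so each of these two contributions falls below $\epsilon/3$ with posterior probability approaching $1$.

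The principal obstacle is the cross term $I_{gg'}$, because Theorem~\ref{thm::consis} only controls an $L^1$-type distance and a direct Cauchy-Schwarz bound would require $L^2$ control. My approach is to pull a sup-norm factor out of the integral,
\[
I_{gg'} \;\le\; \sup_{\bv}\|\bbeta_g(\bv)-\bbeta_{0,g}(\bv)\|_2 \cdot d(\bbeta_{g'},\bbeta_{0,g'}) \;\le\; (\|\bbeta_g\|_\infty + M)\cdot d(\bbeta_{g'},\bbeta_{0,g'}),
\]
and then to restrict attention to the sieve $\{\|\bbeta_g\|_\infty \le K_n\}$ with $K_n$ growing slowly enough that $K_n\,\epsilon_{n_g}\to 0$. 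This sieve carries posterior probability tending to $1$: by construction $\bbeta_g = h_\lambda(\widetilde{\bbeta}_g)$ satisfies $\|\bbeta_g\|_\infty \le \|\widetilde{\bbeta}_g\|_\infty$ (the ST2N transformation only contracts norms, as reflected in Lemma~\ref{lem::lipch}), and the Gaussian-process sup-norm concentration bounds already invoked in the proof of Theorem~\ref{thm::consis} to build the sieve control $\|\widetilde{\bbeta}_g\|_\infty$ with the required probability.

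Combining the three pieces, the integral on the left-hand side is dominated on the sieve with posterior probability approaching $1$ by a quantity of order $M\max(\epsilon_{n_g},\epsilon_{n_{g'}}) + K_n\min(\epsilon_{n_g},\epsilon_{n_{g'}})$, which tends to $0$ as $n_g \to \infty$ for every $g$. Thus $\Pi\bigl(\int|\bbeta_g^T\bbeta_{g'}-\bbeta_{0,g}^T\bbeta_{0,g'}|\,\pd\bv < \epsilon \mid S_n\bigr)\to 1$ for every fixed $\epsilon>0$ and every pair $(g,g')$, as claimed. The main difficulty throughout is the cross term, and the resolution is to couple the group-wise $L^1$ contraction from Theorem~\ref{thm::consis} with a sup-norm sieve argument enabled by the norm-contracting ST2N transformation acting on a Gaussian-process latent field.
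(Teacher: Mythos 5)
Your high-level strategy is the same as the paper's: invoke Theorem~\ref{thm::consis} group by group, expand the difference of inner products so that every piece carries a factor of $\bbeta_g-\bbeta_{0,g}$ or $\bbeta_{g'}-\bbeta_{0,g'}$, and finish with a pointwise Cauchy--Schwarz bound. The concrete difference is the decomposition. The paper uses the two-term identity $\bbeta_g^T\bbeta_{g'}-\bbeta_{0,g}^T\bbeta_{0,g'}=\bbeta_g^T(\bbeta_{g'}-\bbeta_{0,g'})+\bbeta_{0,g'}^T(\bbeta_g-\bbeta_{0,g})$, so no pure cross term ever appears; the resulting factor $\int\|\bbeta_g(\bv)\|_2\,\pd\bv$ is controlled by $M'+\epsilon$ through Assumption~\ref{assmp::multi} combined with the $L^1$ contraction. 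That is precisely where the norm-integrability assumption enters the paper's argument --- your proof never uses Assumption~\ref{assmp::multi} at all, since your first two terms need only $\|\bbeta_0\|_\infty\le M$ from Assumption~\ref{assmp::smoothness}.

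The weak point is your treatment of the cross term $I_{gg'}$. For the sieve $\{\|\bbeta_g\|_\infty\le K_n\}$ to carry posterior mass tending to one, its complement must have prior mass of order $e^{-cn_g\epsilon_{n_g}^2}$, and Gaussian sup-norm concentration forces $K_n\gtrsim\sqrt{n_g}\,\epsilon_{n_g}$ (this is exactly the order of the radius $M_n$ used in the sieve of Theorem~\ref{thm::consis}; it cannot be taken smaller). Then $K_n\epsilon_{n_g}\asymp\sqrt{n_g}\,\epsilon_{n_g}^2$, which tends to zero only when $\alpha>d/2$, so the premise that $K_n$ can be chosen ``growing slowly enough'' is not available in general and the theorem states no such smoothness restriction. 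Two repairs: (i) bound $I_{gg'}\le\tfrac12\int\|\bbeta_g-\bbeta_{0,g}\|_2^2\,\pd\bv+\tfrac12\int\|\bbeta_{g'}-\bbeta_{0,g'}\|_2^2\,\pd\bv$ and invoke the squared-$\ell_2$ contraction recorded in the Remark following Theorem~\ref{thm::consis}; or (ii) switch to the paper's two-term decomposition, which eliminates the cross term at the price of using Assumption~\ref{assmp::multi}. (In fairness, the paper's own closing inequality bounds $\int\|\bbeta_g\|_2\|\bbeta_{g'}-\bbeta_{0,g'}\|_2\,\pd\bv$ by the product of the two integrals, which likewise requires a sup-norm or squared-norm ingredient to be made rigorous, so the difficulty you isolated is real and not fully resolved in the original proof either.)
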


The proof is in the Supplementary Materials. This result is particularly important to establish the following theorem. The following result shows selection consistency in identifying the spatial locations where the sign of inner products between cross-group effects are positive.

Let $Q_1(\bbeta_g, \bbeta_{g'})=\{\bv: \bbeta_g(\bv)^T\bbeta_{g'}(\bv) > 0, \bbeta_{0,g}(\bv)^T\bbeta_{0,g'}(\bv)>0\}$, $Q_2(\bbeta_g, \bbeta_{g'})=\{\bv: \bbeta_g(\bv)^T\bbeta_{g'}(\bv) > 0, \bbeta_{0,g}(\bv)^T\bbeta_{0,g'}(\bv)\leq 0\}$.
\begin{Thm}[Identification]
Under Assumptions~\ref{assmp::designpoint} to \ref{assmp::multi}, we have $\Pi(\A(Q_1(\bbeta_g, \bbeta_{g'})) > 1-\epsilon \mid S_n)\rightarrow 1$, $\Pi(\A(Q_2(\bbeta_g, \bbeta_{g'})) <\epsilon \mid S_n)\rightarrow 0$ as $n_g\rightarrow \infty$ for all $g,g'=1,\ldots,G$.
\label{thm::multiiden}
\end{Thm}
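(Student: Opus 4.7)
The plan is to upgrade the $L^1$-type consistency of the cross-group inner product (Theorem~\ref{thm::multiconsis}) into area control of the sign-agreement sets, using a Markov-type bound on the set where the signs disagree macroscopically, together with the H\"older smoothness of the truth (Assumption~\ref{assmp::smoothness}) to control the level-set boundary. Write $f(\bv)=\bbeta_g(\bv)^T\bbeta_{g'}(\bv)$ and $f_0(\bv)=\bbeta_{0,g}(\bv)^T\bbeta_{0,g'}(\bv)$. Theorem~\ref{thm::multiconsis} provides $\Pi(\int |f-f_0|\,\pd\bv < \eta \mid S_n)\to 1$ for every $\eta>0$, and I would carry out the remaining work on this high-posterior-probability event.

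For $Q_2 = \{f>0, f_0\le 0\}$: on $Q_2$, $|f-f_0|=f+|f_0|\ge |f_0|$. Split according to whether $|f_0|>\delta$ or $|f_0|\le\delta$. On $Q_2\cap\{f_0<-\delta\}$ one has $|f-f_0|>\delta$, so Markov's inequality applied to the measure $\pd\bv$ yields $\A(Q_2\cap\{f_0<-\delta\})\le\eta/\delta$. The residual $Q_2\cap\{-\delta\le f_0\le 0\}$ I would handle in two pieces: (i) on $\{f_0=0\}$, a vanishing $f_0$ forces at least one of $\bbeta_{0,g}, \bbeta_{0,g'}$ to be zero, whence the group-wise spatial selection result Theorem~\ref{thm::select} forces the corresponding $\bbeta_g$ or $\bbeta_{g'}$ to vanish up to an area $<\epsilon/4$, contradicting $f>0$; (ii) on $\{-\delta<f_0<0\}$, Assumption~\ref{assmp::smoothness} identifies this set as a tube of width $O(\delta^{1/\alpha})$ around the rectifiable level set $\{f_0=0\}$, hence of $d$-volume $O(\delta^{1/\alpha})\to 0$. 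Choosing $\delta$ first to shrink the tube and then $\eta=\delta\epsilon/2$ gives $\A(Q_2)<\epsilon$.

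For $Q_1$, the argument is symmetric on the complementary set. Since $\{f_0>0\}\setminus Q_1=\{f\le 0, f_0>0\}$ and on this set $|f-f_0|\ge f_0$, Markov's inequality on $\{f_0>\delta\}$ controls the bulk, and the residual strip $\{0<f_0\le\delta\}$ is another boundary tube of vanishing volume by the same smoothness argument. This yields $\A(\{f\le 0, f_0>0\})<\epsilon$, which implies $\A(Q_1)>\A(\{f_0>0\})-\epsilon$; when $f_0>0$ almost everywhere the statement is $\A(Q_1)>1-\epsilon$, and otherwise one interprets the conclusion relative to the fixed total $\A(\{f_0>0\})$.

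The principal obstacle is a sharp estimate of the near-boundary measure $\A(\{0<|f_0|<\delta\})$ as $\delta\to 0$. Bare H\"older continuity does not suffice if $f_0$ has flat plateaus just above zero; I would circumvent this by combining the no-jump clause of Assumption~\ref{assmp::smoothness} (no discontinuous transition from the zero region into the non-zero region) with a co-area argument showing that the support boundary $\partial\{f_0\neq 0\}$ has finite $(d-1)$-dimensional Hausdorff measure, so the $\delta$-tube about it has $d$-volume polynomially vanishing in $\delta$. All remaining manipulations are routine applications of Theorems~\ref{thm::select} and~\ref{thm::multiconsis} on the high-probability posterior event.
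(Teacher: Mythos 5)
Your core mechanism is the same one the paper uses: convert the $L^1$ consistency of $\bv\mapsto\bbeta_g(\bv)^T\bbeta_{g'}(\bv)$ from Theorem~\ref{thm::multiconsis} into area control of the sign-agreement sets via a Markov/margin-set argument. The paper works with the margin sets $U_m=\{\bv:\bbeta_g(\bv)^T\bbeta_{g'}(\bv)\le -1/m,\ \bbeta_{0,g}(\bv)^T\bbeta_{0,g'}(\bv)>0\}$, observes that $\int|f-f_0|\,\pd\bv\ge \A(U_m)/m$ on $U_m$, and lets $m\to\infty$; you put the margin $\delta$ on the true product $f_0$ instead and shrink $\delta$. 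These are interchangeable. However, your treatment of the residual strip is both over-engineered in one place and genuinely flawed in another.

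The over-engineering: you do not need any co-area, rectifiability, or tube-volume argument for $\A(\{0<|f_0|\le\delta\})$, and the worry about "flat plateaus just above zero" is unfounded. The sets $\{0<|f_0|\le\delta\}$ decrease to $\emptyset$ as $\delta\downarrow 0$ (a plateau at level $c>0$ leaves the set once $\delta<c$), so their Lebesgue measure vanishes by downward monotone continuity of a finite measure --- no rate is needed, because $\delta$ is chosen as a function of $\epsilon$ before $\eta$. This is exactly the monotone-continuity device the paper deploys in Theorems~\ref{thm::empselect} and~\ref{thm::select} and implicitly in the $m\to\infty$ step here.

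The genuine gap is in your step (i): the claim that $f_0(\bv)=\bbeta_{0,g}(\bv)^T\bbeta_{0,g'}(\bv)=0$ "forces at least one of $\bbeta_{0,g},\bbeta_{0,g'}$ to be zero" is false --- both true coefficients can be nonzero and orthogonal. On a positive-measure set of such locations, Theorem~\ref{thm::select} gives you nothing (neither $U(\bbeta_g)$ nor $V(\bbeta_g)$ is involved), and the $L^1$ consistency of the cross-product cannot pin down the sign of $f$ when $f_0$ is exactly zero, so $\A(Q_2)<\epsilon$ cannot be forced on that set by these tools. To close this you must either add the (implicit) assumption that $\{\bv: f_0(\bv)=0,\ \|\bbeta_{0,g}(\bv)\|_2\|\bbeta_{0,g'}(\bv)\|_2>0\}$ has measure zero --- which is automatic at similar-effect locations per Definition~\ref{def::signsimilar} --- or restrict $Q_2$ to $\{f_0<0\}$. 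Note the paper's own proof quietly sidesteps this by only bounding the complement of $Q_1$ inside $\{f_0>0\}$ and never treating $Q_2$; you are attempting more, but this one step does not hold as written. Your closing remark that $\A(Q_1)>1-\epsilon$ is only meaningful when $\A(\{f_0>0\})$ is essentially $1$ is a correct reading of an assumption the theorem leaves implicit.
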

The proof is provided in the Supplementary Materials. The approaches are similar to the single group case.

\section{Prior specification and posterior computation}
\label{sec::comp}

We assume the error variance and the intercepts to be known in our theoretical analysis (Section~\ref{sec::theo}).
In practical applications, however, these cannot be ascertained.
Thus, we put priors on these parameters as well. We set $\sigma^{-2}\sim\Ga(c_1,c_2)$, where $\Ga$ stands for the Gamma distribution with shape and rate parameters as $c_1$ and $c_2$.
In Section~\ref{sec::theo}, we imposed a boundedness condition on $\sigma$ and proposed possible bounds. However, no appreciable changes in the results are observed on imposing those bounds in our numerical implementation.
Thus, we ignore those bounds for simplicity.
For the intercepts, we set $b_{0,g}\sim\Normal(0, \sigma_b^2)$.
We let $\lambda,\lambda_g\sim\Unif(0,R)$ for some large enough $R>0$ and $\Unif$ stands for the Uniform distribution.
While pre-specifying $R$, we take a data-driven approach. We first fit a traditional LASSO regression model for the complete data using {\tt cv.glmnet} function from R package {\tt glmnet} \citep{friedman2010regularization} and the set $R=q\lambda_{\min}$, where $\lambda_{\min}$ is the cross-validated penalty parameter from the LASSO regression model and $q$ is the dimension of the predictor at each voxel.
As discussed in Section~\ref{sec::theo}, we let $\widetilde{\bbeta}(\cdot)\sim$ GP$(0,\kappa_a\otimes\bSigma)$.
For $\kappa_a$, we consider low-rank GP, thereby reducing the computational cost.
Specifically, we let a low-rank approximation of the spatially varying coefficients as $\widetilde{\beta}_j(\bv)=\sum_{\ell=1}^LF(\bv-\bv_{\ell})\widetilde{\beta}'_j(\bv_{\ell})$ for all $j=1,\ldots,q$ and then let the coefficients $\widetilde{\bbeta}'(\cdot)=(\widetilde{\beta}_1'(\cdot),\ldots,\widetilde{\beta}_q'(\cdot))\sim$GP$(0,\kappa_a\otimes\bSigma)$, where $\kappa_a(\bv,\bv')=\exp(-a^2\|\bv-\bv'\|_2^2)$ and $a^d\sim\Ga(d_1,d_2)$.
Here $\bv_1,\ldots,\bv_L \in \mathbb{R}^d$ are a grid of spatial knots covering $\B$ and $F$ is a local kernel. 
Our predictor images do not have to be on a regular grid.
In the case of regularly gridded 3-dimensional data, we set $n_1/2 \times n_2/2 \times n_3/2$ equally spaced grid of knots while covering a $n_1 \times n_2 \times n_3$ dimensional spatial data. This strategy also aligns with previous literature \cite{higdon,nychka,kang2018scalar, roy2021spatial}. In our simulation experiments, we tried a few other choices of knots, however, the above specification worked the best. 
For our real data application, the predictors are white matter fibers. Thus, the spatial coordinates do not form a hyperrectangle. However, we may still form spatial knots based on the smallest hyperrectangle that covers the domain of the data. Consequently, some of these knots may fall outside the spatial domain of data. In that case, we consider ignoring those knot points without sacrificing any numerical accuracy.
Following \cite{kang2018scalar}, we consider tapered Gaussian kernels with bandwidth $b$ such that $F(x)=\exp\left(-\frac{x^2}{2b^2}\right)I\{x<3b\}$. Hence, $F(\|\bv-\bv_{\ell}\|)=0$ for all $\bv$ that are separated by more than $3b$ from $\bv_{\ell}$.
The approximation helps to reduce the computational burden without sacrificing any significant loss in estimation accuracy. Similar approximations are also applied to the group-specific $\widetilde{\balpha}_g$'s.
Specifically, we let $\tilde{\balpha}_g(\cdot)\sim$GP$(0,\kappa_{a_g}\otimes\bSigma_g)$
We further put a conjugate inverse Wishart prior on $\bSigma$ and $\bSigma_g$'s with parameters $\nu$ and $\bS$. For all of our numerical results, we set $\nu=4$ and $\bS=\bI_3$ as $q=3$.

The inference is based on samples drawn from the posterior using an MCMC algorithm. 
The variance $\sigma^{-2}$ is updated from its full conditional Gamma posterior distribution. 
The intercepts, $b_{0,g}$'s are also updated from their full conditional Gaussian posterior distributions.
The posterior samplings for $\bSigma$ and $\bSigma_g$'s are done from their full conditional inverse Wishart posteriors.
For all the other parameters, there is no conjugacy.
The latent coefficients $\widetilde{\bbeta}$ and $\widetilde{\balpha}_g$'s are updated using a gradient-based Hamiltonian Monte Carlo (HMC) algorithm \citep{neal2011mcmc, betancourt2015hamiltonian, betancourt2017conceptual}.
HMC has been shown to draw posterior samples much more efficiently than traditional random walk Metropolis-Hastings in complex Bayesian hierarchical models \citep{betancourt2015hamiltonian} by more efficiently exploring the target distribution under local correlations among the parameters.
We set the leapfrog step in HMC to 30, but periodically tune the step-length parameter to ensure a pre-specified level of acceptance.
The rest of the parameters, $a, a_g,\lambda$ and $\lambda_g$ are updated using Metropolis-Hastings steps. 
Additional details are in Section S6 of the Supplementary Materials.
For posterior inference, We collect 5000 MCMC samples after 5000 burn-in samples.

\section{Simulation study}
\label{sec::simu}

We carry out two simulations to evaluate the performance of the proposed method against other competing scalar on image regression models.
They are illustrated as two cases.
The main difference is in the associated predictor process.
The common data generation scheme of the two cases is described as follows.
In general, we generate 3-dimensional image predictors $\bD_i(\bv)=\{D_{i,1}(\bv),D_{i,2}(\bv),D_{i,3}(\bv)\}$ such that $\bv\in\B$ and $\B$ is $20\times 20$ equispaced grid on $[0,1]^2$ with $q=3$.
Hence, we have a total of $p=20\times 20 = 400$ spatial locations. All of our results are based on 50 replications.
However, they are designed to study all the properties essential for our DTI application.
The simulation of case 1 mimics diffusion tensor imaging where each voxel is associated with a vector of diffusion direction (a unit vector), characterizing a tissue's anatomical structure. The simulation of case 2 mimics a more generic case. 
Furthermore, we assume that there are $G=3$ many groups. We set the group sizes to 50 or 100 to evaluate the effect of varying sample sizes.
We also vary the error variance $\sigma^2$ from 1, 5 to 10 for each simulation setting.
The true regression coefficients $\bbeta_g$'s are kept the same for both of the two simulation cases.
We first write $\bbeta_g(\bv)=r_g(\bv)\etam(\bv)$, where $r_g(\bv)=\|\bbeta_g(\bv)\|_2$ and $\etam(\bv)$ is the unit vector for voxel $\bv$.
Hence, 
The unit vectors are not varied with the group, but the magnitudes are.
Figure~\ref{fig::simumag} illustrates the magnitudes for different groups.

\begin{figure}[htbp]
\centering
\subfigure{\includegraphics[width = 0.32\textwidth]{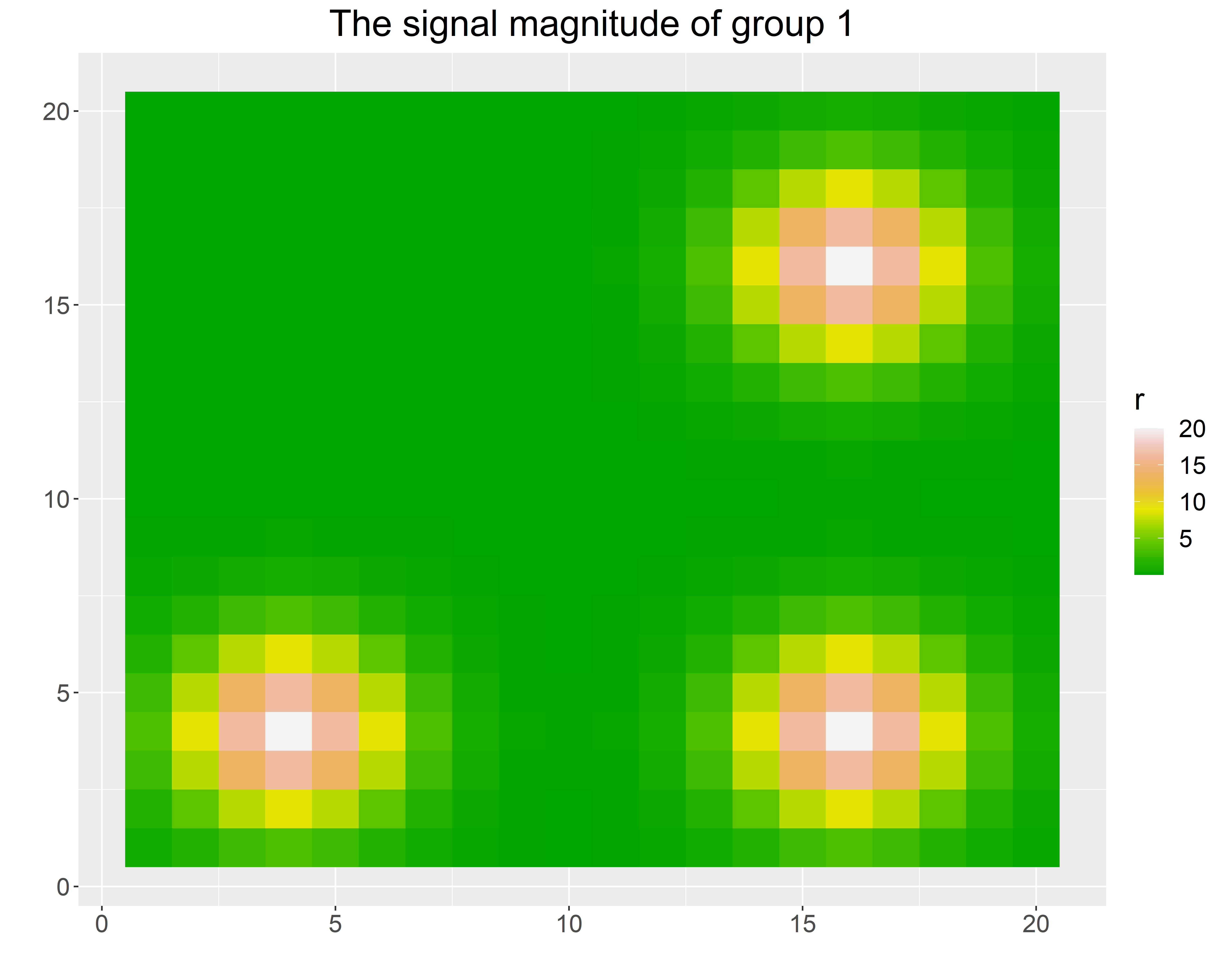}}
\subfigure{\includegraphics[width = 0.32\textwidth]{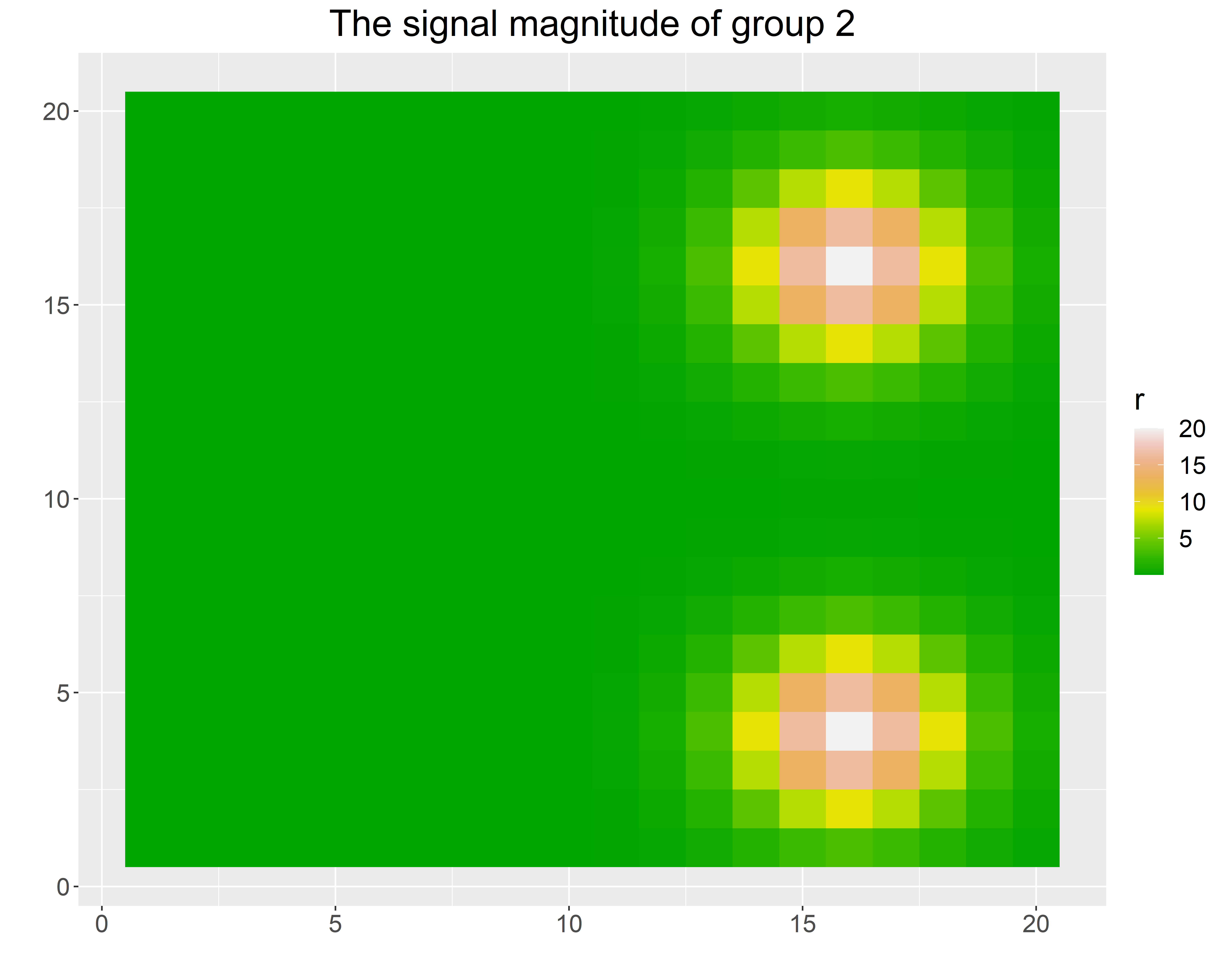}}
\subfigure{\includegraphics[width = 0.32\textwidth]{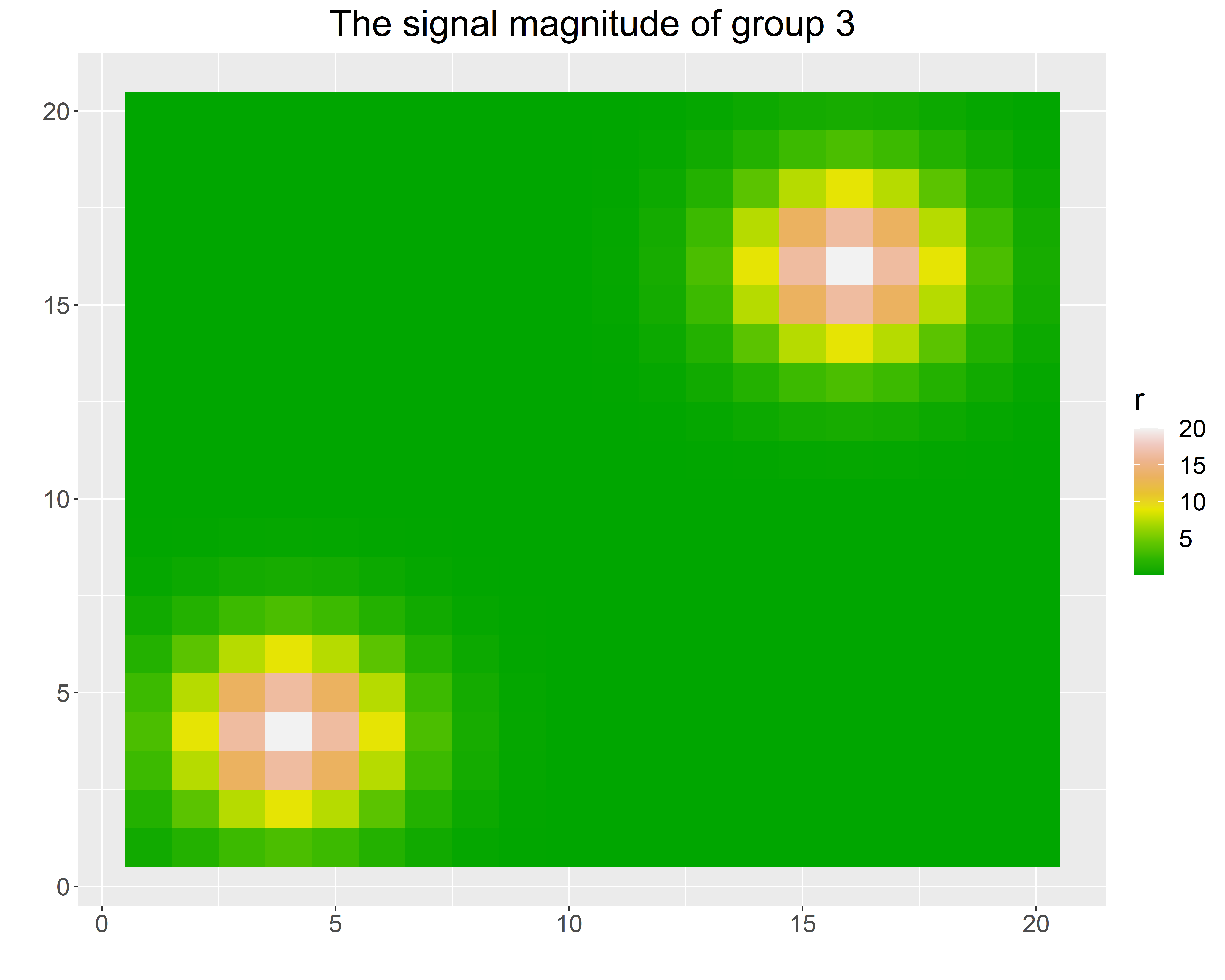}}
\caption{The magnitude $r_{g}$'s for different groups.}
\label{fig::simumag}
\end{figure}

For our proposed Bayesian method, the choices of hyperparameters are $c_1=c_2=d_1=d_2=0.1$, $\sigma_b^2=10^2, \nu=4, \bS=\bI_3$ and $R=5$. Except for $R$, all the other choices of hyperparameters lead to weakly informative priors.
The above choice of $R$ seems to work well both in simulations and real data applications. 
For all the Bayesian methods (i.e., ST2N, and STGP), we collect, 10000 MCMC samples and consider the last 5000 as post-burn-in samples for inferences. 
The other competing methods are not designed for vector-valued image predictors, nor in a multi-group setting.
Hence, we consider $D_{i,1}(\bv)$, $D_{i,2}(\bv)$ and $D_{i,3}(\bv)$ as three separate image predictors and fit the models for each group separately.
We also compare our method with STGP, LASSO, and functional principal component analysis (fPCA) \citep{jones1992displaying} based on empirical mean integrated squared error (EMISE) = $\frac{1}{3p}\sum_{j=1}^p\sum_{g=1}^3\|\hat{\bbeta}_g(\bv_j)-\bbeta_{0,g}(\bv_j)\|_2^2$.
The fPCA estimates are obtained as follows. After smoothing the images using {\tt fbps} function of {\tt refund} package \citep{refund}, the eigendecomposition of the sample covariance is computed. After that, lasso regularized principal component regression is performed. The leading eigenvectors that explain 99\% of the variation in
the sample images are used to get the final estimate.
To fit the LASSO penalized regression model, we applied the {\tt glmnet} package \citep{friedman2017package}.
We also compare with group-LASSO and PING \citep{roy2021spatial}, which are not presented as their performances are similar to LASSO and STGP respectively.

\subsection{Simulation case 1: General vector-valued predictors}
\label{sec::simucase1}
In simulation case 1, the datasets are generated such that they closely mimic our real dataset on the principal diffusion direction of diffusion tensor imaging.
The principal diffusion directions are unit vectors, and thus we rely on the von Mises-Fisher distribution to generate such data. 
We use von Mises-Fisher (vMF) distribution to generate random diffusion tensors. We use $\textrm{vMF}(\nu, \etam_{\bD})$ to denote a vMF distribution with the concentration parameter $\nu$ and the main direction $\etam_{\bD}$. Specifically, we generate $\bD_i(\bv)$ as $\bD_i(\bv)\sim \textrm{vMF}(30, \etam_{\bD}(\bv))$ using R package {\tt Rfast} \citep{Rfast} for each subject $i$ and each spatial location $\bv$ on the above-mentioned grid.
We first generate mean fiber directions $\etam_{\bD}(\bv)$'s as illustrated in panel (a) of Figure~\ref{fig::simudir}.
These mean directions maintain a spatial dependency pattern.
Table~\ref{tab::f1} compares the estimated mean square error as defined earlier across different methods and Figure~\ref{fig::esticoef} illustrates the norms of estimated regression coefficients as well as the similar-effect locations using the F-values, defined below Lemma S3 of the Supplementary Materials.
All the methods perform better when the sample size increases or the error variance decreases. 
In summary, our proposed ST2N-GP performs the best in comparison to the other methods. 
The poorest performance of LASSO is probably because it fails to incorporate spatial dependence.

\begin{figure}[htbp]
		\centering
		\subfigure[The mean direction $(\etam_{D}(\bv))$ to generate $\bD_i(\bv)$]{\includegraphics[width=60mm]{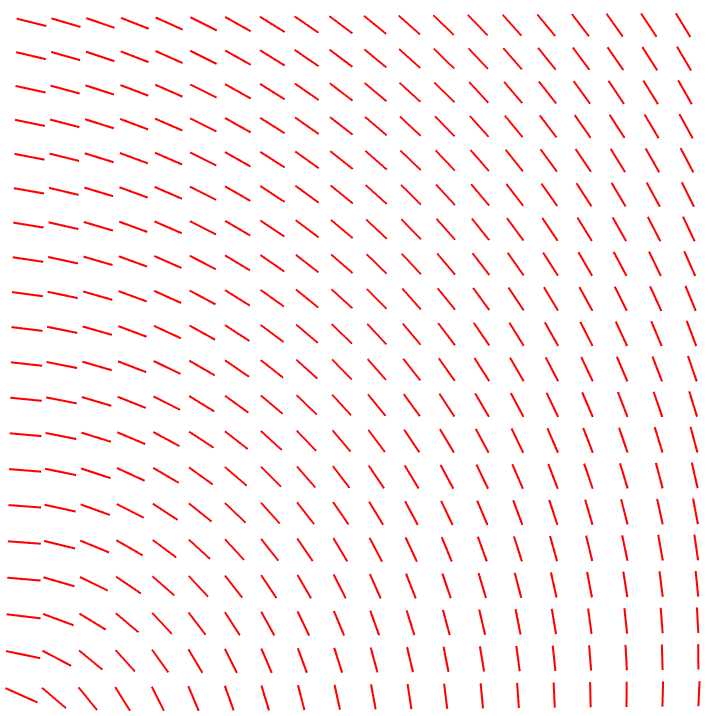}}
		\subfigure[Direction $\etam(\bv)$'s while constructing $\bbeta_{0,g}$'s]{\includegraphics[width=60mm]{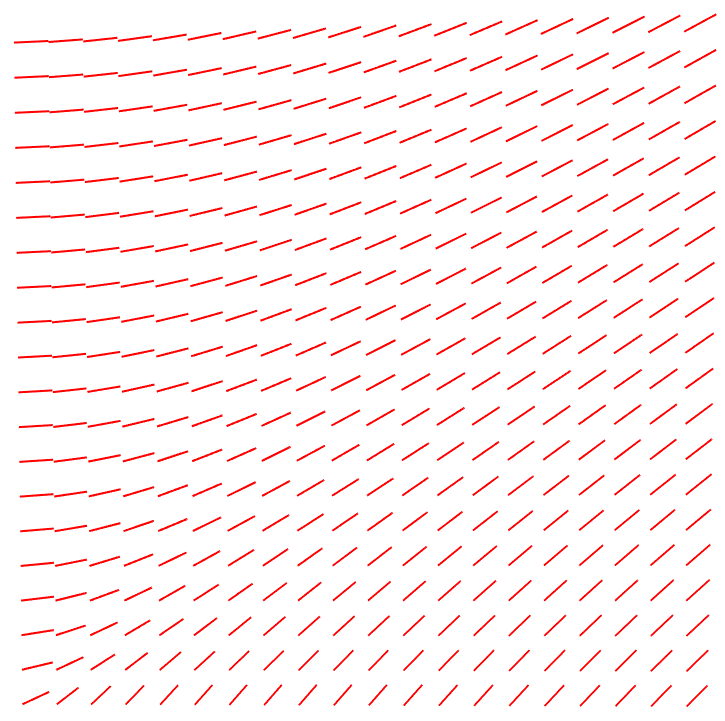}}
		\caption{The above shown directional unit vectors are generated while sampling the predictors $\bD_i(\bv)$'s and constructing the coefficients $\bbeta_{0,g}$'s.}
		\label{fig::simudir}
	\end{figure}
	
\begin{table}[htbp]
    \centering
    \caption{EMISE of the vector-valued slope $\bbeta(\bv)$ when the predictors are generated as described in simulation case 1 with varying sample size and true error variance. There are a total of three groups with sample size as mentioned in the first column. The results are based on 50 replications.} 
     \begin{tabular}{|r|r|rrrr|}
                \hline
                \multicolumn{2}{l|}{} &\multicolumn{4}{l}{Fitted Model} \\ \hline
                Group-specific &Error  & ST2N-GP  & STGP&LASSO&FPCA\\ 
                sample size &variance&&&&\\
                \hline
                
              &1 & 0.62 & 0.89 & 6.85 & 1.54 \\ 
  50&5 & 0.88 & 1.19 & 6.20 & 1.42 \\ 
  &10 & 1.35 & 1.42 & 7.13 & 1.56 \\ 
                
                \hline
                
               & 1 & 0.28 & 0.44 & 4.27 & 1.21 \\ 
  100&5 & 0.37 & 0.98 & 4.36 & 1.33 \\ 
  &10 & 0.81 & 1.03 & 6.96 & 1.37 \\ 
                \hline
            \end{tabular}
    \label{tab::f1}
\end{table}

\begin{figure}[htbp]
\centering
\subfigure{\includegraphics[width = 0.3\textwidth]{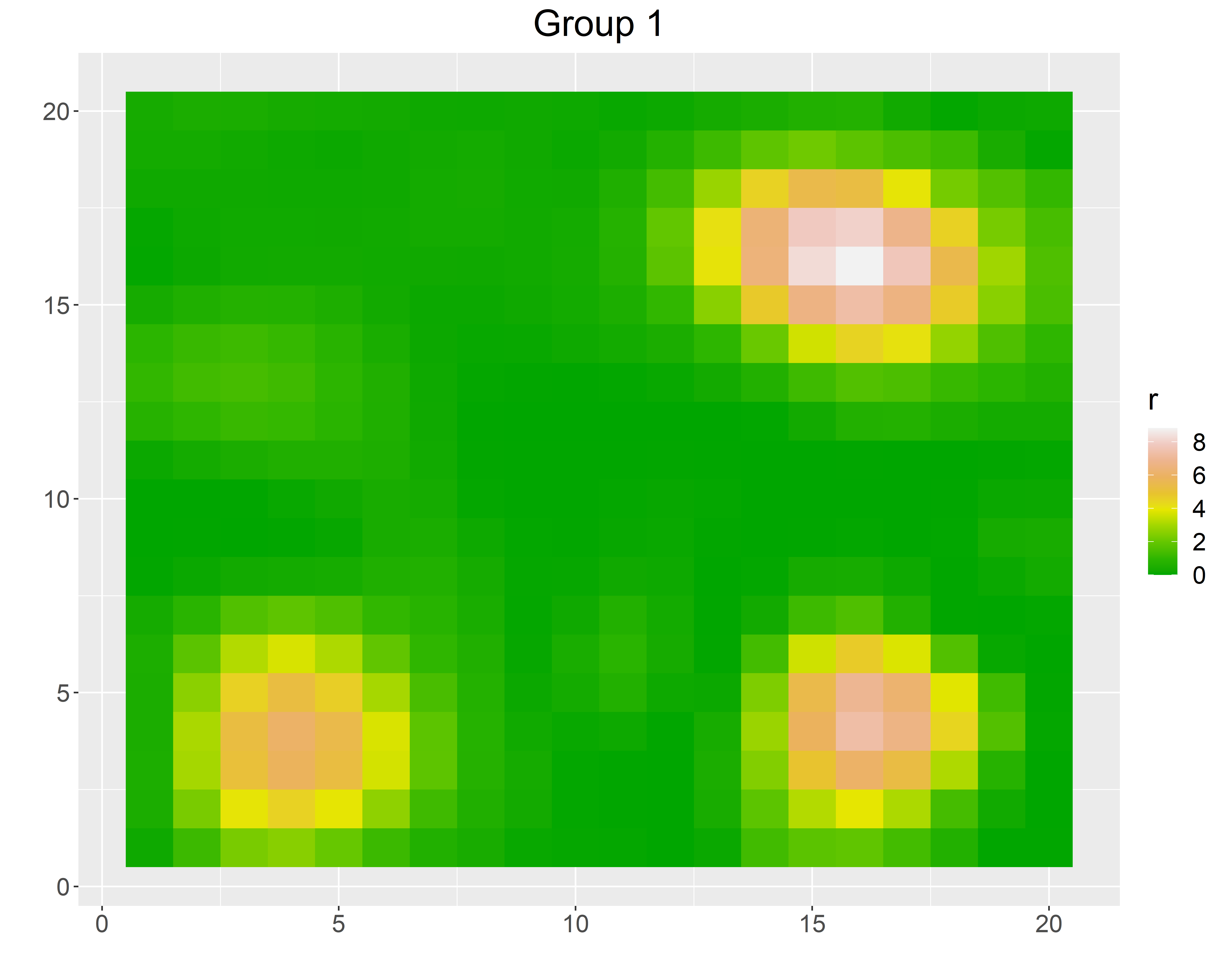}}
\subfigure{\includegraphics[width = 0.3\textwidth]{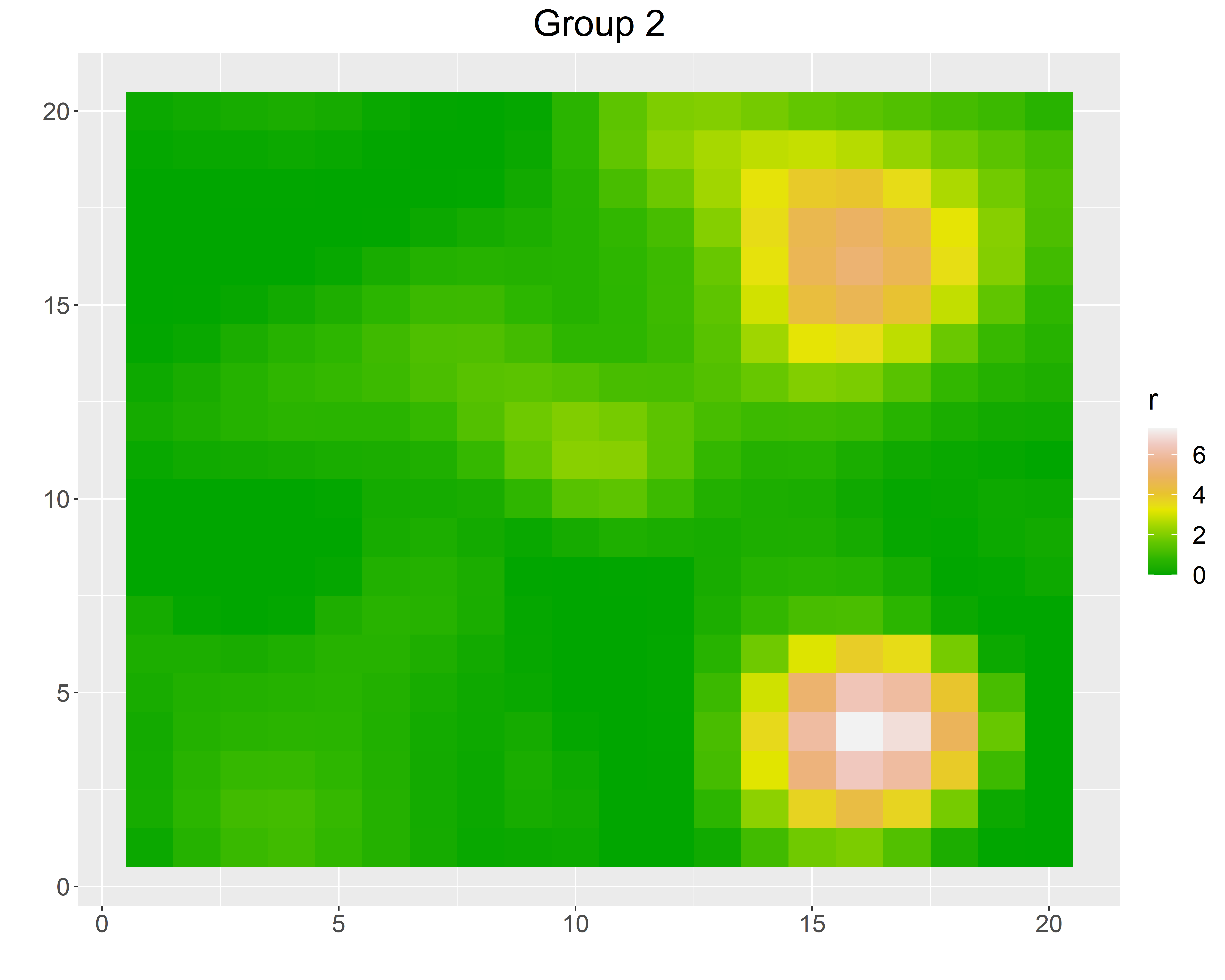}}
\subfigure{\includegraphics[width = 0.3\textwidth]{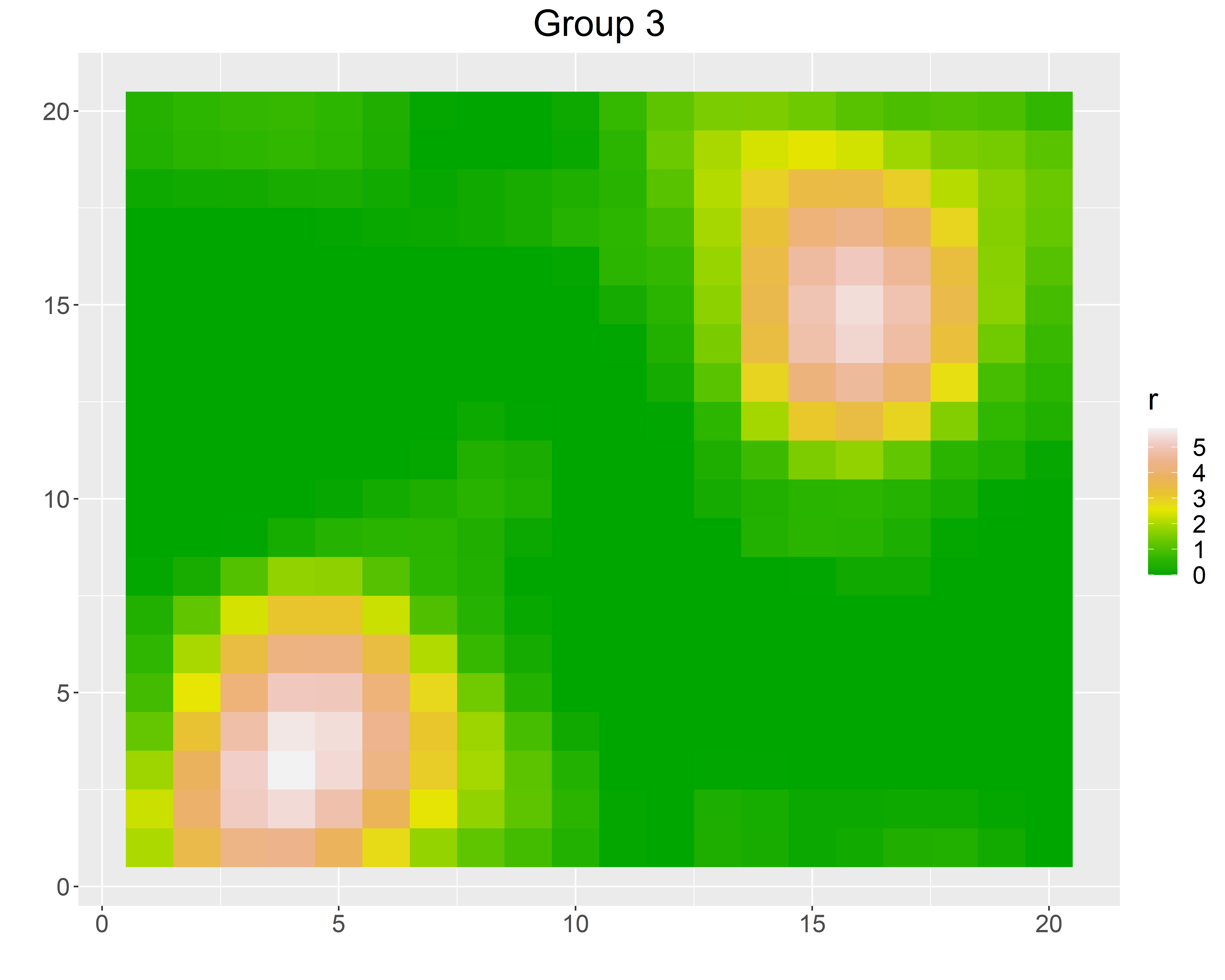}}
\subfigure{\includegraphics[width = 0.3\textwidth]{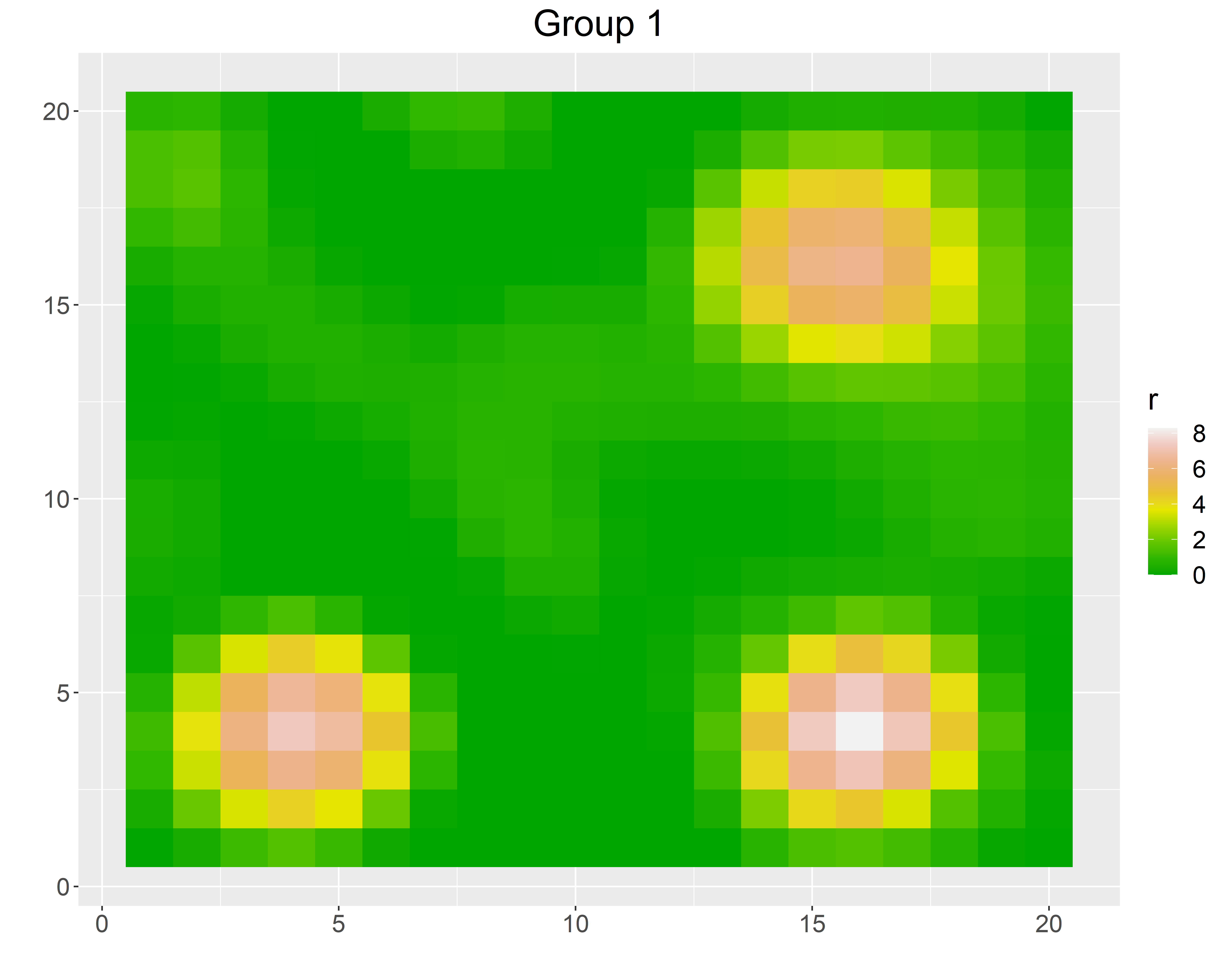}}
\subfigure{\includegraphics[width = 0.3\textwidth]{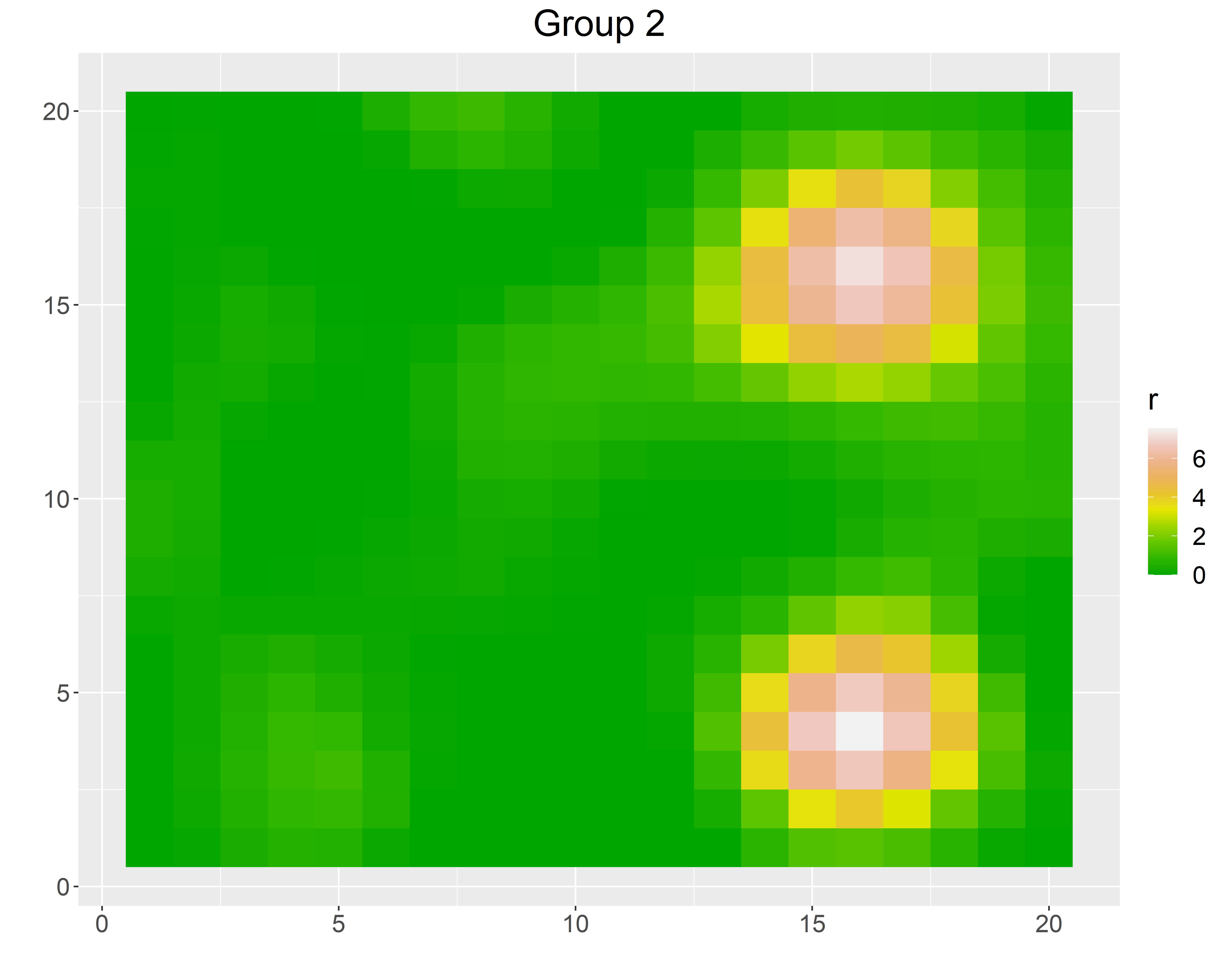}}
\subfigure{\includegraphics[width = 0.3\textwidth]{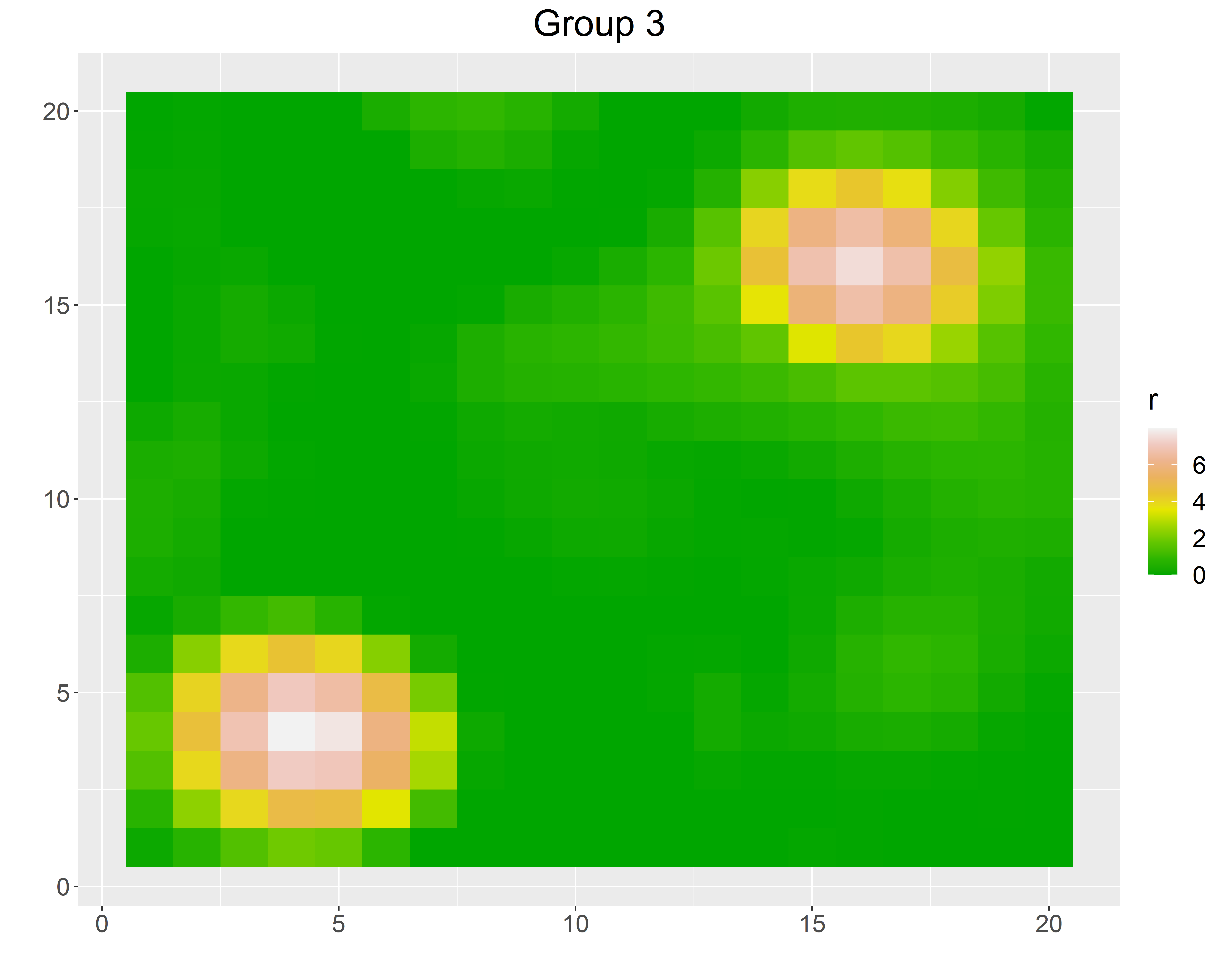}}
\caption{Estimated $\|\bbeta_g\|_2$'s when the error variance is $1$ for two sample sizes. The first row corresponds to a sample size of 50 in each group, and the second row is for 100. }
\label{fig::esticoef}
\end{figure}

\subsection{Simulation case 2: Directional-valued predictors}
\label{sec::simucase2}
The simulated datasets $\bD_i(\bv)$'s for this case are generated as follows.
We first generate $\bD'_i(\bv)=\{D_{i,1}'(\bv),D_{i,2}'(\bv),D_{i,3}'(\bv)\}$, following independent Gaussian processes with mean $\eE(D_{i,1}'(\bv))=\eE(D_{i,2}'(\bv))=\eE(D_{i,3}'(\bv))=0$ and exponential covariance kernels with different band widths, specifically, $\eE(D_{i,1}'(\bv)D_{i,1}'(\bv'))=\exp(-\|\bv-\bv'\|_2/3)$, $\eE(D_{i,2}'(\bv)D_{i,2}'(\bv'))=\exp(-\|\bv-\bv'\|_2/5)$, and $\eE(D_{i,3}'(\bv)D_{i,3}'(\bv'))=\exp(-\|\bv-\bv'\|_2/7)$.
Finally, we generate a positive definite matrix $\bPsi$ and set $\bD_i(\bv)=\bPsi\bD_i'(\bv)$. 
Thus, marginally at each $\bv$, we have $V\{\bD_i(\bv)\}=\bPsi\bPsi^T$.


\begin{table}[htbp]
    \centering
    \caption{EMISE of the vector-valued slope $\bbeta(\bv)$ when the predictors are generated as described in simulation case 2 with varying sample size and true error variance. There are a total of three groups with sample size as mentioned in the first column. The results are based on 50 replications.} 
     \begin{tabular}{|r|r|rrrr|}
                \hline
                \multicolumn{2}{l|}{} &\multicolumn{4}{l}{Fitted Model} \\ \hline
                Group-specific &Error  & ST2N-GP  & STGP&LASSO&FPCA\\ 
                sample size &variance&&&&\\
                \hline
                
                  &1 & 0.11 & 0.28 & 10.57 & 0.80 \\ 
  50&5 & 0.12 & 0.34 & 10.45 & 0.81 \\ 
  &10 & 0.18 & 0.39 & 10.46 & 0.81 \\ 
                
                \hline
                
                &1 & 0.05 & 0.12 & 2.69 & 0.77 \\ 
  100&5 & 0.05 & 0.12 & 2.70 & 0.77 \\ 
  &10 & 0.05 & 0.14 & 2.78 & 0.81 \\ 
                \hline
            \end{tabular}
    \label{tab::f2}
\end{table}

\begin{figure}[htbp]
\centering
\subfigure{\includegraphics[width = 0.3\textwidth]{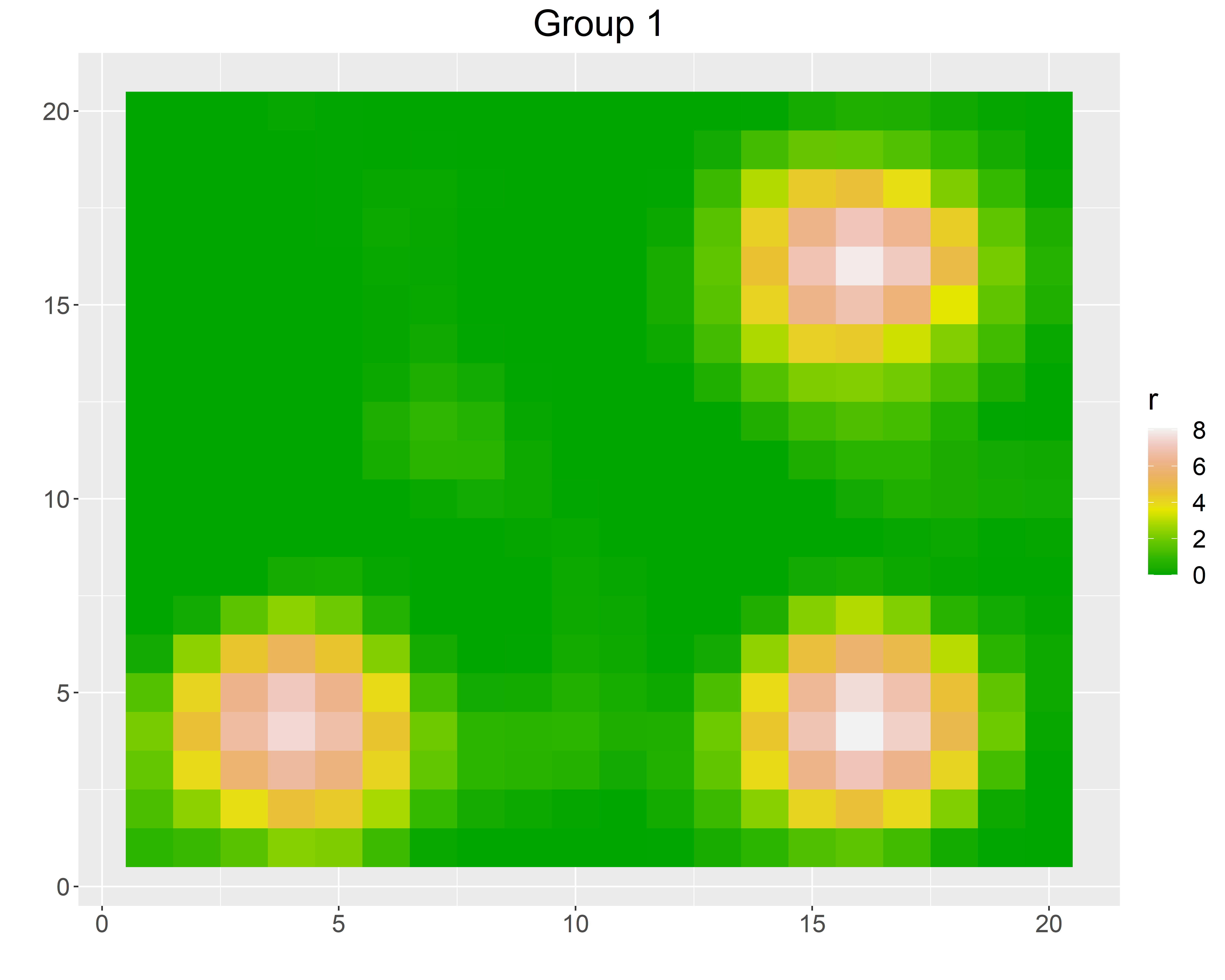}}
\subfigure{\includegraphics[width = 0.3\textwidth]{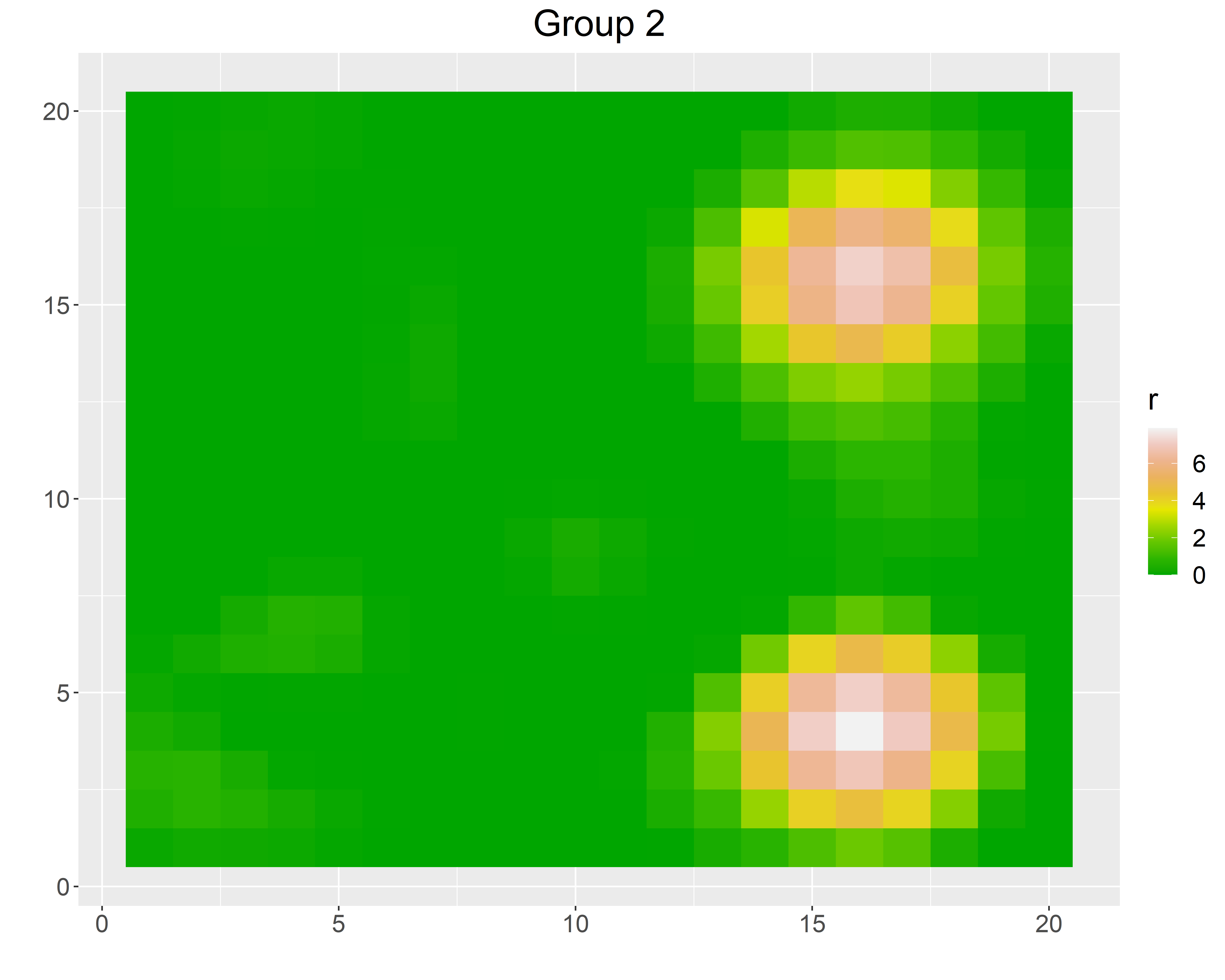}}
\subfigure{\includegraphics[width = 0.3\textwidth]{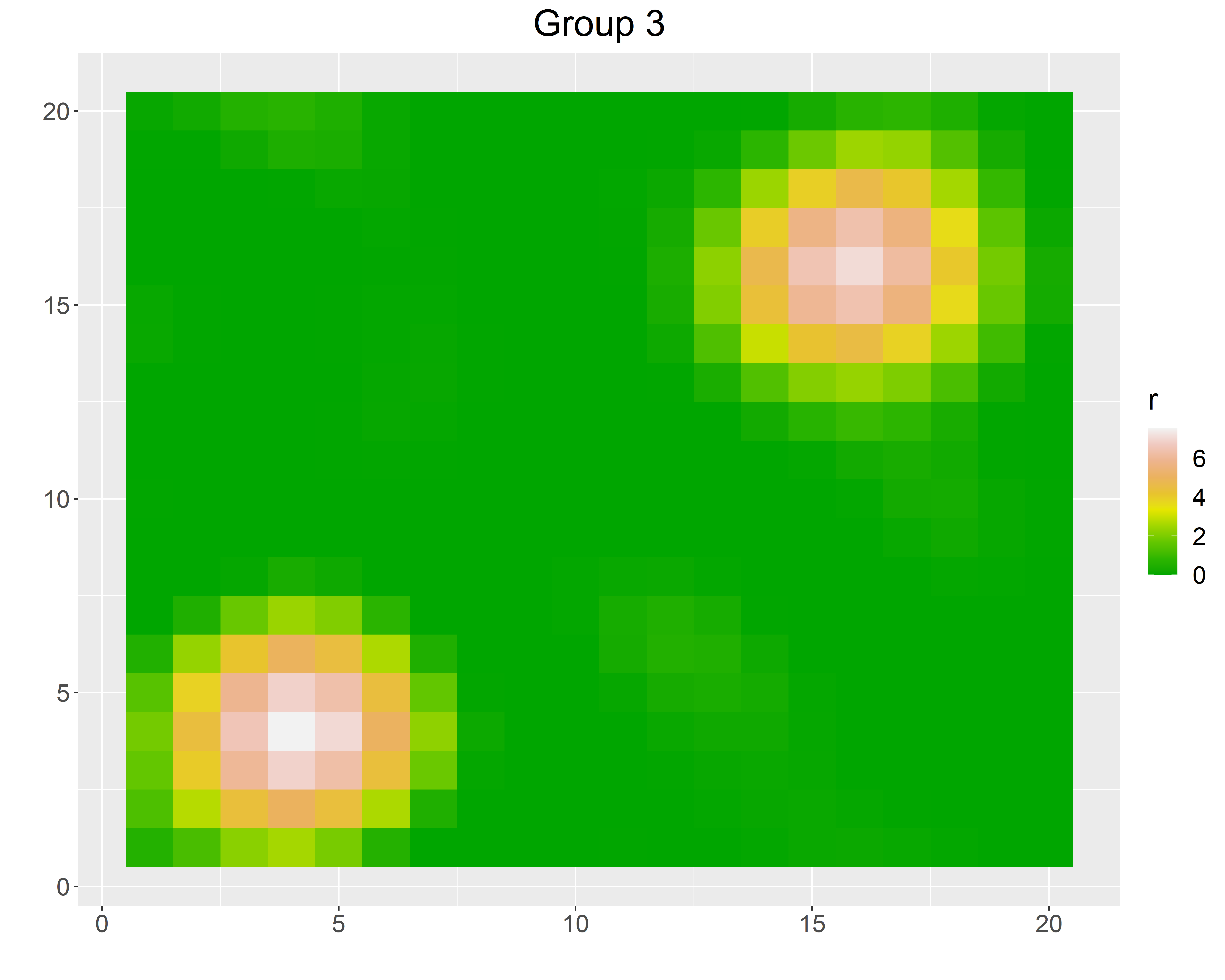}}
\subfigure{\includegraphics[width = 0.3\textwidth]{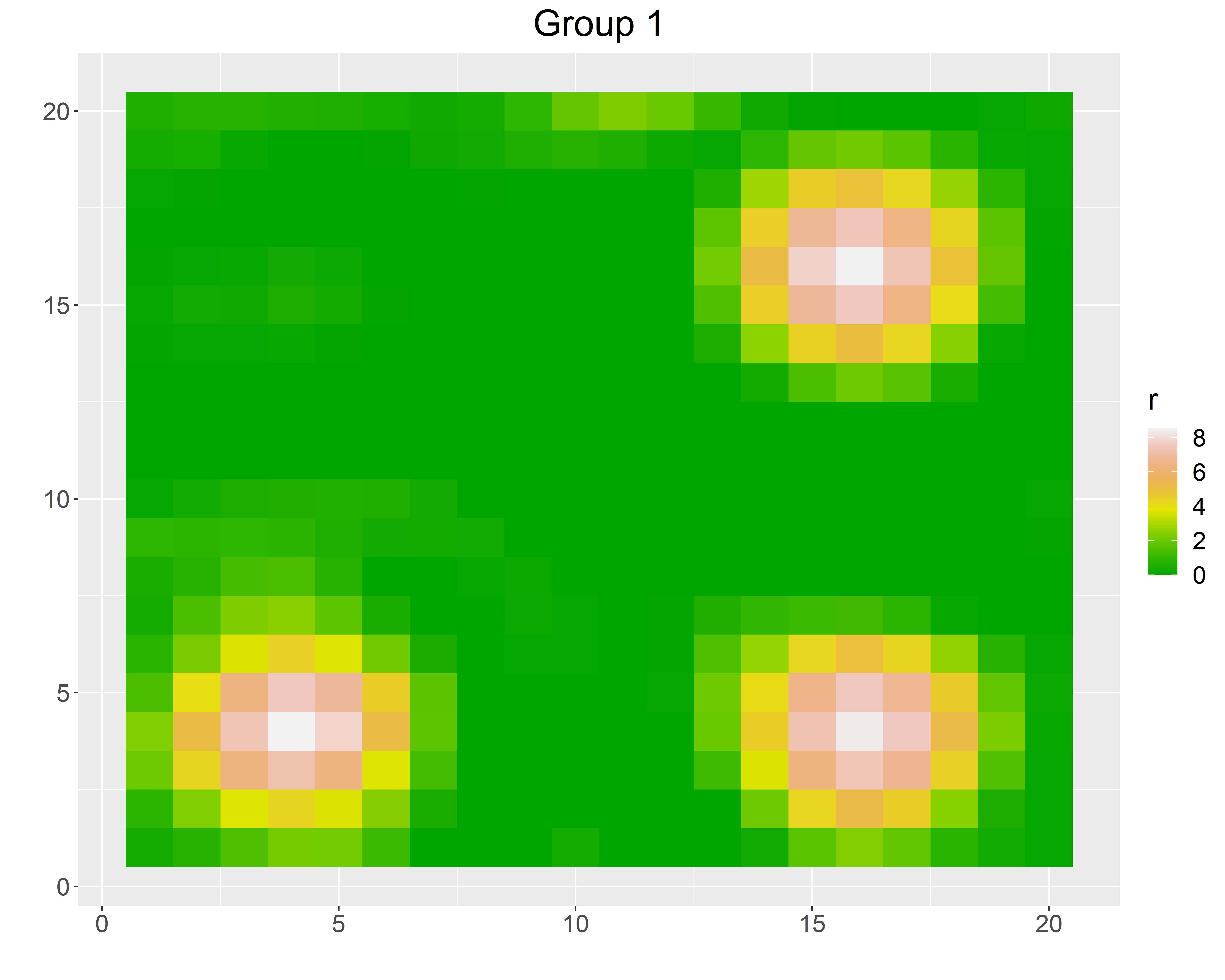}}
\subfigure{\includegraphics[width = 0.3\textwidth]{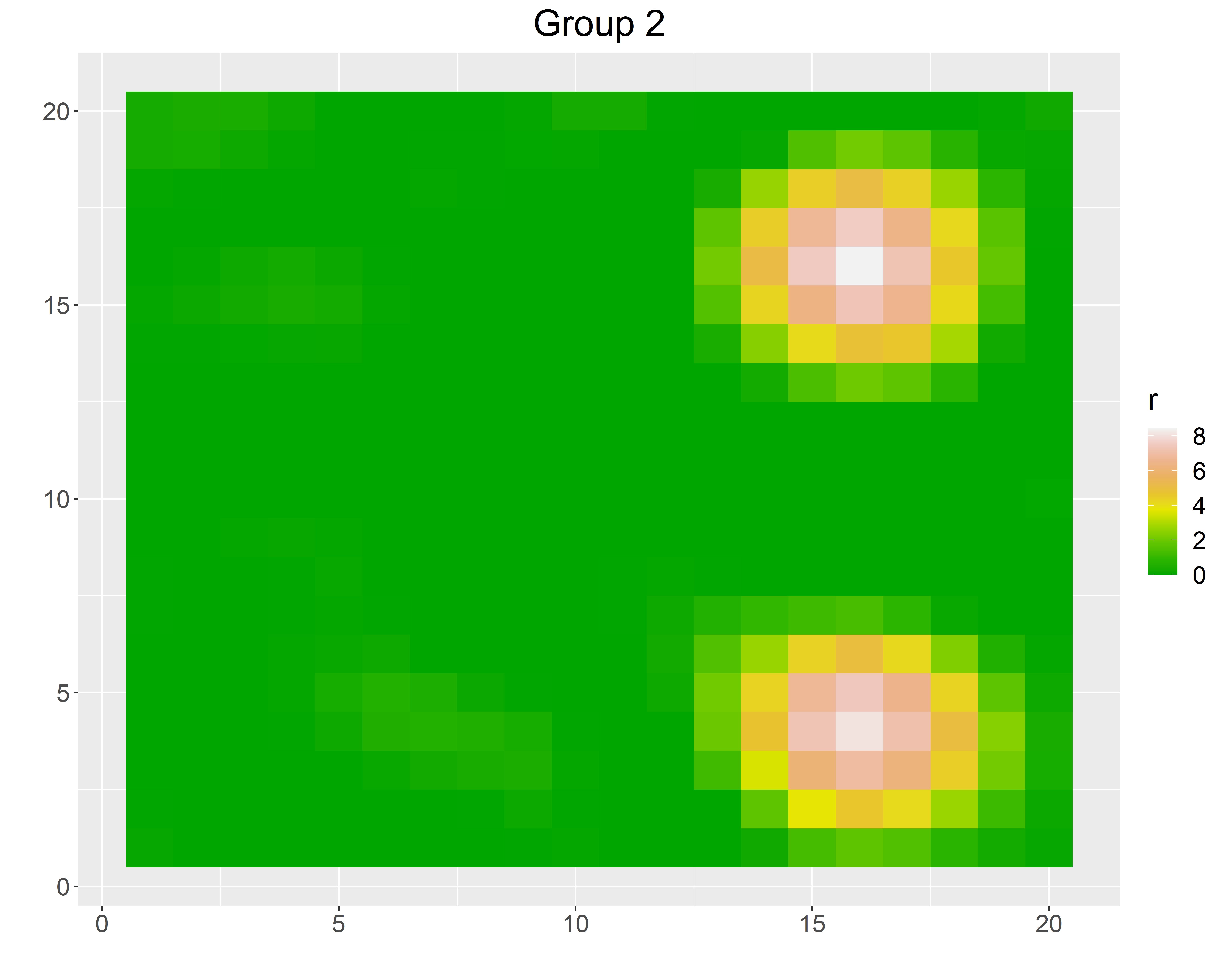}}
\subfigure{\includegraphics[width = 0.3\textwidth]{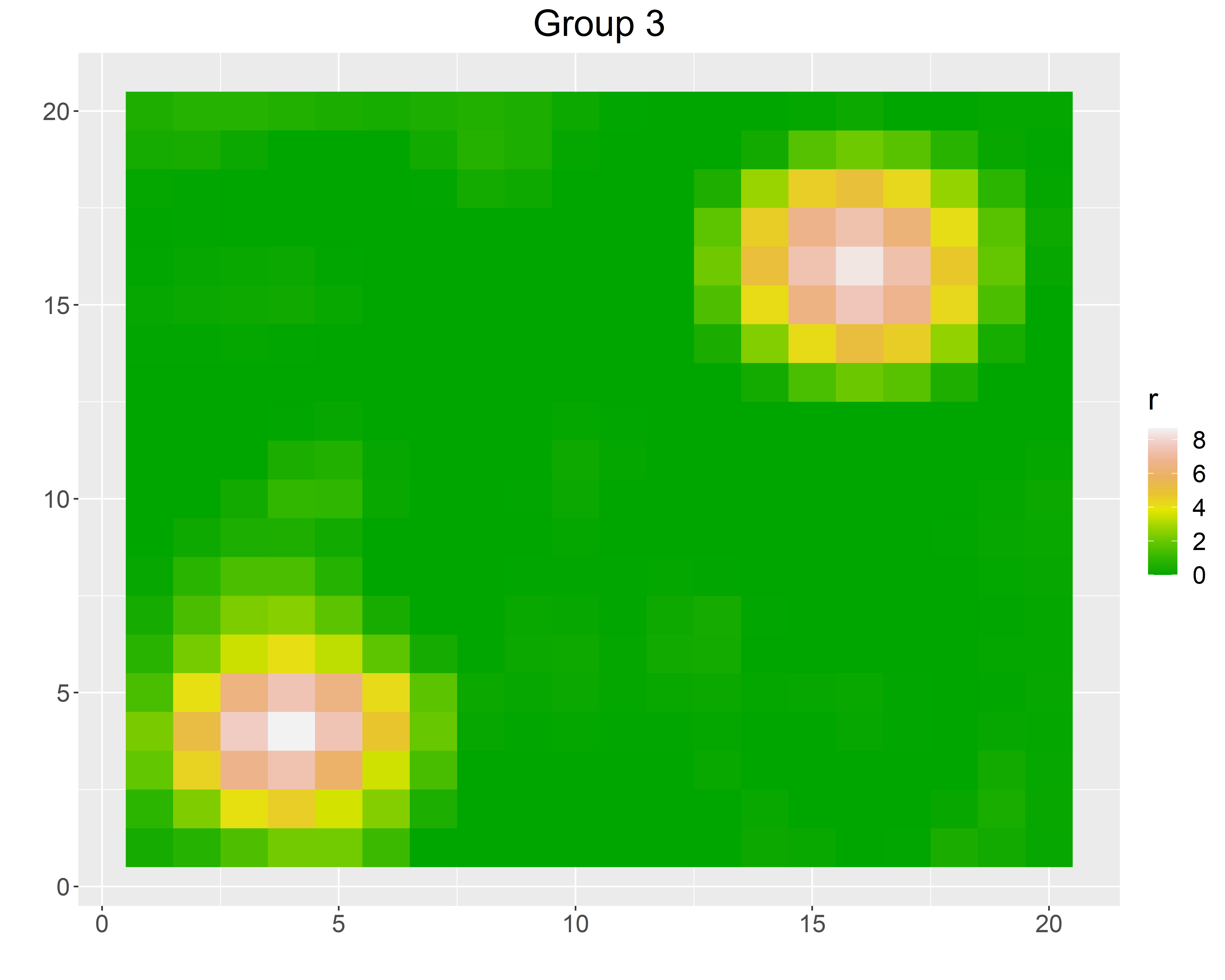}}
\caption{Estimated $\|\bbeta_g\|_2$'s when the error variance is $1$ for two sample sizes. The first row corresponds to a sample size of 50 in each group, and the second row is for 100. }
\label{fig::esticoefother} 
\end{figure}

The performances of all the methods are compared in Table~\ref{tab::f2} and the norms of the estimated coefficients are shown in Figure~\ref{fig::esticoefother}.
We again find that the proposed method is doing much better than all the other competitors. 
The relatively poor performance of other methods may be primarily attributed to the dependence among the components in $\bD_i$'s.
With an increasing sample size and smaller error variance, all the methods perform much better, as observed in case 1.
Furthermore, the performances are usually better when the error variance is lower, which corresponds to a greater signal-to-noise ratio.

\section{Application to the ADNI dataset}
\label{sec::real}

From the ADNI database, we collect data belonging to three disease groups: Alzheimer's disease, early and late mildly cognitive impaired (MCI) which are abbreviated as AD, EMCI, and LMCI. ADNI has modifed the image acquisition protocol several times.
To maintain homogeneity, in this paper, we focus on ADNI-2, which was initiated in September 2011 and continued for five years \citep{aisen2010clinical}. Based on the guidelines from the Centers for Disease Control and Prevention relying on \cite{xu2010national,heron2010national,tejada2013mortality,hurd2013monetary,matthews2019racial}, the symptoms of Alzheimer's disease often first appear after age 60.
Motivated by this, we included older than 60 subjects to create our study cohort.  This led to 30 AD, 35 EMCI, and 34 LMCI subjects.
Several studies have established associations between gender, age, and APOE status with mini-mental state examination (MMSE)  \citep{winnock2002longitudinal, qian2021association,matthews2012examining,piccinin2013coordinated,pradier2014mini}.
We thus apply our proposed model \eqref{eq: multimodel} after adjusting for these predictors.
Our final model for the real data analysis is,
\begin{align}
    {\textrm{MMSE}}_{i}=&b_{0,g}+b_{\textrm{Age}}\textrm{Age}_i+b_{\textrm{M}}Z_{i,\textrm{Male}}\nonumber\\&+b_{\textrm{Allele2}}Z_{\textrm{Allele2},i}+b_{\textrm{Allele4}}Z_{\textrm{Allele4},i}+p^{-1/2}\sum_{j=1}^p\bD_{i}(\bv_j)^T\bbeta_g(\bv_j) + e_{i}, \textrm{ for } i\in g\nonumber\\
    e_{i}\sim&\Normal(0, \sigma^2), \quad g=\textrm{AD, EMCI, LMCI} \label{eq:realmodel}
\end{align}
where the dummy variables $Z_{\text{Allele2}}$, $Z_{\textrm{Allele4}}$ standing for Alleles 2 and 4 for the two alleles APOEallele2 and APOEallele4 together setting Allele 3 as a reference group for each of the two cases.
Similarly, the dummy variable $Z_{\textrm{M}}$ indicating male gender is introduced, setting females as the reference group.
We put the ST2N-GP prior on $\bbeta_g$'s.

While applying the model to the corresponding diffusion tensor data of the ADNI, we consider 
the principal diffusion direction as our predictor. It is represented by the first eigenvector of the diffusion coefficient matrices. Hence, $\bD_i(\bv)$'s are 3-dimensional unit vectors.
In our modeling framework, we thus have $\bD_i(\bv)^T\bbeta(\bv)=\|\bbeta(\bv)\|_2\cos(\theta_{\bD_i(\bv),\bbeta(\bv)})$.
Hence, $\|\bbeta(\bv)\|_2$ now represents overall magnitude of effect from voxel $\bv$ and $\theta_{\bD_i(\bv),\bbeta(\bv)}$ stands for individual specific direction of effect. We separately apply our proposed model on five brain masks which are corpus callosum (CC), right corticospinal tract (CST), left CST, right frontal-parietal-temporal (FPT) and left FPT. Among these, CC is the largest, with around 13K voxels. The other regions on average consist of $4,000$-$7,000$ voxels.

For space constraints, Figure~\ref{fig::esticoefreal} illustrates the important regions for three slices in the CST region only. These three slices are from the frontal, middle, and posterior sections respectively. In this figure, we also demonstrate the regions that are identified as important for all the groups as `Shared'. The rest of the estimates are illustrated similarly in Figures S3 and S4 of the Supplementary Materials for CC and FPT, respectively. In our estimates, we observe the following interesting characteristics. Estimates associated with the EMCI group usually have the largest number of important regions, then LMCI, and lastly AD with the smallest number of selected regions.
The lower part of each tract has a greater number of important areas.
This may be due to the higher white matter density of that part.
Corpus callosum has the maximum percentage of important regions, which may also be attributed to the greater density of white matter. 
In Table~\ref{tab::resultrealreg}, gender turns out to be important for all the cases, and age has an important negative effect in the two cases. The uneven proportion of important regions for different disease groups justifies our multi-group model with group-specific regression effects.
The effect of APOE4 turns out to be significant only for corpus callosum.
There are thus several interesting findings in our analysis using the white matter fiber alignment directly.

\begin{table}[ht]
\centering
\caption{Estimated effects of different scalar predictors. In the first column, the region names are provided. For each predictor, we report the posterior mean along with its lower and upper credible bands.}
\resizebox{.7\textwidth}{!}{\begin{tabular}{r|r|rrrr}
  \hline
& & Gender & Age & Allele 3 & Allele 4 \\ 
  \hline
&Estimate & 1.36 & -0.40 & 0.94 & -0.66 \\ 
  Corpus Callosum & Lower CI & 0.62 & -0.76 & -0.84 & -1.37 \\ 
  & Upper CI & 2.15 & -0.06 & 2.62 & -0.01 \\ 
  \hline
  &Estimate & 0.96 & -0.10 & 0.77 & 0.64 \\ 
  Right CST & Lower CI & 0.23 & -0.49 & -0.80 & -0.39 \\ 
  &Upper CI & 1.70 & 0.25 & 2.27 & 1.53 \\
  \hline
 & Estimate & 1.06 & -0.19 & 1.11 & 0.00 \\ 
  Left CST & Lower CI & 0.36 & -0.52 & -0.27 & -0.74 \\ 
  &Upper CI & 1.72 & 0.13 & 2.47 & 0.97 \\
  \hline
  &Estimate & 1.43 & -0.51 & 1.18 & 0.12 \\ 
  Right FPT & Lower CI & 0.69 & -0.88 & -0.48 & -0.66 \\ 
  &Upper CI & 2.23 & -0.12 & 2.77 & 0.89 \\ \hline
  &Estimate & 0.96 & -0.07 & 1.23 & 0.18 \\ 
  Left FPT &Lower CI & 0.29 & -0.44 & -0.23 & -0.62 \\ 
  &Upper CI & 1.65 & 0.32 & 2.77 & 0.96 \\ 
   \hline
\end{tabular}}
\label{tab::resultrealreg}
\end{table}

\begin{figure}[htbp]
\centering
\includegraphics[width=0.8\textwidth]{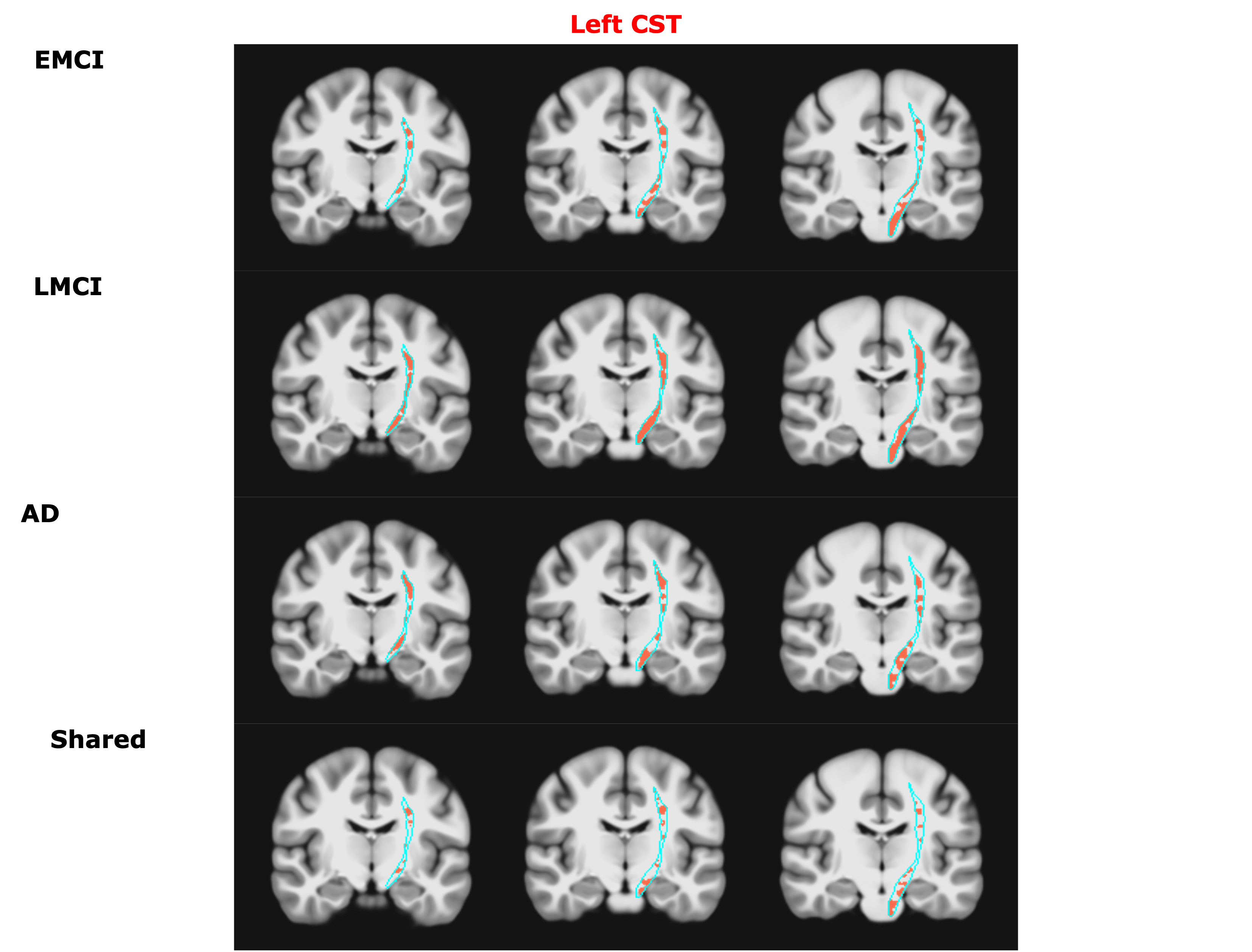}
\includegraphics[width=0.8\textwidth]{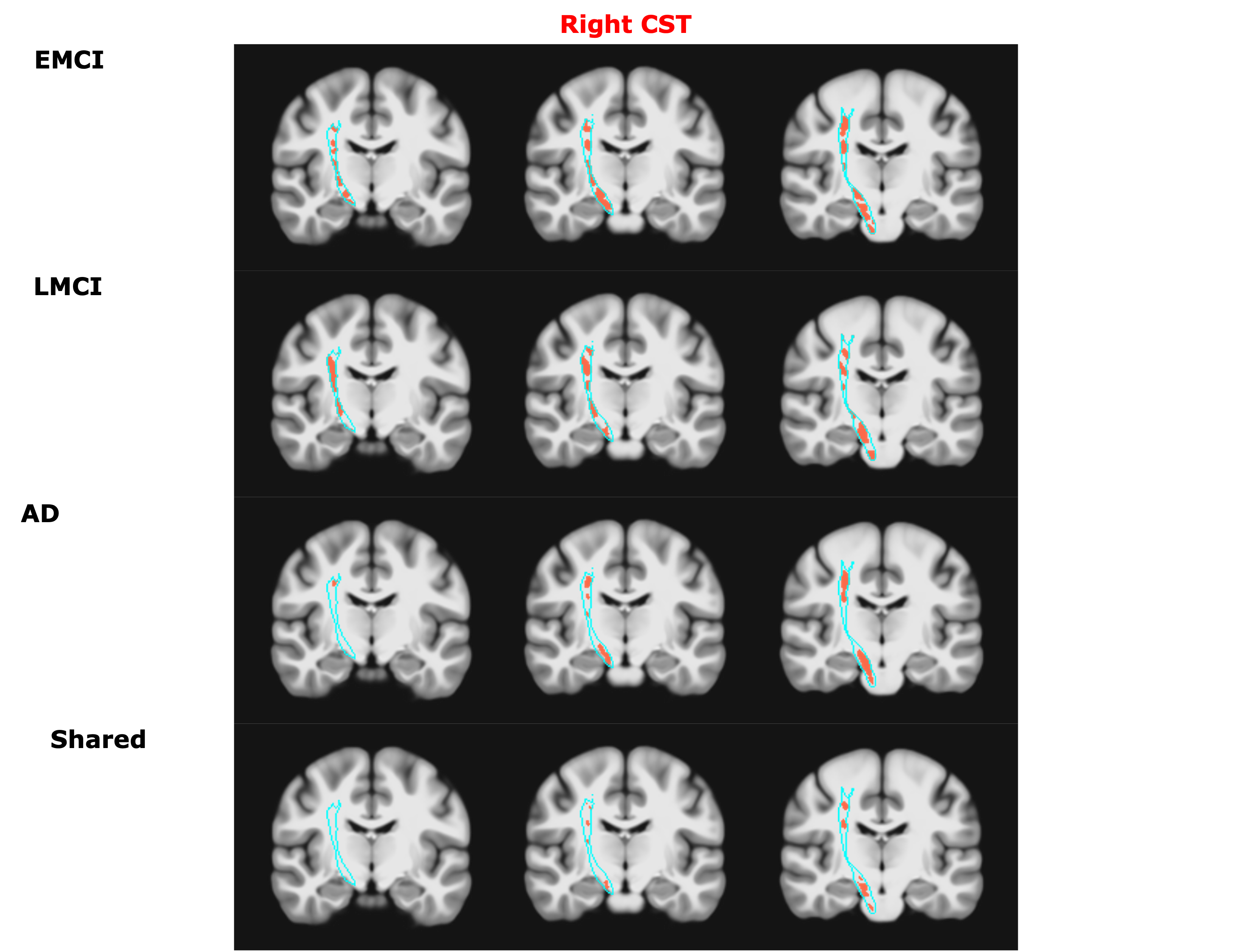}
\caption{The red voxels are the brain regions with non-zero $\|\hat{\bbeta}_g\|_2$'s. The CST regions are delineated with light blue lines in all the images on typical slices of the brain. The last row illustrates the voxels with non-zero $\|\hat{\bbeta}_g\|_2$ for all $g$. The three columns are three slices, ranging from the frontal to posterior section.}
\label{fig::esticoefreal}
\end{figure}

We also compare ST2N with STGP in terms of out-of-sample prediction in Table~\ref{tab::resultreal}.
STGP is chosen as a competitor since it was the second-best performer in the simulation after ST2N in estimating the true regression coefficient.
The available code for STGP only allows imaging predictors. For a fair comparison, we thus consider a reduced model with only vector-valued DTI image as predictor and MMSE as response. As in Section~\ref{sec::simu}, we apply STGP for each group separately and consider each coordinate of principal diffusion direction data as a separate imaging predictor. For each group, we leave out 5 subjects and run the models with the rest of the subjects. ST2N overwhelmingly outperforms STGP. The poor performance of STGP may be due to group-specific separate estimations with small sample sizes. On the other hand, our multi-group model fits the three regression coefficients simultaneously. Hence, for multi-group data, our proposed model is more suitable.
In the next subsection,
we also fit a nonparametric analog of the linear model and compare them.


\begin{table}[ht]
\centering
\caption{Out of sample prediction MSE, averaged over all the three groups for different brain masks.}
\resizebox{1\textwidth}{!}{\begin{tabular}{r|rrrrr}
  \hline
&Corpus Callosum & Right CST & Left CST & Right FPT & Left FPT \\ 
  \hline
  ST2N-GP& 53.92 & 35.28 & 117.35 & 38.43 & 143.74 \\ 
STGP  &745.73  & 746.83 & 744.43 & 722.63& 725.34\\ 
   \hline
\end{tabular}}
\label{tab::resultreal}
\end{table}

\subsection{Nonparametric analog model}
\label{sec:nonpara}

Here, we consider a non-parametric analog of the model, following the lines of \cite{ravikumar2009sparse}. Specifically, we consider the following model,
\begin{align}
    {\textrm{MMSE}}_{i}=&p^{-1/2}\sum_{j=1}^pf_{j,g}(\bD_{i}(\bv_j)) + e_{i},\nonumber\\
    f_{j,g}(\bD_{i}(\bv_j))=&\sum_{k=1}^K\theta_{g,k}(\bv_j)\exp\left(-\frac{\|\bD_{i}(\bv_j)-\bq_{k}\|_2^2}{2h_g^2}\right)I\{\|\bD_{i}(\bv_j)-\bq_k\|_2<3h_g\}, \textrm{ for } i\in g,\nonumber\\
    e_{i}\sim&\Normal(0, \sigma^2), \quad g=\textrm{AD, EMCI, LMCI}, \label{eq:nonparamodel}
\end{align}
where to model the non-parametric function, we consider Gaussian radial bases with centers $\bq_k$'s. Since $\bD_{i}(\bv)$'s are supported on a sphere, the centers also are supported in the same way. Specifically, we consider uniformly distributed Fibonacci lattice points as the centers. 
The bandwidth parameter $h_g$ is set as $\frac{\sum_{i,j:a_{i,j,k}\neq 0} a_{i,j,k}}{\sum_{i,j,k} 1\{a_{i,j,k}\neq 0\}}$, where $a_{i,j,k}=\sqrt{2}\|\bD_{i}(\bv_j)-\bq_k\|_2$ for $i\in g$.
Instead of a fully nonparametric model, these models are easier to interpret, as discussed in \cite{ravikumar2009sparse}.

We put multi-group ST2N prior on $\btheta_{g}(\bv_j)=\{\theta_{g,k}(\bv_j)\}_{k=1}^K$. Fibonacci lattice points are evaluated for a given value of $K$. For a fixed $K$, the implementation of the nonparametric model is the same as the linear model.
Here, we select $K$ via cross-validation from a pre-specified grid $K=\{8,10,15,20\}$. Due to high computational need, we run this step for 10 pairs of training and test sets.
The training set contains a randomly selected block of dimension $20\times 20\times 20$ and the test is another randomly selected block of dimension, $10\times 10\times 10$. Based on this strategy, 10 is selected as the optimal choice.

The nonparametric model puts forward an even better prediction accuracy, as shown in Table~\ref{tab::resultreal1} in comparison to Table~\ref{tab::resultreal}. In general, our proposed multi-group ST2N model has two key advantages: 1) the mechanism to have the group-specific effects borrow structural characteristics, and 2) its ability to use samples from all the groups simultaneously. This may have led to the improved performance of our method over STGP.

However, this nonparametric analog model is specifically designed for analyzing directional-valued predictor data. Hence, future work is needed to make it generally applicable to other data types. Furthermore, studying the associated theoretical results would also be of interest.

In Section S4 of the Supplementary Materials, we include limited simulation results where this nonparametric model is applied to the direction-valued predictor data from Simulation setting 2 of Section~\ref{sec::simucase2}.
Specifically, we plot $\sqrt{\sum_{k=1}^K\theta^2_{g,k}(\bv_j)}$ for each group $g$ and spatial locations $\bv_j$'s. The results indicate that the model successfully identifies the important voxels while comparing with true non-zero locations in $\bbeta_{0,g}$'s.

\begin{table}[ht]
\centering
\caption{Out of sample prediction MSE, averaged over all the three groups for different brain masks under the same setting as in Table~\ref{tab::resultreal}.}
\resizebox{.9\textwidth}{!}{\begin{tabular}{r|rrrrr}
  \hline
&Corpus Callosum & Right CST & Left CST & Right FPT & Left FPT \\ 
  \hline
Nonpara ST2N& 11.92 & 5.23 & 50.22 & 12.21 & 70.12 \\
   \hline
\end{tabular}}
\label{tab::resultreal1}
\end{table}

\section{Discussion}
\label{sec::discussion}
In this paper, we developed a novel spatial variable selection prior for the vector-valued image on scalar regression.
We further extended the model for the multi-group setting and implemented an efficient MCMC algorithm.
The utility of the proposed method was demonstrated in a DTI data application.
Methodologically, our proposed prior can be easily applied to a sparse nonparametric additive scalar on the image regression model generalizing \eqref{eq:nonparamodel}, which may be formulated as, $y_i\sim\Normal\left(\sum_{\bv\in\mathcal{V}}\eta(\bv)f_{\bv}(x_{i}(\bv)),\sigma^2\right)$ which is a generalization of the nonparametric analog model from Section~\ref{sec:nonpara}.
It is possible to rewrite the mean as $\sum_{\bv\in\mathcal{V}}\eta(\bv)f_{\bv}(x_{i}(\bv))=\sum_{\bv\in\mathcal{V}}\bbeta(\bv)^T \bD_i(\bv)$, where $\bD_i(\bv)=\{B_k(x_i(\bv))\}_{k=1}^K$ and $\bbeta(\bv)=\{\eta(\bv)\theta_k(\bv)\}_{k=1}^K$.
Since $\eta(\bv)$ and $\theta_k(\bv)$ are not separately identifiable, we may directly put our ST2N-GP prior on $\bbeta(\bv)$ for spatial variable selection. 
Extension to such a modeling framework will be very useful for building a nonparametric regression model with an imaging predictor.  
The DTI application of our paper followed some of the previous works focusing on identifying the ``regions"  within a fiber tract that drive the variation of the outcome \cite{zekelman2022white,yeatman2012tract}. Likewise, our work is to identify the voxels that cover the ``regions"  within a fiber tract that drive the variation. The available {population-averaged} fiber tractography information helps us localize the fiber tract of interest, and our method further helps identify the voxels of ``regions" that impact the outcome through a process of spatial variable selection. 

Another important issue raised by one of the reviewers is the appropriateness of geospatial-type scalar on image regression for DTI data. Specific to our DTI application, principal diffusion directions exhibit the white matter fiber orientations/tracts in the brain. Alzheimer's disease is known to affect these tracts, and thus it is indeed reasonable to quantify the effect of different tracts on the outcome and also possibly identify the differentially impacting tracts on MMSE. We can now use {population-averaged} fiber tractography information to build a more comprehensive model for such analysis. Although our geospatial-covariance-based model will not be suitable here, it may be possible to modify the kernel and then impose group-penalty-induced shrinkage to address tract-based analysis. We will consider this as one of our future research directions.
Future work will also explore the application of the proposed method beyond large white matter regions.
Our choice of the large white matter regions is based on the literature and our limited preliminary studies that within white matter regions, the diffusion characteristics are indeed spatially dependent \cite{zhu2010multivariate}.
Theoretically, it would be interesting to study the properties under the number of groups increasing with $n$ regime. This would require the groups to share structure more systematically.

To examine the algorithmic complexity of our proposed method, we estimate and plot computational times with an increasing number of spatial locations and sample sizes, while keeping the number of groups fixed. The figures are added in Section S5 of the Supplementary Materials, with some additional follow-up analysis on the order of computation. Specifically, we fit separate regression models in the log scale as $\log(t)$ on $\log(n)$ and $\log(t)$ on $\log(p)$ to evaluate the simulation-based order of computation. In both cases, results suggest that the computation time is linear.
Although the algorithmic complexities are reasonably good, future efforts will consider developing Rcpp or Rcpparamdillo-based implementations for faster computation to reduce CPU time.

In the case of multimodal imaging, the predictor vector $\bD_i(\bv)$ may consist of both the main effects and the interaction effect terms, and we can analyze with the ST2N-GP prior.
In our ADNI application, it will be interesting to add other imaging markers, derived from fMRI or PET, and run a combined multimodal analysis.
The scaled squared exponential kernel can be replaced by a more flexible Mat\'ern kernel too. 
Furthermore, the latent process $\widetilde{\bbeta}$ does not necessarily need to be a GP.
It might follow newly proposed processes, such as Nearest Neighbor GP which is computationally more affordable than traditional GP.
The latent process may even be modeled using the deep neural net, RBF-net, or any other flexible classes.
In the future, we explore these possibilities too.

The ST2N transformation-based priors can be applied easily to the other types of vector-valued image regression models, where we may even have the image data as a response instead of being a predictor.
Our future works will focus on such applications.
We have also established the theoretical results under a completely new set of assumptions on the predictor process.
Such characterization may help establish theoretical properties for other statistical models with spatially varying predictors.
However, additional control on the predictor process might help us to establish even stronger posterior contraction and variable selection results.

\section{Acknowledgments}
We thank the Editor, Associate Editor, and referee for the insightful comments that led to improvements in the content and presentation
of the paper.

Data collection and sharing for this project was funded by the Alzheimer's Disease Neuroimaging Initiative (ADNI) (National Institutes of Health Grant U01 AG024904) and DOD ADNI (Department of Defense award number W81XWH-12-2-0012). ADNI is funded by the National Institute on Aging, the National Institute of Biomedical Imaging and Bioengineering, and through generous contributions from the following: AbbVie, Alzheimer's Association; Alzheimer's Drug Discovery Foundation; Araclon Biotech; BioClinica, Inc.; Biogen; Bristol-Myers Squibb Company; CereSpir, Inc.; Cogstate; Eisai Inc.; Elan Pharmaceuticals, Inc.; Eli Lilly and Company; EuroImmun; F. Hoffmann-La Roche Ltd and its affiliated company Genentech, Inc.; Fujirebio; GE
Healthcare; IXICO Ltd.; Janssen Alzheimer's Immunotherapy Research \& Development, LLC.; Johnson \& Johnson Pharmaceutical Research \& Development LLC.; Lumosity; Lundbeck; Merck \& Co., Inc.; Meso
Scale Diagnostics, LLC.; NeuroRx Research; Neurotrack Technologies; Novartis Pharmaceuticals Corporation; Pfizer Inc.; Piramal Imaging; Servier; Takeda Pharmaceutical Company; and Transition
Therapeutics. The Canadian Institutes of Health Research is providing funds to support ADNI clinical sites in Canada. Private sector contributions are facilitated by the Foundation for the National Institutes of Health ({\tt www.fnih.org}). The grantee organization is the Northern California Institute for Research and Education,
and the study is coordinated by the Alzheimer's Therapeutic Research Institute at the University of Southern California. ADNI data are disseminated by the Laboratory for Neuro Imaging at the University of Southern California.




\section*{Supplementary materials}

\setcounter{section}{0}

\makeatletter
\renewcommand \thesection{S\@arabic\c@section}
\renewcommand\thetable{S\@arabic\c@table}
\renewcommand \thefigure{S\@arabic\c@figure}
\makeatother

\section{Additional results and figures}

We introduce the following definition of similar effect. 

\begin{SDef}[Similar effect]
The effect at $\bv$-$th$ spatial location is considered similar across all the groups if and only if $\bbeta_g(\bv)^T\bbeta_{g'}(\bv)>0$ for all $g\neq g'$.
\label{def::signsimilar} 
\end{SDef}
In the above definition of similar effect, only the directions of effects play a role.
We only require that the maximum possible angle between $\bbeta_g(\bv)$ and $\bbeta_g'(\bv)$ at a similar-effect location is within $[0,\pi/2)$ for all $g\neq g'$. 

We further quantify the similarity by $\psi(\bv)=\min_{g,g':g\neq g'} \left(\frac{\bbeta_g(\bv)}{\|\bbeta_g(\bv)\|_2}\right)^T\frac{\bbeta_{g'}(\bv)}{\|\bbeta_{g'}(\bv)\|_2}$ 
at a similar-effect location $\bv$.
Thus $\cos^{-1}\psi(\bv)$ is the maximum angle between $\bbeta_g(\bv)$ and $\bbeta_g'(\bv)$ at $\bv$.
Hence, having smaller $\cos^{-1}\psi(\bv)$ implies that the directions of $\bbeta_g(\bv)$'s are getting closer to each other.
The composite effect of $\bv$-$th$ location on the response for $g$-$th$ group with predictor $\bD(\bv)$ is $\bbeta_{g}^T(\bv)\bD(\bv)$.
Hence, the sign of $\bbeta_{g}^T(\bv)\bD(\bv)$ implies whether $\bv$-$th$ location affects the response positively or negatively.
We thus present our following Lemma~S\ref{lem::signpred} which implies that for a predictor vector $\bD(\bv)$, the sign of $\bbeta_{g}^T(\bv)\bD(\bv)$ is more likely to be the same for all $g$ for larger values of $\psi(\bv)$.

\begin{SLem}
If $\psi(v)=c$, we have $P\left(\Big[\frac{\bbeta_{1}(\bv)}{\|\bbeta_{1}(\bv)\|_2}\Big]^T\bD(\bv)>0\bigm| \Big[\frac{\bbeta_{2}(\bv)}{\|\bbeta_{2}(\bv)\|_2}\Big]^T\bD(\bv)<0\right)$ decreases as $c$ increases for $G=2$.
\label{lem::signpred}
\end{SLem}
In the above Lemma, the probability is calculated under the marginal distribution of $\bD(\bv)$. Specifically, we calculate the probability of the event $\{\bD(\bv): \Big[\frac{\bbeta_{1}(\bv)}{\|\bbeta_{1}(\bv)\|_2}\Big]^T\bD(\bv) > 0\}$ conditioning on $\{\bD(\bv):\Big[\frac{\bbeta_{2}(\bv)}{\|\bbeta_{2}(\bv)\|_2}\Big]^T\bD(\bv)<0\}$
Hence, for simplicity, we use $\psi(\bv)$ to quantitatively summarize the across-group similarity in regression effects. The proof is straightforward and provided in the supplementary materials.
The result is presented for $G=2$ to maintain simplicity in the calculation.
However, it can be generalized easily by defining, $\psi_{g,g'}(\bv)=\left(\frac{\bbeta_g(\bv)}{\|\bbeta_g(\bv)\|_2}\right)^T\frac{\bbeta_{g'}(\bv)}{\|\bbeta_{g'}(\bv)\|_2}$ which is always more than or equal to $\psi(\bv)$.
Hence, for a general $G$, we can show the above result for each pair of $(g,g')$ taking $\psi_{g,g'}(\bv)=c$. Some additional results on prior properties are in the supplementary materials for space.

Our decomposition of $\widetilde{\bbeta}_g(\cdot)$ is motivated to center the group-specific latent processes and to improve computational performances.
However, this decomposition has some additional advantages.
Specifically, we expect that the regression effects at most of the spatial locations will be similar across the groups, except for some sparsely distributed regions.
Nevertheless, we study some additional properties of this model for better understanding.
Although $\bbeta_{g}(\cdot)$'s are identifiable, the components of $\widetilde{\bbeta}_g(\cdot)$ namely $\widetilde{\bbeta}(\cdot)$ and $\balpha_g(\cdot)$'s are not separately identifiable, as they produce equivalent likelihood as long as $h_{\lambda}(\widetilde{\bbeta}(\cdot) + \balpha_g(\cdot))$ does not change.
In the Bayesian paradigm, however, the posterior mode of $\widetilde{\bbeta}(\cdot)$ is controlled by the prior probabilities only. 
We study the prior probabilities and other related properties in connection to our definition of similar-effect locations (Definition 1).
We put Gaussian process priors on the spatially varying latent functions as discussed in the next section, and thus we have the following result.

\begin{SLem}
The angular separation between the prior mode of $\widetilde{\bbeta}(\bv)$ and $\bbeta_g(\bv)$'s is within $[0, \pi/2)$ at a similar effect location $\bv$.
\label{lem::priorshare}
\end{SLem}

Recall that, from Definition 1, a similar-effect location $\bv$ satisfies $\bbeta_g^T(\bv)\bbeta_{g'}(\bv)>0$ for all $g\neq g'$. The proof can be found in the Appendix. The above result is applicable only for the similar-effect spatial locations.
In order to appropriately identify the similar-effect locations using $\widetilde{\bbeta}(\bv)$ alone, we need to study its properties at the `non-similar-effect' locations as well.
Specifically, we propose a spatially adaptive thresholding transformation for $\widetilde{\bbeta}(\bv)$ that works remarkably well in identifying similar-effect spatial locations. It is based on the following result.

\begin{SLem}
At a given spatial location $\bv$, if $\|\widetilde{\bbeta}(\bv)\|_2>\lambda$ with $\|\bbeta_{g}(\bv)\|_2=0$, then $\balpha_g(\bv)^T\{\balpha_g(\bv)+2\widetilde{\bbeta}(\bv)\}<0$ and $\|h_{\lambda}(\widetilde{\bbeta}(\bv))\|_2^2<-\balpha_g(\bv)^T\{\balpha_g(\bv)+2\widetilde{\bbeta}(\bv)\}$.
\label{lem::adpthresh}
\end{SLem}
\begin{proof}
The proof follows by noting that $\|\bbeta_g(\bv)\|_2=0$
implies that, $\|\widetilde{\bbeta}(\bv)+\balpha_g(\bv)\|_2<\lambda$. Then breaking the square, we get $\widetilde{\bbeta}(\bv)^T\widetilde{\bbeta}(\bv) + \balpha_g(\bv)^T\{\balpha_g(\bv)+2\widetilde{\bbeta}(\bv)\} < \lambda^2$. Thus $\widetilde{\bbeta}(\bv)^T\widetilde{\bbeta}(\bv)>\lambda^2$ implies $\balpha_g(\bv)^T\{\balpha_g(\bv)+2\widetilde{\bbeta}(\bv)\}<0$ and $\|h_{\lambda}(\widetilde{\bbeta}(\bv))\|_2^2<-\balpha_g(\bv)^T\{\balpha_g(\bv)+2\widetilde{\bbeta}(\bv)\}$ 
\end{proof}

Using the above lemma, we propose a spatially varying thresholding function for $\|h_{\lambda}(\widetilde{\bbeta}(\bv))\|_2$ as $\lambda_S(\bv)=\sqrt{-\max_{g:\balpha_g(\bv)^T\{\balpha_g(\bv)+2\widetilde{\bbeta}(\bv)\}<0} \balpha_g(\bv)^T\{\balpha_g(\bv)+2\widetilde{\bbeta}(\bv)\}}$.
The similar-effect is thus can be identified using $F(h_{\lambda}(\widetilde{\bbeta}(\bv)))=\left(1-\frac{\lambda_S(\bv)}{\|h_{\lambda}(\widetilde{\bbeta}(\bv))\|_2}\right)_{+}h_{\lambda}(\widetilde{\bbeta}(\bv))$.
We show its excellent performance in identifying similar-effect locations in Figures~\ref{simufig1} and \ref{simufig2} for simulations 1 and 2, respectively.

\begin{figure}[htbp]
\centering
\subfigure{\includegraphics[width = 0.4\textwidth]{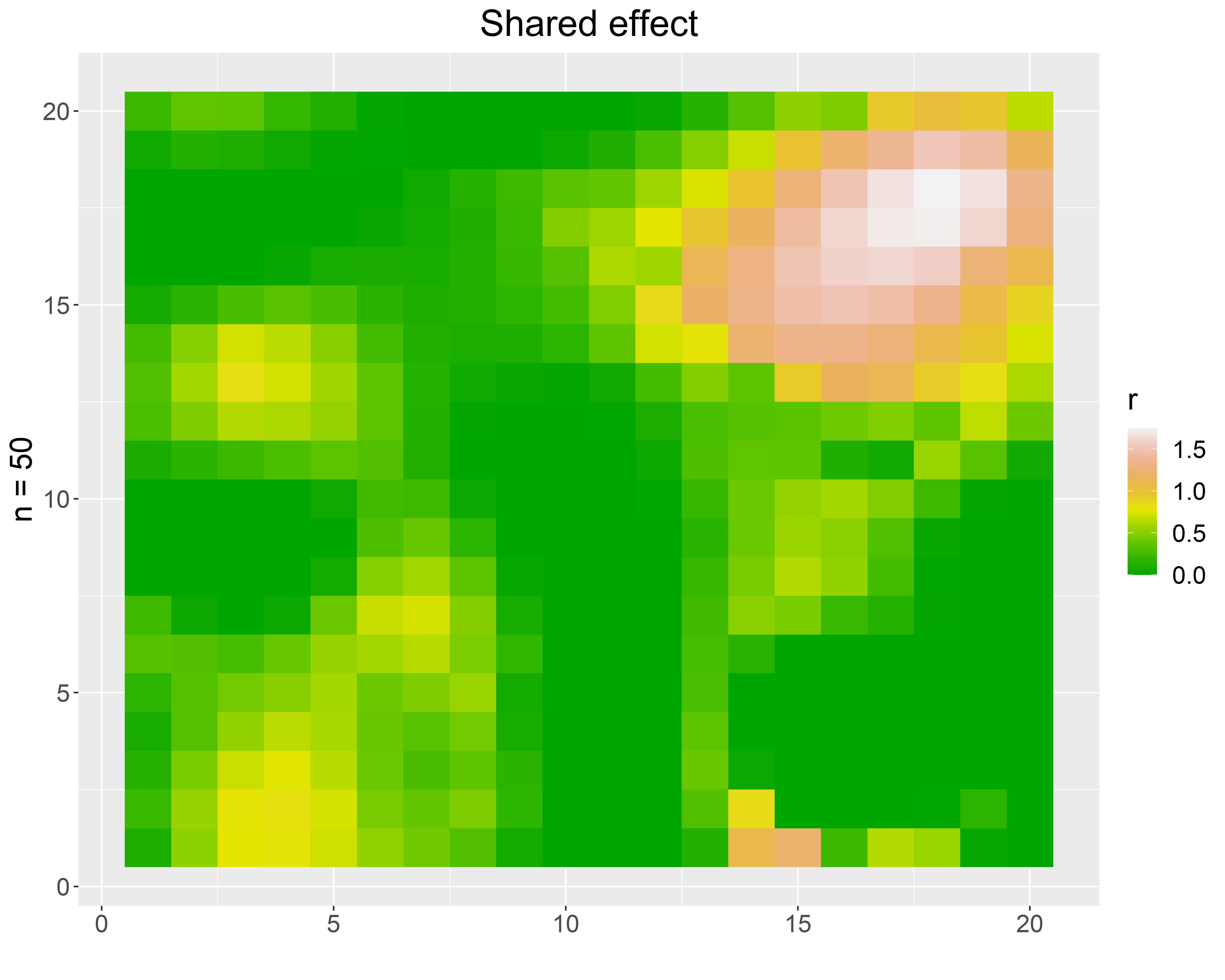}}
\subfigure{\includegraphics[width = 0.4\textwidth]{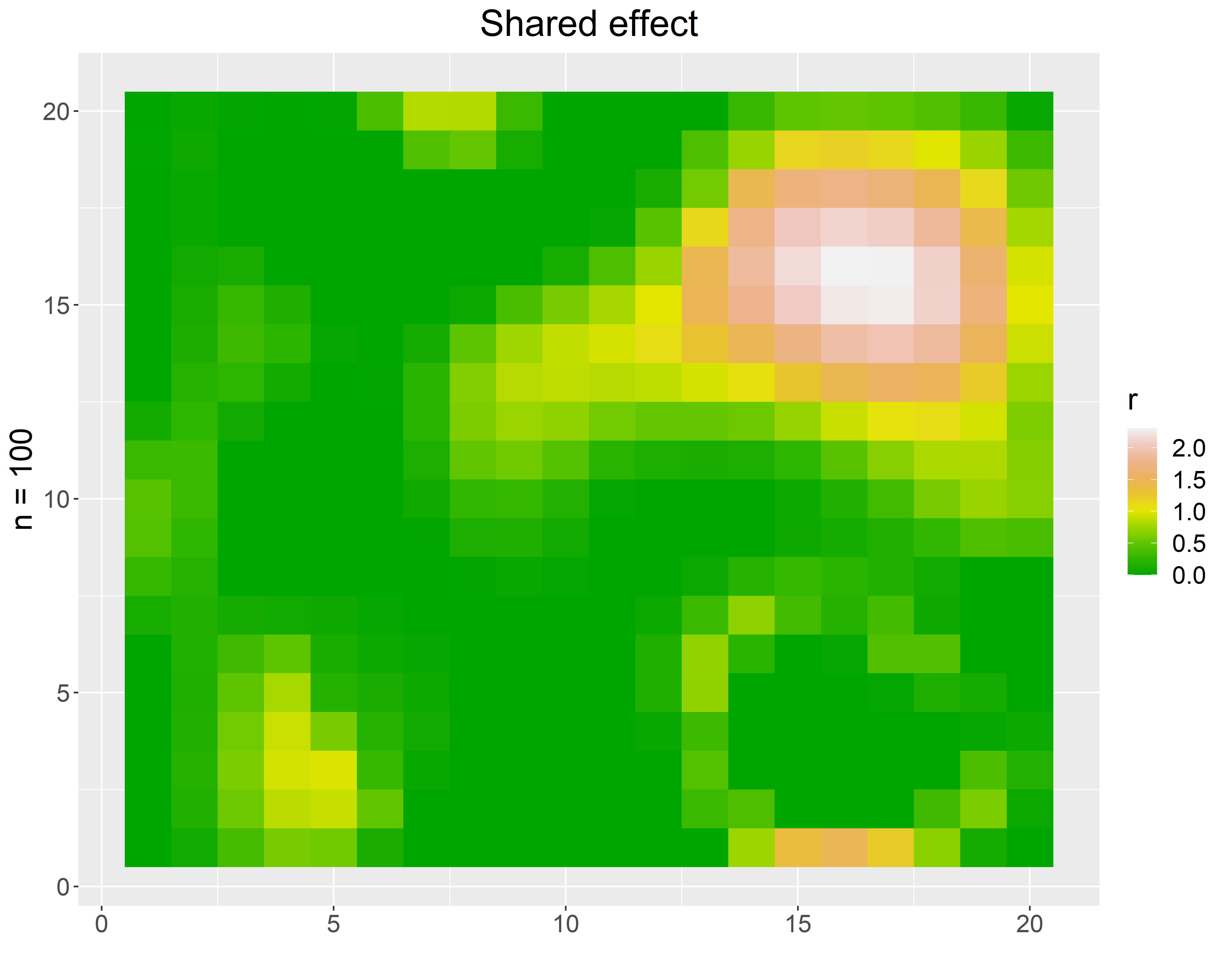}}
\caption{Estimated $\|F(h_{\lambda}(\widetilde{\bbeta}))\|_2$'s when the error variance is $1$ for two sample sizes. The two images correspond to sample sizes of 50 and 100 in each group, respectively, in Simulation case 1.}
\label{simufig1} 
\end{figure}

\begin{figure}[htbp]
\centering
\subfigure{\includegraphics[width = 0.4\textwidth]{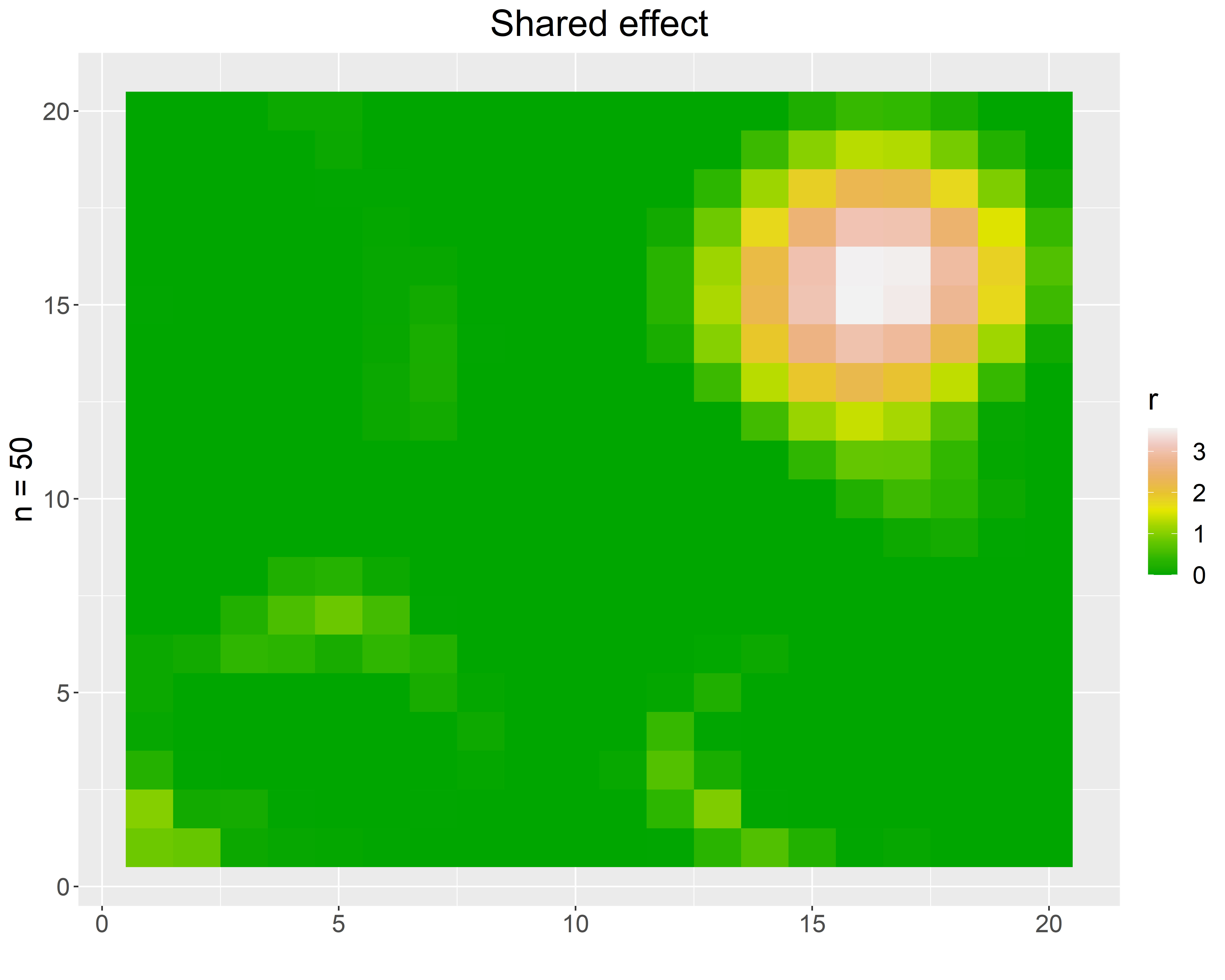}}
\subfigure{\includegraphics[width = 0.4\textwidth]{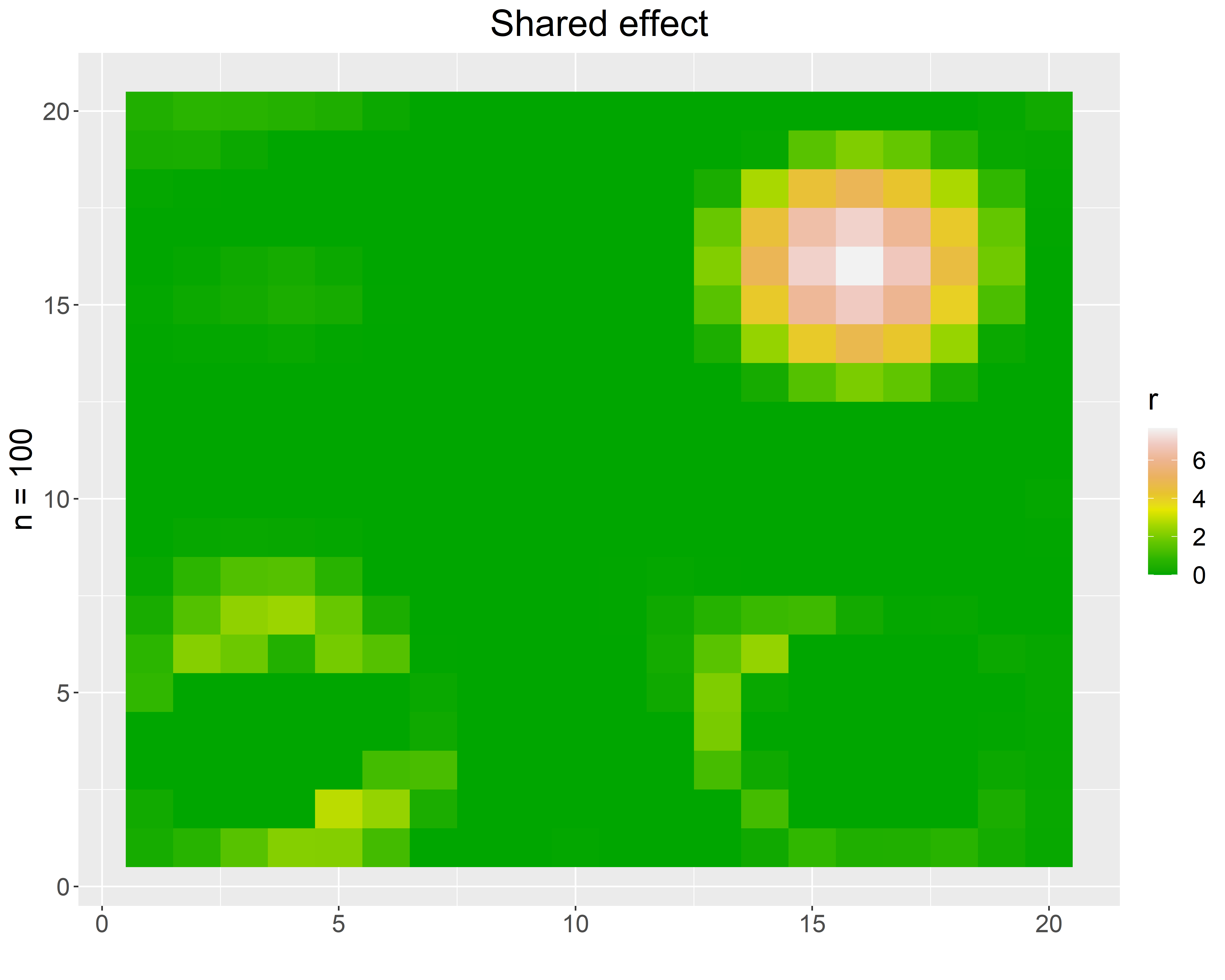}}
\caption{Estimated $\|F(h_{\lambda}(\widetilde{\bbeta}))\|_2$'s when the error variance is $1$ for two sample sizes. The two images correspond to sample sizes of 50 and 100 in each group, respectively, in Simulation case 2.}
\label{simufig2} 
\end{figure}

Additionally, due to the double soft-thresholding construction, we have the Lemma \ref{lem::priorbd} concerning the group-specific differences. When $\|\bbeta_g(\bv)\|>0$, it is easy to see that $\frac{\bbeta_g(\bv)}{\|\bbeta_g(\bv)\|_2}=\frac{\widetilde{\bbeta}_g(\bv)}{\|\widetilde{\bbeta}_g(\bv)\|_2}$. 
\begin{SLem}
For the spatial locations where $\|\bbeta_{g}(\bv)\|_2,\|\bbeta_{g'}(\bv)\|_2>0$, we have 
$$\bbeta_{g}(\bv)-\bbeta_{g'}(\bv)=\widetilde{\bbeta}_{g}(\bv)-\widetilde{\bbeta}_{g'}(\bv)-\lambda\left(\frac{\bbeta_{g}(\bv)}{\|\bbeta_{g}(\bv)\|_2}-\frac{\bbeta_{g'}(\bv)}{\|\bbeta_{g'}(\bv)\|_2}\right),$$
and
$\left(\frac{\bbeta_g(\bv)}{\|\bbeta_g(\bv)\|_2}\right)^T\frac{\bbeta_{g'}(\bv)}{\|\bbeta_{g'}(\bv)\|_2}=\left(\frac{\widetilde{\bbeta}_g(\bv)}{\|\widetilde{\bbeta}_g(\bv)\|_2}\right)^T\frac{\widetilde{\bbeta}_{g'}(\bv)}{\|\widetilde{\bbeta}_{g'}(\bv)\|_2}$.
\label{lem::priorbd}
\end{SLem}
Applying above Lemma, we further have,
$$\left|\|\bbeta_{g}(\bv)-\bbeta_{g'}(\bv)\|_2-\|\widetilde{\bbeta}_{g}(\bv)-\widetilde{\bbeta}_{g'}(\bv)\|_2\right|\leq\lambda\left\|\frac{\widetilde{\bbeta}_{g}(\bv)}{\|\widetilde{\bbeta}_{g}(\bv)\|_2}-\frac{\widetilde{\bbeta}_{g'}(\bv)}{\|\widetilde{\bbeta}_{g'}(\bv)\|_2}\right\|_2\leq 2\lambda,$$
and for the spatial locations where $\|\bbeta_1(\bv)\|_2,\|\bbeta_2(\bv)\|_2>0$, we have 
$$\left|\|\bbeta_{g}(\bv)-\bbeta_{g'}(\bv)\|_2-\|\widetilde{\bbeta}_{g}(\bv)-\widetilde{\bbeta}_{g'}(\bv)\|_2\right|\leq\lambda\left\|\frac{\bbeta_{g}(\bv)}{\|\bbeta_{g}(\bv)\|_2}-\frac{\bbeta_{g'}(\bv)}{\|\bbeta_{g'}(\bv)\|_2}\right\|_2\leq 2\lambda.$$

At a similar-effect location, we have $\left|\|\bbeta_{g}(\bv)-\bbeta_{g'}(\bv)\|_2-\|\widetilde{\bbeta}_{g}(\bv)-\widetilde{\bbeta}_{g'}(\bv)\|_2\right|\leq\lambda\sqrt{2-2\psi(\bv)}\leq \lambda\sqrt{2}$ following Definition 1.
Hence, when the angle between $\bbeta_{g}(\bv)$ and $\bbeta_{g'}(\bv)$ is small, we have $\|\bbeta_{g}(\bv)-\bbeta_{g'}(\bv)\|_2\approx \|\balpha_{g}(\bv)-\balpha_{g'}(\bv)\|_2$.
Furthermore, if $\|\bbeta_{g}(\bv)\|_2\approx\|\bbeta_{g'}(\bv)\|_2$, we have $\bbeta_{g}(\bv)-\bbeta_{g'}(\bv)\approx(\widetilde{\bbeta}_{g}(\bv)-\widetilde{\bbeta}_{g'}(\bv))\left(1-\frac{\lambda}{\|\widetilde{\bbeta}_{g}(\bv)\|_2}\right)$.
Since $\frac{\bbeta_g(\bv)}{\|\bbeta_g(\bv)\|_2}=\frac{\widetilde{\bbeta}_g(\bv)}{\|\widetilde{\bbeta}_g(\bv)\|_2}$, $\bbeta_g(\bv)^T\bbeta_{g'}(\bv)>0$ or $<0$ if and only if $\widetilde{\bbeta}_g(\bv)^T\widetilde{\bbeta}_{g'}(\bv)>0$ or $<0$ respectively. We also have $\bigm\|\frac{\bbeta_g(\bv)}{\|\bbeta_g(\bv)\|_2}-\frac{\bbeta_{g'}(\bv)}{\|\bbeta_{g'}(\bv)\|_2}\bigm\|_2=\bigm\|\frac{\widetilde{\bbeta}_g(\bv)}{\|\widetilde{\bbeta}_g(\bv)\|_2}-\frac{\widetilde{\bbeta}_{g'}(\bv)}{\|\widetilde{\bbeta}_{g'}(\bv)\|_2}\bigm\|_2$ when $\|\bbeta_1(\bv)\|_2,\|\bbeta_2(\bv)\|_2>0$.

\begin{figure}[htbp]
\centering
\subfigure[CC]{\includegraphics[width = 1\textwidth]{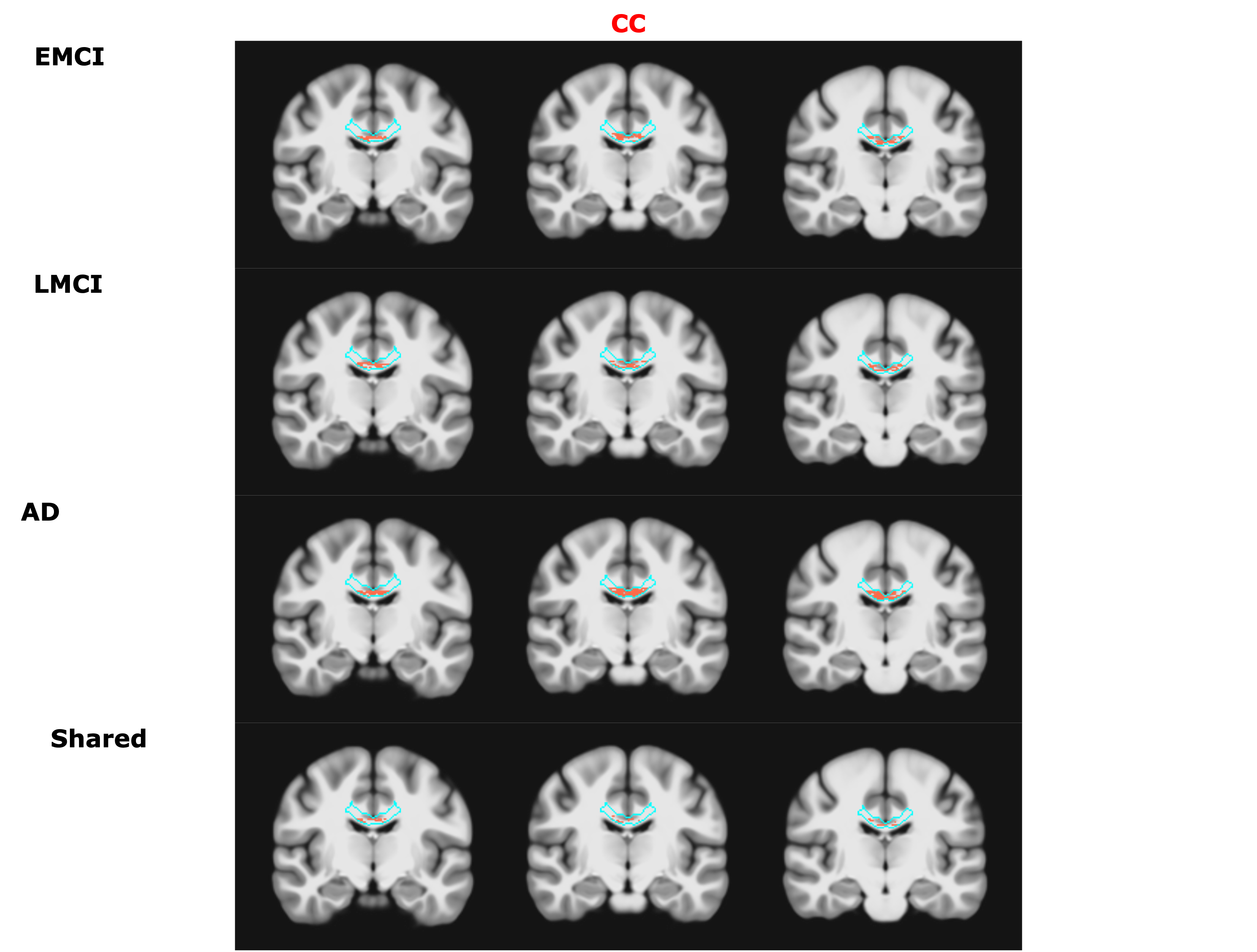}}
\caption{The red voxels are the brain regions with non-zero $\|\hat{\bbeta}_g\|_2$'s. The CC regions are delineated with light blue lines in all the images on typical slices of the brain. The last row illustrates the voxels with non-zero $\|\hat{\bbeta}_g\|_2$ for all $g$. The three columns are three slices, one from each of the frontal, middle, and posterior sections, respectively.}
\label{fig::esticoefreal1}
\end{figure}

\begin{figure}[htbp]
\centering
\subfigure[Left FPT]{\includegraphics[width = 1\textwidth]{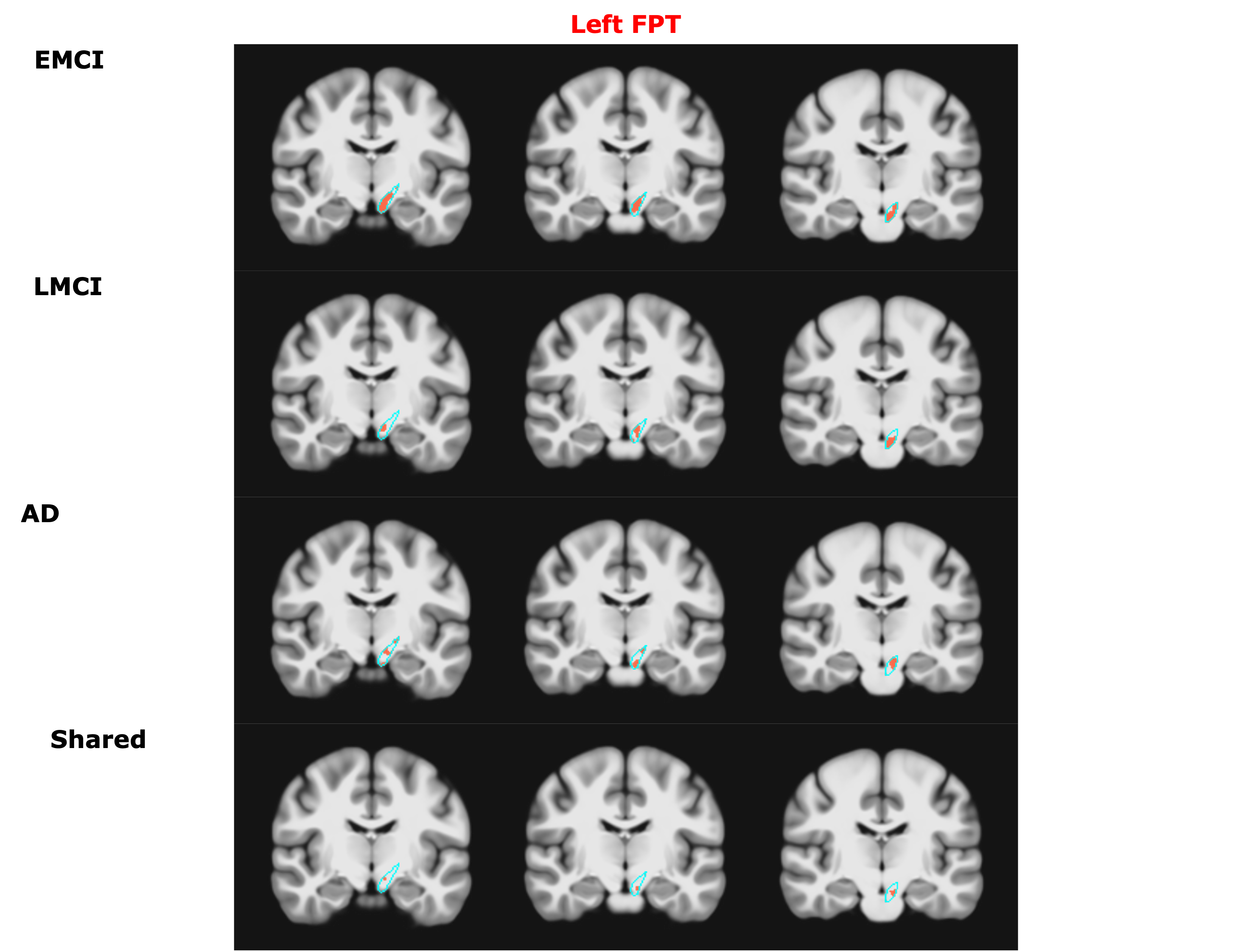}}
\subfigure[Left FPT]{\includegraphics[width = 1\textwidth]{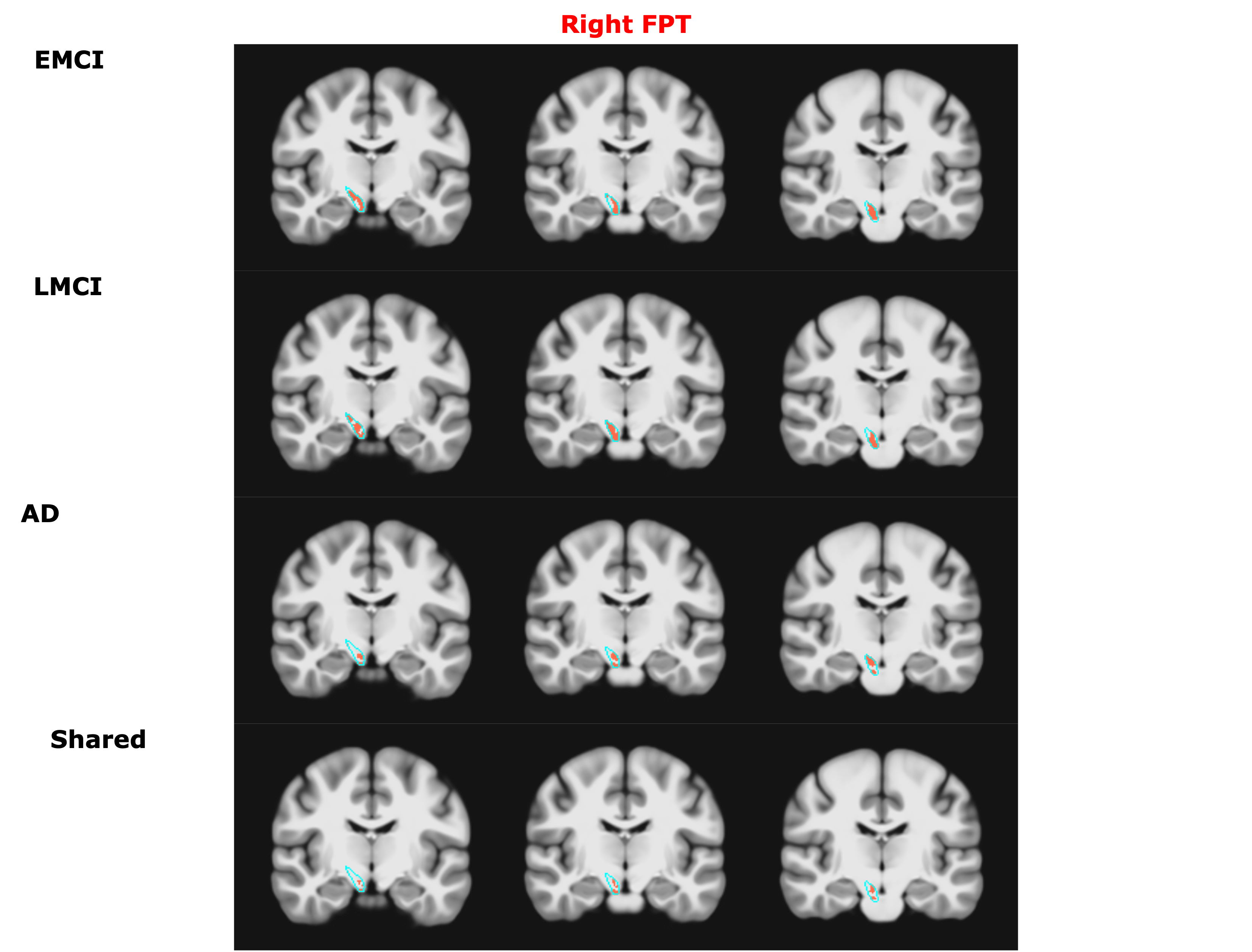}}
\caption{The red voxels are the brain regions with non-zero $\|\hat{\bbeta}_g\|_2$'s. The FPT regions are delineated with light blue lines in all the images on typical slices of the brain. The last row illustrates the voxels with non-zero $\|\hat{\bbeta}_g\|_2$ for all $g$. The three columns are three slices, one from each of the frontal, middle, and posterior sections, respectively.}
\label{fig::esticoefreal2}
\end{figure}

\section{Predictor process}
Let $\D_{i,k}=\{\bD_{i,k}(\bv_j)\}_{j=1}^p$ be the vector of data for $i$-$th$ individual for $k$-$th$ direction where $k=1,\ldots,q$. Define $n\times (qp)$ dimensional design matrix such that its $i$-$th$ row will be $\bD_i'^T=\{\D_{i,1},\ldots,\D_{i,q}\}$.
Spatial dependence of the data is the key to ensuring consistency of our Bayesian method.
Although we can easily vectorize an image data and write our image regression model as a linear regression model with $\bD_i'$ being the predictor vector of the $i$-$th$ individual, then
it is easy to show posterior convergence in terms of empirical $\ell_2$ distance.
However, such distance is not strong enough to help us establish spatial variable selection.
Instead of a compatibility condition as in \cite{castillo2015bayesian}, our proposed setup utilizes the spatial dependence of the imaging predictor.
This helps to establish better convergence results for the posterior of $\bbeta$.

Let expectation $\eE(\bD_i')=0$ and variance $V(\bD_i')=\bK$ with $0<c_{\min}\leq\eig(\bK)\leq c_{\max}<\infty$ and $K_{\ell,\ell}<\infty$. This variance assumption can be satisfied easily for $\D_{i,k}$'s being stationary ergodic processes
with spectral densities bounded between $c_{\min}$ and $c_{\max}$ or a ``spike-model" as argued in \cite{bickel2008regularized, bickel2008covariance}.
We assume that our covariance kernel follows the assumptions of Theorem 1 in \cite{bickel2008covariance}.

Let $\|\bA\|_{op}$ stands for operator norm of a matrix $\bA=(\!(a_{i,j})\!)$. 
When $\log_{}(p_n)=o(n)$, \cite{bickel2008regularized} and \cite{bickel2008covariance} showed consistency of thresholded sample covariance for a wide class of population covariance matrices. A stationary covariance kernel easily satisfy those conditions. Specifically, they showed $\big\|T_{t_n}\left(\frac{1}{n}\sum_i\bD_i'\bD_i'^T\right)-\bK\big\|_{op}\rightarrow 0$ with probability tending to 1, where $T_s()$ is a thresholding function with parameter $s$.
Hence, for large enough $n$, we have $\big\|T_{t_n}\left(\frac{1}{n}\sum_i\bD_i'\bD_i'^T\right)-\bK\big\|_{op} < c_{\min}/4$ with probability 1. 
To make our assumptions transparent, we consider the thresholding scheme from \cite{bickel2008covariance} where $T_s(\bA) = (\!(a_{i,j}\bone\{|a_{ij}| \geq s\})\!)$ and let $J_n=\frac{1}{n}\sum_i\bD_i'\bD_i'^T$. Following \cite{bickel2008covariance}, for some constant $T>0$, we set $t_n=T\sqrt{\frac{\log_{}( p_n)}{n}}\rightarrow 0$ as $\log( p_n) = o(n)$.
To control the separation between the thresholded estimate and sample covariance, we make the following assumption,
\begin{Assmp}[Predictor process]
 There exists $N$ such that for $n>N$ we have $\big\|J_n-T_{t_n}\left(J_n\right)\big\|_{op}\leq c_{\min}/4$ where $0<c_{\min}\leq\eig(\bK)\leq c_{\max}<\infty$.
 \label{assmp::predictor}
\end{Assmp}
Thus, we finally have, $\big\|J_n-\bK\big\|_{op}<\frac{c_{\min}}{2}\textrm{ for } n>N$. We know that $\|\bA\|_{op}\leq\max_i\sum_{j=1}^{p_n}|a_{i,j}|$. 
Thus, a sufficient condition for the above assumption to hold is that the number of non-zero entries with absolute value less than $t_n$ at each row of $J_n$ is upper bounded by $\frac{c_{\min}}{4t_n}(\rightarrow \infty$ as $t_n\rightarrow 0)$. 
However, the assumption would still hold as long as the total absolute contributions of the cross-correlation values of the thresholded coordinates at each row of $J_n$ are bounded by $c_{\min}/2$. Implicitly, the above assumption requires spatially uncorrelated voxels to have significantly small sample correlation values.

\section{Proofs}

\begin{proof}[Proof of Lemma 1]
We show the above result under the $\ell_2$ and $\ell_{\infty}$ norms as that will immediately help in our other theoretical results.
Specifically, we show $\|h_{\lambda}(\bx_1)-h_{\lambda}(\bx_2)\|_2\leq \|\bx_1-\bx_2\|_2$ and $\|h_{\lambda}(\bx_1)-h_{\lambda}(\bx_2)\|_\infty\leq \|\bx_1-\bx_2\|_\infty$.
We distribute the proof into three disjoint cases. Let $\S^{q-1}$ stands for the sphere with radius $\lambda$ around the origin.

Case 1: $\|\bx_1\|_2, \|\bx_2\|_2\leq \lambda$, we have $\|h_{\lambda}(\bx_1)-h_{\lambda}(\bx_2)\|_2=0\leq \|\bx_1-\bx_2\|_2$.

Case 2: $\|\bx_1\|_2\geq \lambda, \|\bx_2\|_2\leq \lambda$, We have $\|h_{\lambda}(\bx_1)-h_{\lambda}(\bx_2)\|_2=\|h_{\lambda}(\bx_1)\|_2=\|\bx_1\|_2-\lambda$.

We have $ \|\bx_1-\bx_2\|_2\geq \big\|\bx_1-\|\bx_2\|_2\frac{\bx_1}{\|\bx_1\|_2}\big\|_2$ by Cauchy–Schwarz inequality which gives us $(\sum_i{x_{1,i}^2})(\sum_i{x_{1,i}^2})\geq(\sum_ix_{1,i}x_{2,i})^2$.
Since $\|\bx_2\|_2<\lambda$, we have $\big\|\bx_1-\|\bx_2\|_2\frac{\bx_1}{\|\bx_1\|_2}\big\|_2>\big\|\bx_1\|_2-\lambda$.


Case 3: $\|\bx_1\|_2\geq \lambda, \|\bx_2\|_2\geq \lambda$,
\begin{align*}
    \|\bx_1-\bx_2\|^2_2-\|h_{\lambda}(\bx_1)-h_{\lambda}(\bx_2)\|^2_2\geq 2\lambda+2<h_{\lambda}(\bx_1),h_{\lambda}(\bx_2)>-2<\bx_1,\bx_2>,
\end{align*}
where $<,>$ stands for the inner-product between two vectors. 
The angle between $h_{\lambda}(\bx_1)$ and $h_{\lambda}(\bx_2)$ is the same as the one between $\bx_1$ and $\bx_2$.
Let that angle be $\vartheta$. 
Next we have $\|\bx_1\|_2\|\bx_2\|_2-\|h_{\lambda}(\bx_1)\|_2\|h_{\lambda}(\bx_2)\|_2>\lambda$.
Thus, $2\lambda+2<h_{\lambda}(\bx_1),h_{\lambda}(\bx_2)>-2<\bx_1,\bx_2>\geq2\lambda(1-\cos\vartheta)\geq0$.
Thus, $\|h_{\lambda}(\bx_1)-h_{\lambda}(\bx_2)\|_2\leq \|\bx_1-\bx_2\|_2$.

We have, $\|\bx_1-\bx_2\|_\infty\leq\|\bx_1-\bx_2\|_2\leq q\|\bx_1-\bx_2\|_\infty$ which automatically establishes Lipchitz property with respect to $\ell_{\infty}$ norm.

\end{proof}

\begin{proof}[Proof of Lemma S1]

Since, $\frac{\bbeta_{2}(\bv)}{\|\bbeta_{2}(\bv)\|_2}^T\bD(\bv)=\frac{\bbeta_{1}(\bv)}{\|\bbeta_{1}(\bv)\|_2}^T\bD(\bv)-\left(\frac{\bbeta_{1}(\bv)}{\|\bbeta_{1}(\bv)\|_2}-\frac{\bbeta_{2}(\bv)}{\|\bbeta_{2}(\bv)\|_2}\right)^T\bD(\bv)$,
we have, $P(\frac{\bbeta_{1}(\bv)}{\|\bbeta_{1}(\bv)\|_2}^T\bD(\bv)>0\mid \frac{\bbeta_{2}(\bv)}{\|\bbeta_{2}(\bv)\|_2}^T\bD(\bv)<0)\leq P(\|\bD(\bv)\|_2\sqrt{2-2c}>\frac{\bbeta_{1}(\bv)}{\|\bbeta_{1}(\bv)\|_2}^T\bD(\bv)>0\mid \frac{\bbeta_{2}(\bv)}{\|\bbeta_{2}(\bv)\|_2}^T\bD(\bv)<0)=P(\sqrt{2-2c}>\frac{\bbeta_{1}(\bv)}{\|\bbeta_{1}(\bv)\|_2}^T\frac{\bD(\bv)}{\|\bD(\bv)\|_2}>0\mid \frac{\bbeta_{2}(\bv)}{\|\bbeta_{2}(\bv)\|_2}^T\bD(\bv)<0)$. The interval $(0, \sqrt{2-2c})$ reduces to a null set as $c$ increases to 1, and thus the probability decreases to zero.
Here we applied Cauchy-Squartz inequality to get $$|\bD(\bv)^T\left(\frac{\bbeta_{2}(\bv)}{\|\bbeta_{2}(\bv)\|_2}-\frac{\bbeta_{1}(\bv)}{\|\bbeta_{1}(\bv)\|_2}\right)|\leq \|\bD(\bv)\|_2\sqrt{(2-2\psi(\bv))}.$$
\end{proof}

\begin{proof}[Proof of Lemma ~S\ref{lem::priorshare}]
Since we put Gaussian process priors on $\widetilde{\bbeta}(\cdot)$ and $\widetilde{\balpha}_g(\cdot)$, for a given spatial location $\bv$, $\|\widetilde{\bbeta}(\bv)\|_2^2$ and $\|\widetilde{\balpha}_g(\bv)\|_2^2$ marginally follow $\chi^2$-distributions.

For simplicity, let $G=2$.
Then for $\lambda_g=0$ the likelihood 
$P(\widetilde{\bbeta}(\bv), \widetilde{\balpha}_1(\bv)=\widetilde{\bbeta}_1(\bv)-\widetilde{\bbeta}(\bv), \widetilde{\balpha}_2(\bv)=\widetilde{\bbeta}_2(\bv)-\widetilde{\bbeta}(\bv))$ is maximized when $\widetilde{\bbeta}(\bv)=\frac{\widetilde{\bbeta}_1(\bv)/\sigma_1^2+\widetilde{\bbeta}_2(\bv)/\sigma_2^2}{1/\sigma^2+1/\sigma_1^2+1/\sigma_2^2}$ assuming that $\|\widetilde{\bbeta}(\bv)\|_2^2/\sigma_S^2, \|\widetilde{\balpha}_1(\bv)\|_2^2/\sigma_1, \|\widetilde{\balpha}_2(\bv)\|_2^2/\sigma_2\sim\chi^2(q)$.
Here $\sigma_S$, $\sigma_1$ and $\sigma_2$ are the marginal prior variances of $\widetilde{\bbeta}(\bv)$, $\widetilde{\bbeta}_1(\bv)$ and $\widetilde{\bbeta}_2(\bv)$ respectively.

The above solution of $\widetilde{\bbeta}(\bv)$ will have the same direction of effect as $\bbeta_g$'s. Hence, the posterior mode of $\widetilde{\bbeta}(\bv)$ at a similar-effect location $\bv$ has the same direction as the direction of similar-effect.
\end{proof}

\begin{proof}[Proof of Lemma ~S\ref{lem::priorbd}]
Let $\widetilde{\bbeta}_g(\bv)=\widetilde{\bbeta}(\bv) +\balpha_g(\bv).$

Then, we have that $\bbeta_g(\bv)=\widetilde{\bbeta}_g(\bv)\left(1-\frac{\lambda}{\|\widetilde{\bbeta}_g(\bv)\|_2}\right)_{+}$. 
For the spatial locations where $\|\bbeta_1(\bv)\|_2,\|\bbeta_2(\bv)\|_2>0$, we have $\widetilde{\bbeta}_g(\bv)=\bbeta_g(\bv)\left(1+\frac{\lambda}{\|\bbeta_g(\bv)\|_2}\right)$ and thus

\begin{align*}
    &\|\bbeta_1(\bv)-\bbeta_2(\bv)\|_2-\lambda\left\|\frac{\bbeta_1(\bv)}{\|\bbeta_1(\bv)\|_2}-\frac{\bbeta_2(\bv)}{\|\bbeta_2(\bv)\|_2}\right\|_2\\&\leq\|\widetilde{\bbeta}_1(\bv)-\widetilde{\bbeta}_2(\bv)\|_2\\&\quad\leq\|\bbeta_1(\bv)-\bbeta_2(\bv)\|_2+\lambda\left\|\frac{\bbeta_1(\bv)}{\|\bbeta_1(\bv)\|_2}-\frac{\bbeta_2(\bv)}{\|\bbeta_2(\bv)\|_2}\right\|_2,
    \end{align*}
    which implies
    \begin{align*}
    &\|\bbeta_1(\bv)-\bbeta_2(\bv)\|_2-\lambda\left\|\frac{\bbeta_1(\bv)}{\|\bbeta_1(\bv)\|_2}-\frac{\bbeta_2(\bv)}{\|\bbeta_2(\bv)\|_2}\right\|_2\\&\leq\|\balpha_1(\bv)-\balpha_2(\bv)\|_2\\&\quad\leq\|\bbeta_1(\bv)-\bbeta_2(\bv)\|_2+\lambda\left\|\frac{\bbeta_1(\bv)}{\|\bbeta_1(\bv)\|_2}-\frac{\bbeta_2(\bv)}{\|\bbeta_2(\bv)\|_2}\right\|_2
\end{align*}
\end{proof}

\begin{proof}[Proof of Theorem 1]
By linear algebra, it is easy to show that $\bigm|\|\bbeta(\bv)\|_2-\|\bbeta_0(\bv)\|_2\bigm|\leq \|\bbeta(\bv)-\bbeta_0(\bv)\|_2$.
Hence, $\Pi(\big\|\|\bbeta\|_2-\|\bbeta_0\|_2\big\|_{\infty}<\epsilon)>\Pi(\big\|\|\bbeta-\bbeta_0\|_2\big\|_{\infty}<\epsilon)$.

\noindent
We have $\|\bbeta_0(\bv)-\bbeta(\bv)\|_{2}\leq \|\bbeta_0(\bv)-h_{\lambda}(\bbeta_0(\bv))\|_{2}+\|h_\lambda(\widetilde{\bbeta}(\bv))-h_{\lambda}(\bbeta_0(\bv))\|_{2}$.

For the first term, $\|\bbeta_0(\bv)-h_{\lambda}(\bbeta_0(\bv))\|_{2}\leq \|\bbeta_0(\bv)\|_2\bone\{\|\bbeta_0(\bv)\|_2\leq\lambda\}+\lambda\bone\{\|\bbeta_0(\bv)\|_2>\lambda\}\leq\lambda$.

Now the second term,
$\|h_\lambda(\widetilde{\bbeta}(\bv))-h_{\lambda}(\bbeta_0(\bv))\|_{2}\leq\|\widetilde{\bbeta}(\bv)-\bbeta_0(\bv)\|_{2}$, using Lemma 1.

Hence, $$\Pi(\sup_{\bv}\|\bbeta_0(\bv)-\bbeta(\bv)\|_{2}<\epsilon)>\Pi(\lambda<\epsilon/2)\Pi(\sup_{\bv}\|\widetilde{\bbeta}(\bv)-\bbeta_0(\bv)\|_{2}<\epsilon/2).$$

Due to the uniform prior on $\lambda$, we have $\Pi(\lambda<\epsilon/2)>0$.
Furthermore,
$\Pi(\sup_{\bv}\|\widetilde{\bbeta}(\bv)-\bbeta_0(\bv)\|_{2}<\epsilon/2)>\prod_{j=1}^q\Pi(\|\widetilde{\beta}_{j}-\beta_{0,j}\|_{\infty}<\frac{\epsilon}{2q})$.

To bound $\Pi(\|\widetilde{\beta}_{j}-\beta_{0,j}\|_{\infty}<\frac{\epsilon}{2q}\mid a)$, we use its connection with the concentration function $\phi^a_{\beta_{0,j}}(\frac{\epsilon}{2q})=-\log_{} \Pi (\|\widetilde{\bbeta}_j\|_{\infty} \le \frac{\epsilon}{2q}\mid a) + \frac{1}{2}\inf\{\|h\|_{\mathbb{H}^a}^2 : h\in \mathbb{H}^a,\|h-\beta_{0,j}\|_{\infty} \leq \frac{\epsilon}{2q}\}$.
Here $\mathbb{H}^a$ is the RKHS attached to the Kernel $\kappa_{a}$
Specifically, we apply  Lemma I.28 of \cite{ghosal2017fundamentals} which implies that $\phi^a_{\beta_{0,j}}(\frac{\epsilon}{2q})\leq -\log_{}\Pi(\|\widetilde{\beta}_{j}-\beta_{0,j}\|_{\infty}<\frac{\epsilon}{2q}\mid a)\leq \phi^a_{\beta_{0,j}}(\frac{\epsilon}{4q})$.

Applying the Lemmas 4.3 and 4.6 from \cite{van2009adaptive}, it is shown that for $\alpha$-smooth H\"older function, for any $a_0>0$ there exist positive constants
$\epsilon_0 < 1/2, C, D$ and $K$ that depend on the true function $\beta_{0,j}$ such that, for $a > a_0$, $\epsilon' < \epsilon_0$ and $\epsilon' > Ca^{-\alpha}$, we have $\phi^a_{\beta_{0,j}}(\epsilon')\lesssim a^d\left(\log_{}\frac{a}{\epsilon'}\right)^{1+d}$.

Then we have $\Pi(\|\widetilde{\beta}_{j}-\beta_{0,j}\|_{\infty}<\frac{\epsilon'}{2q}\mid a)\geq \int_{(\frac{C}{\epsilon'})^{1/\alpha}}^{2(\frac{C}{\epsilon'})^{1/\alpha}}\exp\left(-\phi^a_{\beta_{0,j}}(\frac{\epsilon'}{4q})\right)g(a)da$.

Since $\phi^a_{\beta_{0,j}}(\epsilon')$ is an increasing function in $a$, we can further lower bound the expression in the above display as $$\int_{(\frac{C}{\epsilon'})^{1/\alpha}}^{2(\frac{C}{\epsilon'})^{1/\alpha}}\exp\left(-\phi^a_{\beta_{0,j}}(\epsilon'/4q)\right)g(a)da\geq C_1\exp\left(-C_2(C/\epsilon')^{d/\alpha}\left(\log_{}\frac{1}{\epsilon'}\right)^{1+d}\right)(C/\epsilon')^{s/\alpha+1/\alpha},$$
for some constants $C_1$ and $C_2$. Choosing $\epsilon'<\min\{\epsilon_0,\epsilon\}$, we have $\prod_{j=1}^q\Pi(\|\widetilde{\beta}_{j}-\beta_{0,j}\|_{\infty}<\frac{\epsilon}{2q})>0$

We have $\sqrt{q}\|\bbeta_0-\bbeta\|_{\infty}\geq \sup_{\bv}\|\bbeta_0(\bv)-\bbeta(\bv)\|_{2}\geq \|\bbeta_0-\bbeta\|_{\infty}$. Hence, the two distances are equivalent and thus, the last assertion of the theorem also follows trivially.
\end{proof}

\begin{proof}[Proof of Theorem 2]
To derive the posterior contraction rate $\epsilon_n$ asserted in Theorem~2, we apply the general theory of posterior contraction for independent, not-identically distributed observations as in Section~8.3 of \cite{ghosal2017fundamentals}. This requires verifying certain that prior concentration in a ball of size $\epsilon_n^2$ in the sense of Kullback-Leibler divergence, existence of tests with error probabilities at most $e^{-c_1 n\epsilon_n^2}$ for testing the true distribution against a ball of size of the order $\epsilon_n$ separated by more than $\epsilon_n$ from the truth in terms of the metric $d$, a ``sieve'' in the parameter space which contains at least $1-e^{-c_2 n\epsilon_n^2}$ prior probability and which can be covered by at most $e^{c_3 n\epsilon_n^2}$ for some constants $c_1,c_3>0$ and $c_2>c_1+2$. As the fixed dimensional parameters $\sigma$ contribute $e^{-c_1\log n}$.

Recall that for $q, q^* \in \mathcal{P}$, let the Kullback-Leibler divergences be given by 
\begin{gather*}
K(q^*, q) = \int q^*\log_{}{\frac{q^*}{q}} \qquad  V(q^*, q) = \int q^*\log_{}^2{\frac{q^*}{q}}.
\end{gather*}

Let $P_{0i}$ denote the true distribution with density $p_{0i}$ for $i$th individual. For each individual $i$, we consider the observation $y_{i} \sim \Normal(\mu_i, \sigma^2)$, where $\mu_i = \frac{1}{p_n}\sum_{j=1}^{p_n}\bD_i(\bv_j)^T\bbeta(\bv_j)$. 
Let $q_{0,i}=\Normal(\mu_{0,i}, \sigma^{2})$ and $q_i=\Normal(\mu_i, \sigma^2)$.
{ \begin{align*}
&K(q_i^*,q_i) =\frac{{1}}{2}\log\bigg(\frac{\sigma}{\sigma^*}\bigg) - \frac{1}{2}\bigg[1-\frac{(\mu_i^*-\mu_i)'(\mu_i^*-\mu_i)}{\sigma^2} - \frac{\sigma^{*2}}{\sigma^2}\bigg],\\
&V(q_i^*,q_i) = \frac{{1}}{2}\bigg(\frac{\sigma^{*2}}{\sigma^2}-1\bigg)^2+\frac{(\mu_i^*-\mu_i)'(\mu_i^*-\mu_i)}{\sigma^4}\sigma^{*2}.
\end{align*}
Let $\mu = (\mu_1,\ldots,\mu_n)$ with all the means stacked together. Let $\mu_0$ be the true value and $\sigma_0$ be that of $\sigma$. Applying Lemma L.4 of \cite{ghosal2017fundamentals}, we have $K(q_0^n,q^n)\lesssim \frac12 \|\bmu_0-\bmu\|_2^2+n|\sigma_0^2-\sigma^2|$ and $V(q_0^n,q^n)\lesssim \|\bmu_0-\bmu\|_2^2+n|\sigma_0^2-\sigma^2|$, where $q^n=\prod_{i=1}^nq_{i},q_0^n=\prod_{i=1}^nq_{0,i},\bmu_0=\{\mu_{0,1},\ldots,\mu_{0,n}\}$ and $\bmu=\{\mu_{1},\ldots,\mu_{n}\}$.




Since, $\max\{K(q_0^n,q^n),V(q_0^n,q^n)\lesssim \|\bmu_0-\bmu\|_2^2\}$, we have 
\begin{align*}
    &-\log_{}\Pi(K(q_0^n,q^n)\leq n\epsilon_n^2,V(q_0^n,q^n)\leq n\epsilon_n^2\mid \B_n) \\&\quad\leq -\log_{}\Pi(\|\bmu_0-\bmu\|_2^2+n|\sigma^2-(\sigma^*)^2| \lesssim n\epsilon_n^2\mid \B_n)\\&\quad\leq -\log_{}\Pi(d(\bbeta,\bbeta_0) \lesssim \epsilon_n^2\mid \B_n)-\log_{}\Pi(|\sigma^2-(\sigma^*)^2| \lesssim \epsilon_n^2\mid \B_n).
\end{align*}


The fixed dimensional parameters $\sigma$
contribute $e^{-c'\log n}$. Hence,
the contraction rate will be governed by $\bbeta$.
We thus need:
\begin{itemize}
    \item[(i)] (Prior mass condition) $-\log_{}\Pi(\frac{1}{p_n}\sum_{j=1}^{p_n}\|\bbeta_0(\bv_j)-\bbeta(\bv_j)\|^2_{2} \lesssim \epsilon_n^2)\leq c_1n\epsilon_n^2$
    \item[(ii)](Sieve) construct the sieve $\H_n$ such that $\Pi(\H_n^c)\leq \exp(-C_2n\epsilon_n^2)$ and
    \item[(iii)](Test construction) exponentially consistent tests $\chi_n$.
\end{itemize}

In regard to the test construction, let $\bmu^*$ be a value such that $\frac{1}{n}\|\bmu^*-\bmu_0\|_2^2>\epsilon^2_n$ and $\sigma^*$ be such that $|\sigma^*-\sigma_0|>\epsilon_n$. If $\sigma$ can be assumed to take values in a fixed bounded set, then by Lemma 8.27 of \cite{ghosal2017fundamentals}, it follows that there exists a test for $(\bmu_0,\sigma_0)$ against $\{(\bmu,\sigma): \|\bmu-\bmu^*\|_2\le \|\bmu_0-\bmu^*\|_2/2, |\sigma-\sigma^*|\le |\sigma_0-\sigma^*|/2\}$ with type I and type II error probabilities bounded by $\exp(-K[\|\bmu_0-\bmu^*\|_2^2+n(\sigma_0-\sigma^*)^2]/(\max\{\sigma^2_0,(\sigma^*)^2\}))$ with a universal constant $K$ and it leads to $e^{-c_1 n\epsilon_n^2}$ for a constant $c_1>0$ due to the boundedness condition on $\sigma$. The boundedness condition on $\sigma$ can be disposed of by directly working with the negative log-affinity divergence, but for simplicity of the arguments, we present the proof assuming the additional boundedness. }
Hence, we have an exponentially consistent test for $\frac{1}{n}\|\bmu-\bmu_0\|^2_2+(\sigma-\sigma_0)^2$ and $\frac{1}{n}\|\bmu-\bmu_0\|^2_2\leq \frac{1}{n}\|\bmu-\bmu_0\|^2_2+(\sigma-\sigma_0)^2$.

We then have,
\begin{align*}
   \frac{1}{p_n}\sum_{j=1}^{p_n}\|\bbeta_0(\bv_j)-\bbeta(\bv_j)\|^2_{2}\lesssim \frac{1}{n}\|\bmu-\bmu_0\|^2_2\lesssim \frac{1}{p_n}\sum_{j=1}^{p_n}\|\bbeta_0(\bv_j)-\bbeta(\bv_j)\|^2_{2},
\end{align*}
due to Assumption 2.
Hence, the two distances are equivalent.
We thus apply the general theory of posterior consistency to show $\eE_0\Pi(\kappa: d(\bbeta,\bbeta_0)>M_n \epsilon_n | S_n)\rightarrow 0$

We have $\frac{1}{p_n}\sum_{j=1}^{p_n}\|\bbeta_0(\bv_j)-\bbeta(\bv_j)\|^2_{2}\leq\lambda^2\frac{p_n'}{p_n}+\frac{1}{p_n}\sum_{j=1}^{p_n}\|\bbeta_0(\bv_j)-\widetilde{\bbeta}(\bv_j)\|^2_{2}\lesssim\lambda^2+\big\|\|\widetilde{\bbeta}-\bbeta_0\|^2_2\big\|^2_{\infty}$.

We have $\Pi(d(\bbeta,\bbeta_0)\lesssim \epsilon_n^2)>\Pi\left(\lambda^2<\epsilon^2_n/2\right)\Pi(\sum_{j=1}^q\|\beta_{0,j}-\widetilde{\beta}_j\|^2_{\infty}<\epsilon^2_n/2)$. 

We thus need $\lambda_n\leq\frac{1}{\sqrt{2}} \epsilon_n$.
Following the proof steps from Theorem 1, to ensure $-\log_{}\Pi(\|\beta_{0,j}-\widetilde{\beta}_j\|^2_{\infty}<\frac{\epsilon_n^2}{2q})\lesssim n\epsilon_n^2$, we need $\epsilon_n$ to be some large multiple of $n^{-\alpha/(2\alpha+d)}(\log_{} n)^{\nu}$, where $\nu=(1+d)/(2+d/\alpha)$.



We can further show that $\big\|\|\bbeta-\bbeta_0\|^2_2\big\|^2_{\infty}\leq\sum_{j=1}^q\|\beta_{0,j}-\beta_j\|^2_{\infty}$.
Hence, the sieve construction and prior concentration results are established with respect to $\|\cdot\|_{\infty}$ norm on each component function.


No sieve is needed for $\sigma$ as its support is bounded, and we choose the following sieve for the rest, $\H_n=\{\widetilde{\bbeta}:\widetilde{\beta}_{j}\in \left(M_n\sqrt{\frac{\xi}{\delta}}\mathbb{H}^{\xi}_1 + \epsilon\mathbb{B}_1\right)\cup\left(\cup_{a<\delta}(M_n\mathbb{H}^{a}_1) + \epsilon\mathbb{B}_1\right), |\widetilde{\bbeta}_j'|_{\infty}<M_{1,n}, A_{1,n}\le a \leq A_{2,n}\}$, where $\mathbb{B}_1$ is the unit ball of $C[0, 1]^d$, the space of all continuous functions $f:[0,1]^d\rightarrow\mathbb{R}$ and $\widetilde{\bbeta}'$ is the first derivative process.
We set $A_{1,n}=\exp(-C_1n\epsilon_n^2)$ and $A_{2,n}=n^C$ with $C<\min\{1/2,\frac{\alpha}{2d}\}$. Then $M_nA_{2,n}\leq M_{1,n} \leq (n\epsilon_n^2)^{\alpha/d}\epsilon_n$.
Then $M_n^2=Q_1n\epsilon_n^2(\log(\xi_n/\epsilon_n))^{1+d}$ for a constant $Q_1>0$ where $\xi_n^d=Q_2n\epsilon_n^2$ for some constant $Q_2>0$ for an $\epsilon_n$ such that $n\epsilon_n^2$ diverges to infinity. The role of $\xi_n$ is discussed in the later part of the proof. The above choices allow us to bound the probability of sieve-complement with exponentially decaying probability and have it covered by at most $e^{c_3n\epsilon_n^2}$ balls of size $\epsilon_n^2$.

Our scaled covariance kernel belongs to the class considered in \cite{ghosal2006posterior}.
Hence, borrowing the results from \cite{van2009adaptive} and Theorem 5 of \cite{ghosal2006posterior}, we have, $\Pi(\widetilde{\bbeta}\notin\H_n)\leq \Pi(a>\xi)+\int_0^{\xi}\Pi(\widetilde{\bbeta}\notin\H_n\mid a)g(a)\pd a + Ke^{-d_1M^2_{1,n}/\sigma_1^2}$, with a constant $d_1$ and $\sigma^2_1=\sup_{\bv}$Var$(\tilde{\beta}_j'(\bv))\leq D_1 a^2\le D_1A^2_{2,n}$, where $D_1=\sup_{\bv}$Var$(\xi(\bv))<\infty$ \citep{ghosal2006posterior}. Here $\xi$ follows a GP with a squared exponential covariance kernel and unit marginal variance. Then $Ke^{-d_1M^2_{1,n}/\sigma_1^2}\leq Ke^{-d_1'M^2_{n}}$ setting $M_{1,n}=M_nA_{2,n}$ and $d_1'=d_1/D_1$. 

For each $j$, we have $\Pi(\widetilde{\beta}_{j}\notin\H_n\mid a)\leq 1 - \Phi(\Phi^{-1}(e^{-\phi_0^{\xi}(\epsilon)}) + M_n)$. If 
$M_n^2\gtrsim \xi^d(\log_{}(\xi/\epsilon))^{1+d}$, then $\Pi(\widetilde{\beta}_{j}\notin\H_n\mid a)\leq e^{-d_2M_n^2}$ where $d_2=\min\{d'_1,1/8\}$. 

Due to the above choices on $A_{1,n}$ and $A_{2,n}$, the contribution of $a$ in sieve complement probability decays exponentially.  


Then, by varying above $\xi$ with $n$, we obtain $\Pi(\widetilde{\beta}_{j}\notin\H_n)\leq \Pi(a>\xi_n) + 2e^{-d_2M_n^2}$.

Hence, a sufficient condition for $\Pi(\widetilde{\beta}_{j}\notin\H_n)\leq e^{-n\epsilon_n^2}$ is that 1) $M_n^2\gtrsim n\epsilon_n^2$ due to the second part and for the first part we need 2) $\xi^d\gtrsim n\epsilon_n^2$ and 3) $\xi^d-\log(\xi^{s-d+1})\gtrsim n\epsilon_n^2$ on applying Lemma 4.9 of \cite{van2009adaptive}.
For large $n$, the above mentioned choice of $(\xi_n,M_n)$ satifies 1), 2), 3) and $M_n^2\gtrsim \xi^d(\log_{}(\xi/\epsilon))^{1+d}$.

Following the proof steps of the results from \cite{van2009adaptive} and Lemma 2 of \cite{ghosal2006posterior}, we can obtain the entropy bounds as $\log_{} N(3\epsilon,\H_n,\|\cdot\|_{\infty})\lesssim q\left\{\xi^d\left(\log_{}\left(\frac{M_n^{3/2}\sqrt{2\tau \xi}d^{1/4}}{\epsilon^{3/2}}\right)\right)^{1+d}+\log_{}\frac{2M_n}{\epsilon}+\left(\frac{M_{1,n}}{\epsilon}\right)^{d/\alpha}\right\}$, where $\tau^2 =\int\|\iota\|^2d\gamma(\iota)$ as long as $M_n^{3/2}\sqrt{2\tau \xi}d^{1/4}>2\epsilon^{3/2}, \xi>a_0$ for some $a_0>0$ and $M_n\geq\epsilon$.
These conditions are also satisfied by the above solutions of $M_n$ and $\xi_n$.

Thus, under the conditions 1), 2) and 3) specified above, we have $\log_{} N(3\bar{\epsilon}_n,\H_n,\|\cdot\|_{\infty})\lesssim n\bar{\epsilon}_n^2$, where $\bar{\epsilon}_n$ is a large multiple of $\epsilon_n(\log_{} n)^{(1+d)/2}$. Then the final rate will be $\max\{\epsilon_n,\bar{\epsilon}_n\}=\bar{\epsilon}_n=n^{-\alpha/(2\alpha+d)}(\log_{} n)^{\nu'}$, where $\nu'=(4\alpha+d)/(4\alpha+2d)$

{\underline{Connecting to the other distance metric:}} We have $\frac{1}{p_n}\sum_{j=1}^{p_n}\|\bbeta_0(\bv_j)-\bbeta(\bv_j)\|^2_{2}\leq\epsilon_n^2$ implies $\frac{1}{p_n}\sum_{j=1}^{p_n}\bigm|\|\bbeta_0(\bv_j)\|_2-\|\bbeta(\bv_j)\|_2\bigm|\leq\epsilon_n$.

 For $\bbeta,\bbeta_1\in\H_n$, we have $\bigm|d_{n}(\bbeta,\bbeta_1)- \frac{1}{p_n}\sum_{j=1}^{p_n}\|\bbeta_1(\bv_j)-\bbeta(\bv_j)\|_2 \bigm| \leq C\frac{M_{1,n}+M_{1,n}}{p_n^{1/d}}$ based on Lemma 2 of the main paper. For the above choice of $M_{1,n}$, we have $\frac{M_{1,n}+M_{1,n}}{p_n^{1/d}}\lesssim \epsilon_n/2$ as $p_n^{1/d}\geq Q_4 n$ for some constant $Q_4>0$ and $M_{1,n} \lesssim n\epsilon_n$ by construction.
Constants can be adjusted so that, $C\frac{M_{1,n}+M_{1,n}}{p_n^{1/d}}\leq \epsilon_n/2$ and thus a sufficient condition for, $d_{n}(\bbeta,\bbeta_1)\leq\epsilon_n$ is to have $\frac{1}{p_n}\sum_{j=1}^{p_n}\|\bbeta_1(\bv_j)-\bbeta(\bv_j)\|_2\leq \epsilon_n/2$ when $\bbeta,\bbeta_1\in\H_n$. 
Thus, the contraction rate derived above holds for, $d_{n}(\bbeta,\bbeta_0)$ too.
Similarly, it also holds for $\frac{1}{p_n}\sum_{j=1}^{p_n}\bigm|\|\bbeta_0(\bv_j)\|_2-\|\bbeta(\bv_j)\|_2\bigm|$ and  $\int\bigm|\|\bbeta_0(\bv)\|_2-\|\bbeta(\bv)\|_2\bigm|\pd\bv$ again applying Lemma 2 of the main paper.


\end{proof}

\begin{proof}[Proof of Theorem 3]

Let $U_{n,m}=\{j:\|\bbeta(\bv_j)\|>1/m,\|\bbeta_0(\bv_j)\|=0\}$ for $m\in\mathbb{N}$.
Then $\frac{1}{|U_{n,m}|}\sum_{j\in U_{n,m}}\|\bbeta(\bv_j)\|_2>1/m$.

Now, we can write $\frac{1}{p_n}\sum_{j\in U_{n,m}}\|\bbeta(\bv_j)\|_2 = \frac{|R_{0,n}|}{p_n}\frac{|U_{n,m}|}{|R_{0,n}|}\frac{1}{|U_{n,m}|}\sum_{j\in U_{n,m}}\|\bbeta(\bv_j)\|_2>\frac{|R_{0,n}|}{p_n}\frac{|U_{n,m}|}{|R_{0,n}|}\frac{1}{m}>(p_0-\epsilon_1)\epsilon/m$ if $\frac{|U_{n,m}|}{|R_{0,n}|}>\epsilon$ and there exists $N>0$ such that for all $n>N$, we have $\frac{|R_{0,n}|}{p_n}>p_0-\epsilon_1$ for some small $\epsilon_1>0$.
By Theorem 2, we have $\Pi(\frac{1}{p_n}\sum_{j=1}^{p_n}\bigm|\|\bbeta(\bv_j)\|_2-\|\bbeta_0(\bv_j)\|_2\bigm|>\epsilon(p_0-\epsilon_1)/m|S_n)\rightarrow 0$. 
This implies $\Pi(\frac{1}{p_n}\sum_{j\in U_{n,m}}\|\bbeta(\bv_j)\|_2>\epsilon(p_0-\epsilon_1)/m|S_n)\rightarrow 0$.
Thus, we have $\Pi(|U_{n,m}|/|R_{0,n}|<\epsilon|S_n)\rightarrow 1$.

We have $U_n=\cup_{m=1}^{\infty}E_{n,m}$ and $U_{n,m}\subset E_{n,m+1}$. Thus $\Pi(|U_{n,m}|/p_n<\epsilon|S_n)>\Pi(|E_{n,m+1}|/p_n<\epsilon|S_n)$.
By the monotone continuity of probability measure, we have $\Pi(|U_{n}|/p_n<\epsilon|S_n)=\lim_{m\rightarrow\infty}\Pi(|U_{n,m}|/p_n<\epsilon|S_n)\rightarrow 1$ as $n\rightarrow \infty$.

For the second assertion, note that, $\sum_{j\in R_{0,n}^c}\bigm|\|\bbeta(\bv_j)\|_2-\|\bbeta_0(\bv_j)\|_2\bigm| = \sum_{j\in R_{1,n}\cap R_{0,n}^c}\bigm|\|\bbeta_0(\bv_j)\|_2\bigm|+\sum_{j\in R_{0,n}^c\cap R_{1,n}^c}\bigm|\|\bbeta(\bv_j)\|_2-\|\bbeta_0(\bv_j)\|_2\bigm|$, where $R_{1,n}=\{j:\|\bbeta(\bv_j)\|_2=0\}$. Following the steps above, and $\frac{|R_{0,n}^c|}{p_n}\rightarrow 1-p_0$, we must have $R_{1,n}\cap R_{0,n}^c=\emptyset$ to ensure that Theorem 2 holds.

We have $\bbeta(\bv_j)^T\bbeta_0(\bv_j)>-\|\bbeta(\bv_j)-\bbeta_0(\bv_j)\|_2^2$.
Thus, $\|\bbeta(\bv_j)-\bbeta_0(\bv_j)\|_2^2<1/m$ implies that $\bbeta(\bv_j)^T\bbeta_0(\bv_j)>-1/m$.
Let $W_{n,m}'=\{j:\bbeta(\bv_j)^T\bbeta_0(\bv_j)>-1/m, \|\bbeta_0(\bv)\|_2>0\}$ and
$W_{n,m,c}'=\{j:\|\bbeta(\bv_j)-\bbeta_0(\bv_j)\|_2^2<1/m, \|\bbeta_0(\bv)\|_2>0\}$.
Then $\frac{|W_{n,m}'|}{p_n-R_{0,n}}\geq \frac{|W_{n,m,c}'|}{p_n-R_{0,n}}$.
From Theorem 2 and above results, we have already established that $\Pi(1-\frac{|W_{n,m,c}'|}{p_n-R_{0,n}}<\epsilon\mid S_n)\rightarrow 1$ as $n\rightarrow\infty$.
Thus $\Pi(1-\frac{|W_{n}'|}{p_n-R_{0,n}}<\epsilon\mid\S_n)=\lim_{m\rightarrow\infty}\Pi(1-\frac{|W_{n,m}'|}{p_n-R_{0,n}}<\epsilon\mid\S_n)\rightarrow 1$ as $n\rightarrow\infty$.
\end{proof}

\begin{proof}[Proof of Theorem 4]
As defined earlier, $R_0$ denotes the set of voxels with $\bbeta_0(\bv)=0$.
Define $\F_m(R_0)=\{\bbeta:\int_{R_0}\|\bbeta(\bv)\|_2\pd\bv<1/m\}$.
We have $\int_{\B}\bigm|\|\bbeta(\bv)\|_2-\|\bbeta_0(\bv)\|_2\bigm|\pd\bv>\int_{R_0}\|\bbeta(\bv)\|_2\pd\bv$.
Under Theorem 2, we immediately have $\Pi(\F_m(R_0)\mid S_n)\rightarrow 1$ as $n\rightarrow\infty$ since $\Pi(\int_{\B}\bigm|\|\bbeta(\bv)\|_2-\|\bbeta_0(\bv)\|_2\bigm|\pd\bv<1/m\mid S_n)\rightarrow 1$ as $n\rightarrow 1$.
We have,
$\{\int_{R_0}\|\bbeta(\bv)\|_2=0\}=\cap_{m=1}^{\infty} \F_m(R_0)$.
Again, applying monotone continuity, $\Pi(\{\int_{R_0}\|\bbeta(\bv)\|_2=0\}\mid S_n)=\lim_{m\rightarrow \infty}\Pi(\F_m(R_0)\mid S_n)=1$ as $n\rightarrow\infty$.
Thus $\Pi(\A(U(\bbeta))>\epsilon\mid S_n)\leq 1-\Pi(\{\int_{R_0}\|\bbeta(\bv)\|_2=0\}\mid S_n) \rightarrow 0$.

For the second assertion, note that $\int_{R_0^c}\bigm|\|\bbeta(\bv)\|_2-\|\bbeta_0(\bv)\|_2\bigm|\pd\bv = \int_{R_{10}\cap R_0^c}\bigm|\|\bbeta_0(\bv)\|_2\bigm|\pd\bv+\int_{R_0^c\cap R_{10}^c}\bigm|\|\bbeta(\bv)\|_2-\|\bbeta_0(\bv)\|_2\bigm|\pd\bv$, where $R_{10}=\{\bv:\|\bbeta(\bv)\|_2=0\}$.
By construction $\int_{R_{10}}\bigm|\|\bbeta_0(\bv)\|_2\bigm|\pd\bv>0$. Let this be $\epsilon_1$. Hence, $\Pi(\int_{R_0^c}\bigm|\|\bbeta(\bv)\|_2-\|\bbeta_0(\bv)\|_2\bigm|\pd\bv > \epsilon_1\mid S_n)$ will not converge to zero, contradicting Theorem 2. 
The third assertion holds immediately, based on the above observation. 

We have $\bbeta(\bv)^T\bbeta_0(\bv)>-\|\bbeta(\bv)-\bbeta_0(\bv)\|_2^2$.
Thus $\Pi(\A(W'(\bbeta))>1-\epsilon\mid S_n)\geq \Pi(\int_{R_0}\|\bbeta(\bv)-\bbeta_0(\bv)\|_2^2\pd\bv<\epsilon\mid S_n)\rightarrow 1$ as $n\rightarrow 1$ applying Theorem 2.
\end{proof}

\begin{proof}[Proof of Theorem 5]
Applying Theorem 2 for all $g$, we have,
$\Pi(\int \|\bbeta_g(\bv)-\bbeta_{0,g}(\bv)\|_2\pd\bv > \epsilon\mid S_n)\rightarrow 0$ as $n_g\rightarrow\infty$.
We have $\int \|\bbeta_g(\bv)-\bbeta_{0,g}(\bv)\|_2\pd\bv>\int \bigm|\|\bbeta_g(\bv)\|_2-\|\bbeta_{0,g}(\bv)\|_2\bigm|\pd\bv$.
Let $\int \|\bbeta_g(\bv)-\bbeta_{0,g}(\bv)\|_2\pd\bv<\epsilon$ and $\int \|\bbeta_{g'}(\bv)-\bbeta_{0,g'}(\bv)\|_2\pd\bv<\epsilon$.
Thus, $\int \|\bbeta_g(\bv)\|_2\pd\bv<M'+\epsilon$ and $\int \|\bbeta_g'(\bv)\|_2\pd\bv<M'+\epsilon$.

We have,
\begin{align*}
    &\int|\bbeta_g(\bv)^T\bbeta_g'(\bv)-\bbeta_{0,g}(\bv)^T\bbeta_{0,g'}(\bv)|\pd\bv\\
    &\quad\leq \int|\bbeta_g(\bv)^T(\bbeta_g'(\bv)-\bbeta_{0,g'}(\bv))|\pd\bv + \int|\bbeta_{0,g'}(\bv)^T(\bbeta_g(\bv)-\bbeta_{0,g}(\bv))|\pd\bv\\
    &\quad\leq  \int|\bbeta_g(\bv)\|_2\|\bbeta_g'(\bv)-\bbeta_{0,g'}(\bv)\|_2\pd\bv + \int\|\bbeta_{0,g'}(\bv)\|_2\|\bbeta_g(\bv)-\bbeta_{0,g}(\bv)\|_2\pd\bv\\
    &\quad\leq (2M'+\epsilon)\epsilon \lesssim \epsilon.
\end{align*}
This completes the proof of posterior consistency.
\end{proof}

\begin{proof}[Proof of Theorem 6]

Define $U_{m}(\bbeta_g,\bbeta_{g'})=\{\bv: \bbeta_g(\bv)^T\bbeta_{g'}(\bv)\leq -1/m, \bbeta_{0,g}(\bv)^T\bbeta_{0,g'}(\bv)>0\}$.

$\Pi(\A(U_{m}(\bbeta_g,\bbeta_{g'})) < \epsilon \mid S_n)\leq \Pi(\int \|\bbeta_g-\bbeta_{g'}\|_2^2\pd\bv > \epsilon/m \mid S_n)$


$\{\bbeta_g,\bbeta_{g'}: \int |\bbeta_g(\bv)^T\bbeta_{g'}(\bv)-\bbeta_{0,g}(\bv)^T\bbeta_{0,g'}(\bv)|\pd\bv>0\}$

$\int |\bbeta_g(\bv)^T\bbeta_{g'}(\bv)-\bbeta_{0,g}(\bv)^T\bbeta_{0,g'}(\bv)|\pd\bv\geq\int_{U_{m}} |\bbeta_g(\bv)^T\bbeta_{g'}(\bv)-\bbeta_{0,g}(\bv)^T\bbeta_{0,g'}(\bv)|\pd\bv>\A(U_{m})/m$, where $\A(U_{m})$ stands for the area of $U_{m}$.
Then, $\Pi(\int_{U_{m}} |\bbeta_g(\bv)^T\bbeta_g'(\bv)-\bbeta_{0,g}(\bv)^T\bbeta_{0,g'}(\bv)|\pd\bv>\A(U_{m})/m\mid S_n)\leq \Pi(\int |\bbeta_g(\bv)^T\bbeta_{g'}(\bv)-\bbeta_{0,g}(\bv)^T\bbeta_{0,g'}(\bv)|\pd\bv>\A(U_{m})/m\mid S_n)\rightarrow 0$.
Hence, we must have $\A(U_{m})=0$ for all $m$.
\end{proof}

 \section{Simulation for the nonparametric model}
Here, we present a small simulation study for the nonparametric, where the data is generated following the linear model from Simulation Setting 2. Then we fit the following model,
\begin{align}
    y_{i}=&p^{-1/2}\sum_{j=1}^pf_{j,g}(\bD_{i}(\bv_j)) + e_{i},\nonumber\\
    f_{j,g}(\bD_{i}(\bv_j))=&\sum_{k=1}^K\theta_{g,k}(\bv_j)\exp\left(-\frac{\|\bD_{i}(\bv_j)-\bq_{k}\|_2^2}{2h_g^2}\right)I\{\|\bD_{i}(\bv_j)-\bq_k\|_2<3h_g\}, \textrm{ for } i\in g,\nonumber\\
    e_{i}\sim&\Normal(0, \sigma^2), \label{eq:nonparamodel}
\end{align}
The results are calculated for $K=20$ with a sample size of 50 in each group and illustrated in Figure~\ref{fig::esticoefnon}.
We find that the non-zero locations in true $\bbeta_{0,g}$ are almost accurately identified by the estimated $\|\btheta_{g}\|_2$'s of the nonparametric model.
 \begin{figure}[htbp]
\centering
\subfigure{\includegraphics[width = 0.30\textwidth]{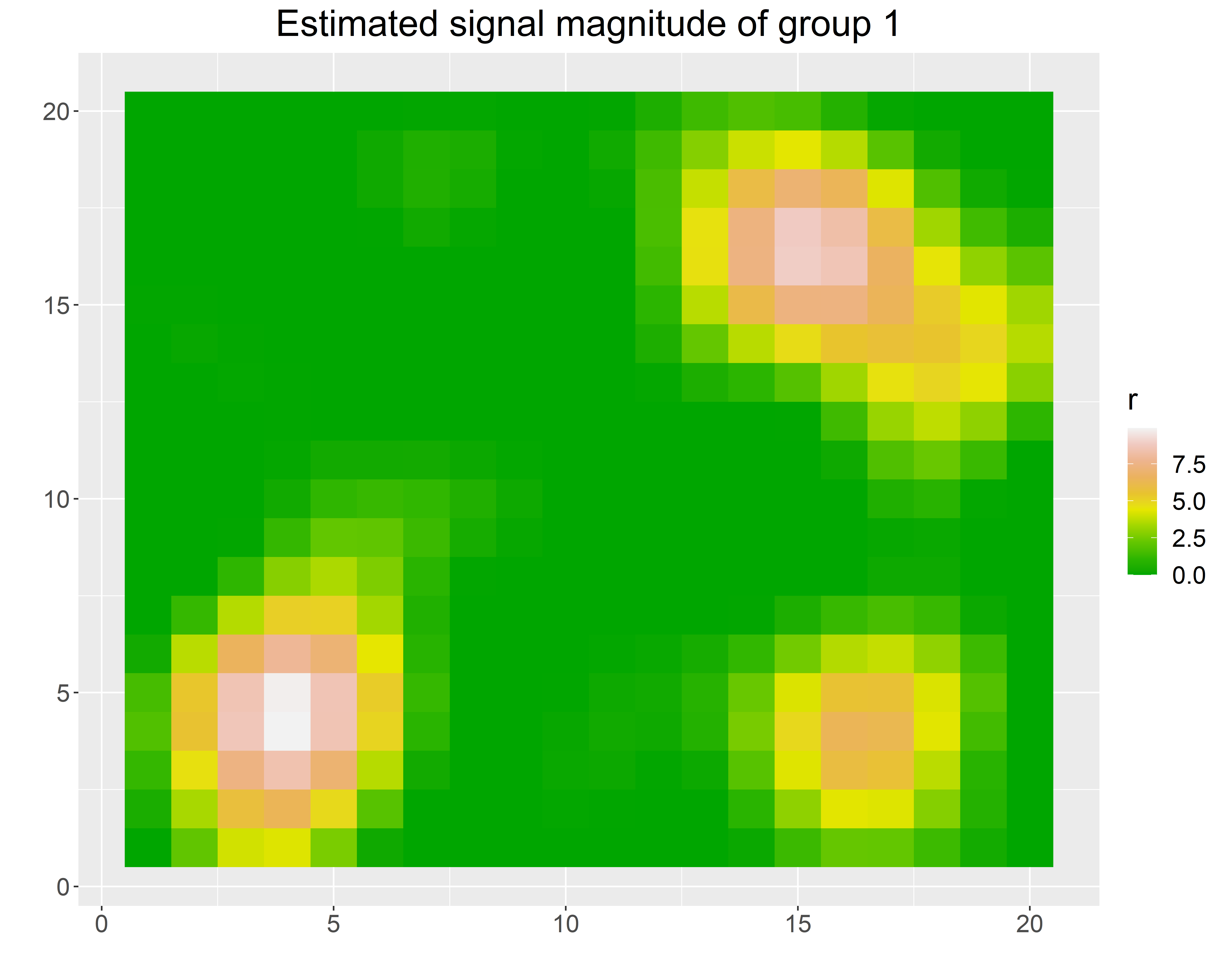}}
\subfigure{\includegraphics[width = 0.30\textwidth]{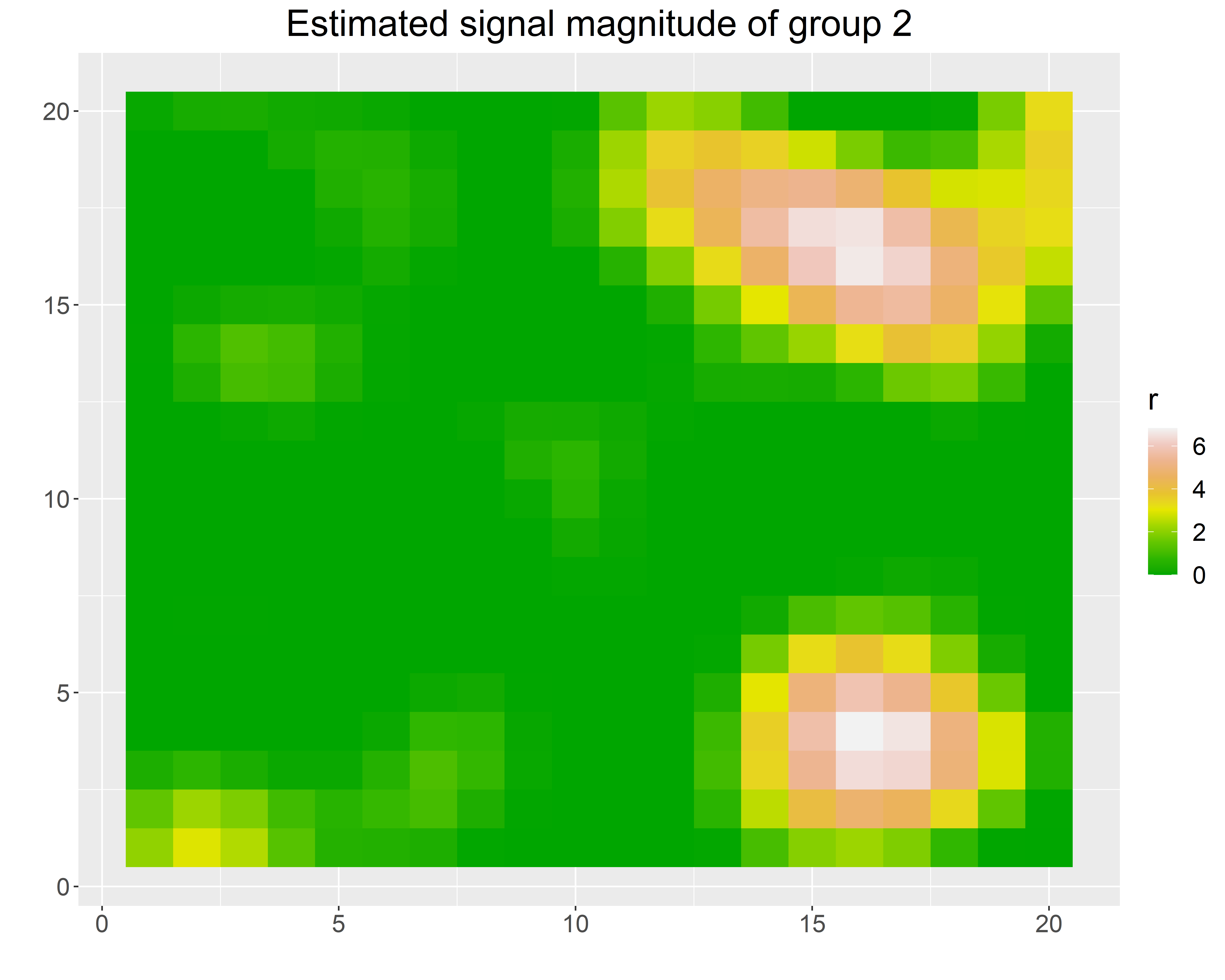}}
\subfigure{\includegraphics[width = 0.30\textwidth]{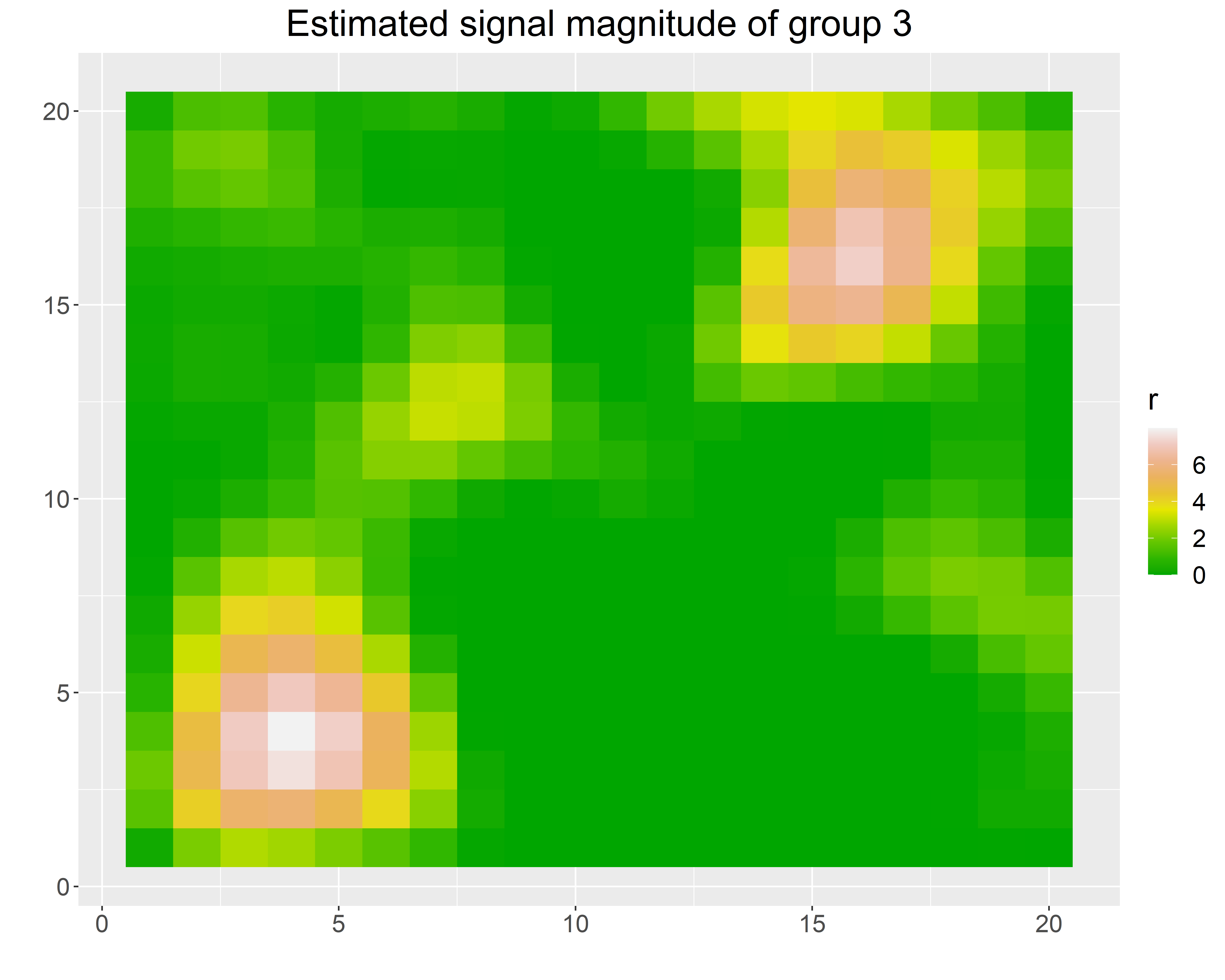}}
\caption{Estimated $\|\btheta_g\|_2$'s when the error variance is $1$ for the sample size of 50 in each group. }
\label{fig::esticoefnon}
\end{figure}

\section{Computation time}
Figure~\ref{fig:timecompare} illustrates the computation times as a function of the number of locations $(p)$ and the sample size $(n)$ in two sub figures.
As a follow-up analysis, we further estimate $a$, and $b$ assuming that the computation times are $O(p^a), $ and $O(n^b)$. Thus, we fit separate regression models in the log-scale as $\log(t)$ on $\log(p)$ on $\log(n)$ and $\log(t)$ on $\log(p)$. Our estimates suggest that the orders for $p$ and $n$ are close to quadratic and linear, respectively, based on the two regression coefficients 1.6 and 0.8 for the two cases. Furthermore, we fitted a few additional models in the original scale 1) $t$ on $p$, 2) $t$ on $p$ and $p^2$, 3) $t$ on $p,p^2$ and $p^3$, 4) $t$ on $n$ and 5) $t$ on $n$ and $n^2$. On comparing the Akaike information criterion (AIC)s, for $p$ the quadratic model and $n$ the linear model turn out to be best models. These results agree with the log-scale-based results.

\begin{figure}[htbp]
\centering
\subfigure[]{\includegraphics[width = 0.4\textwidth]{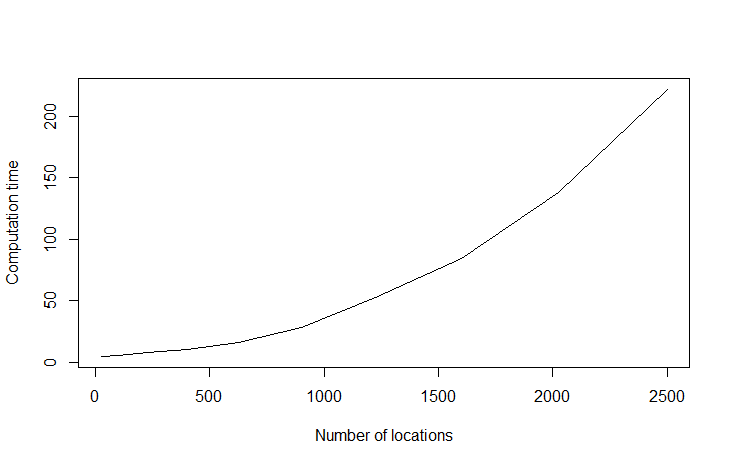}}
\subfigure[]{\includegraphics[width = 0.4\textwidth]{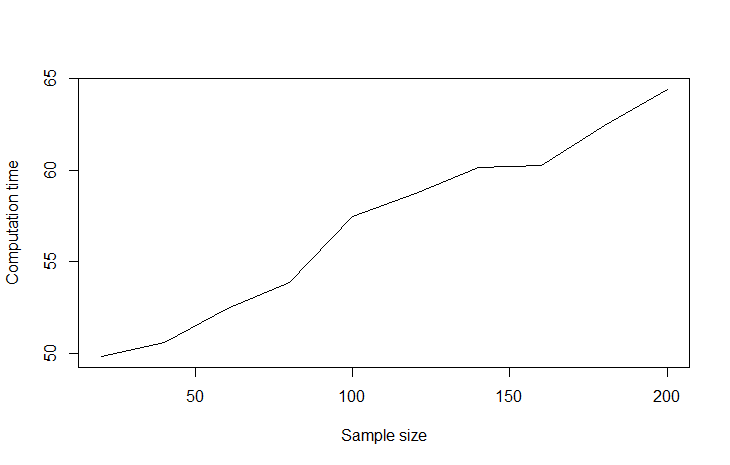}}
\caption{Logarithm of computation time comparison with increasing a) number of locations $p$, and b) number of samples $n$. The unit of time here is in minutes for 50 MCMC iterations.}
\label{fig:timecompare}
\end{figure}

\section{Posterior computation}


For simplicity, we represent $\widetilde{\bbeta}$ and $\widetilde{\balpha}_g$ as $q\times p$ matrices. Then, the priors can be represented as matrix-normals.
Specifically, $\widetilde{\bbeta}\sim$Matrix-Normal$(0,\bSigma,\bkappa)$
and $\widetilde{\balpha}_g\sim$Matrix-Normal$(0,\bSigma_g,\bkappa_g)$.

The complete proportional log-posterior for our model excluding the hyper-priors is,
\begin{align*}
    &-\frac{n}{2}\log(\sigma^2)-\sum_g\sum_{i\in g}  \{y_i-b_{0,g}-p^{-1/2}\sum_{j=1}^{p}\bD_i(\bv_j)^T\bbeta_g(\bv_j)\}^2/(2\sigma^2)\\&- \frac{1}{2} \, \mathrm{tr}\left[ {\bkappa}^{-1} \widetilde{\bbeta}^{T} \mathbf{\bSigma}^{-1} \widetilde{\bbeta} \right] - \frac{p}{2}\log(|{\bkappa}|) -\frac{q}{2}\log(|\bSigma|)\\&- \frac{1}{2} \, \mathrm{tr}\left[ {\bkappa}_g^{-1} \widetilde{\balpha}_g^{T} \mathbf{\bSigma}_g^{-1} \widetilde{\balpha}_g \right] - \frac{p}{2}\log(|{\bkappa}_g|) -\frac{q}{2}\log(|\bSigma_g|)\\&
    -\{(c_1-1)\log(\sigma^{-2})-c_2\sigma^{-2}\}.
\end{align*}
Our sampler iterates between the following steps. 

\begin{enumerate}
    \item[(1)] {\bf Updating $\sigma$:}  The full conditional distribution for $\sigma^{-2}$ is given by $\sigma^{-2}\sim \Ga\{c_1+n/2,c_2+\sum_g\sum_{i\in g} \{y_i-b_{0,g}-p^{-1/2}\sum_{j=1}^{p}\bD_i(\bv_j)^T\bbeta_g(\bv_j)\}^2/2$.
    \item [(2)] {\bf Updating intercepts $b_{0,g}$:} The full conditional distribution for $b_{0,g}$ is $\Normal(\mu_g,s^2_g)$, where $\mu_g=s^2_g\sum_{i\in g}\{y_i-p^{-1/2}\sum_{j=1}^{p}\bD_i(\bv_j)^T\bbeta_g(\bv_j)\}/\sigma^2$ and $s_g^2=(n_g/\sigma^2+1/\sigma_b^2)^{-1}$, where $n_g$ is the number of subjects in group $g$. 
    \item [(3)] {\bf Updating shared $\widetilde{\bbeta}$:} We update this parameter using HMC. Thus, we provide the derivative of the negative log-likelihood with respect to $\widetilde{\bbeta}(\bv_j)$,
    $$
    -\sum_g\sum_{i\in g}  \{y_i-b_{0,g}-p^{-1/2}\sum_{j=1}^{p}\bD_i(\bv_j)^T\bbeta_g(\bv_j)\}\bD_i(\bv_j)/(\sigma^2)\frac{\partial \bbeta_g(\bv_j)}{\partial \widetilde{\bbeta}(\bv_j)} - \left[ \mathbf{\bSigma}^{-1}\widetilde{\bbeta}{\bkappa}^{-1} \right]_{.,j}, 
    $$
    where $\frac{\partial \bbeta_g(\bv_j)}{\partial \widetilde{\bbeta}(\bv_j)}=h'_{\lambda}(\widetilde{\bbeta}_g(\bv_j))$ where, $h'_{\lambda}(\bx)=\left(1-\frac{\lambda}{\|\bx\|_2}\right)_{+}+\left(1-\frac{\lambda}{\|\bx\|_2}\right)_{+}\frac{\lambda}{\|\bx\|^3_2}\bx\odot\bx$ with $\odot$ standing for element-wise product.
    \item [(4)] {\bf Updating $\bSigma$:} Posterior of $\bSigma^{-1}$ is Wishart$(\nu+p,\widetilde{\bbeta}\bkappa^{-1}\widetilde{\bbeta}^T+\bI_q)$.
    \item [(5)] {\bf Updating parameters in $a$ and $a_{g}$:} These are updated following the similar Metropolis-Hastings (MH) strategy as in \cite{roy2021spatial}.
    \item[(6)] {\bf Updating $\widetilde{\balpha}_g$:} We update this parameter using HMC. Thus, we provide the derivative of the negative log-likelihood with respect to $\widetilde{\balpha}(\bv_j)$,
    $$
    -\sum_{i\in g}  \{y_i-b_{0,g}-p^{-1/2}\sum_{j=1}^{p}\bD_i(\bv_j)^T\bbeta_g(\bv_j)\}\bD_i(\bv_j)/(\sigma^2)\frac{\partial \bbeta_g(\bv_j)}{\partial \widetilde{\balpha}_g(\bv_j)} + \left[ \mathbf{\bSigma_g}^{-1}\widetilde{\balpha}_g{\bkappa_g}^{-1} \right]_{.,j}, 
    $$
    where $\frac{\partial \bbeta_g(\bv_j)}{\partial \widetilde{\balpha}_g(\bv_j)}=h'_{\lambda}(\widetilde{\bbeta}_g(\bv_j))h'_{\lambda_g}(\widetilde{\alpha}_g(\bv_j))$. The suffix ${.,j}$ stands for the $j$-$th$ column.

    \item[(6)] {\bf Updating $\bSigma_g$:} Posterior of $\bSigma_g^{-1}$ is Wishart$(\nu+p,\widetilde{\balpha}\bkappa_g^{-1}\widetilde{\balpha}_g^T+\bI_q)$.
    \item[(7)] {\bf Updating $\lambda$ and $\lambda_g$}: We use random walk MH for updating the thresholding parameters as in \cite{kang2018scalar}.
\end{enumerate}

\bibliographystyle{plainnat}
\bibliography{main}

\end{document}